\date{}
\newcommand{\bpi}{ {\mbox{\boldmath $\pi$}} }
\newsavebox{\ggbox}
\newcommand{\displaymathbox}[1]%
{\savebox{\ggbox}{$\displaystyle #1$}\begin{displaymath}\fbox{\usebox{\ggbox}}%
\end{displaymath}}
\newcommand{\eat}[1]{}
\newcommand{\equationbox}[2]%
{\savebox{\ggbox}{$\displaystyle #1$}\begin{equation}\fbox{\usebox{\ggbox}}%
\label{#2}\end{equation}}
\newcommand{\myapp}[1] {Appendix~\ref{#1}}
\newcommand {\beq} {\begin{equation}}
\newcommand {\eeq} {\end{equation}}
\newcommand {\bear} {\begin{eqnarray}}
\newcommand {\eear} {\end{eqnarray}}
\newcommand {\bears} {\begin{eqnarray*}}
\newcommand {\eears} {\end{eqnarray*}}
\newtheorem{assumption}{Assumption}[section]
\newtheorem{definition}{Definition}[section]
\newtheorem{proposition}{Proposition}[section]
\newtheorem{remark}{Remark}[section]
\newtheorem{corollary}{Corollary}[section]
\def\square{\vrule height6pt width7pt depth1pt}
\def\R{\mathop{\rm I\kern -0.20em R}\nolimits}
\def\b1{{\bf 1}}
\def\be{{\bf e}}
\def\cost{{G}}
\def\nn#1{\mathop{\left |\kern -0.10em \left | #1
        \right |\kern -0.10em \right |_\mu}\nolimits}
\def\utility{{U}}
\begin{document}

\title{Forever Young: Aging Control For Smartphones in Hybrid Networks}

\author{
{Eitan Altman}\\
{INRIA \& University of Avignon, France}
\and
{Rachid El-Azouzi} \\
{University of Avignon, France}
\and
 {Daniel S. Menasch\'{e} } \\
{UMass, USA; UFRJ, Brazil} 
%
\and
{Yuedong Xu}\\
{INRIA \& University of Avignon, France} 
} 

\maketitle

\begin{abstract}
The demand for Internet services that require frequent  updates through small messages, such as  microblogging,  has tremendously grown in the past few years. Although the use of such  applications by domestic users is usually free, their access from mobile devices   is subject to fees and consumes energy from limited batteries.  If a user activates his mobile device and  is in range of a service provider,  a content update is received at the expense of    monetary and energy costs. Thus, users face a tradeoff between such costs and their messages aging. The goal of this paper is to show how to cope with such a tradeoff, by devising \emph{aging control policies}.  An aging control policy consists of deciding, based on the current utility of the  last message received,   whether to activate the  mobile device, and if so, which technology to use (WiFi or 3G). We present a model that yields the optimal aging control policy.   Our model is based on a Markov Decision Process in which states correspond to message ages.  Using our model, we show  the existence of an optimal strategy in the class of  threshold strategies, wherein users activate their mobile devices if the age of their messages surpasses a given threshold and remain inactive otherwise.   We then consider strategic content providers (publishers) that offer \emph{bonus packages} to users, so as  to incent them to download updates  of  advertisement campaigns.  We provide simple algorithms  for publishers to determine optimal bonus levels, leveraging the fact that users adopt their   optimal aging control strategies.  The accuracy of our model is validated against traces from the UMass DieselNet bus network.  
   \end{abstract}

\section{Introduction}

The demand for Internet services that require frequent  updates through small messages has tremendously grown in the past few years.  While the popularity of traditional applications of that kind, such as weather forecasts, traffic reports  and  news, is unlike to decline,  novel applications, such as Twitter~\cite{twitter},  have arisen.       Twitter, alone, recorded a 1,500\% increase in the number of registered users since 2006, and currently counts with more than 100 million users worldwide.    For a second example, mobile messaging with Exchange ActiveSync~\cite{exchange} allows smartphone users to receive timely updates when new data items arrive in their mailboxes.   Today, Exchange ActiveSync is supported by more than 300 million mobile devices worldwide.  Henceforth, for the sake of concreteness we focus on microblogging applications, such as Twitter or similar news feeds.  

Users of microblogging applications join interest groups and aim at receiving small messages from editors.  As messages age, they get outdated and their utilities decrease. As a consequence, users must control when to receive updates.  A user willing to receive an update activates his mobile device, which then broadcasts periodic beacons to inform demands to  service providers.

Although the use of microblogging  applications by domestic users is usually free, their access from mobile devices  consumes energy from limited batteries  and is subject to fees.   We consider users that can access the Internet either through a WiFi or 3G network.  The 3G network provides broader coverage, but its usage requires a subscription to a cell phone data plan.   More generally, our results apply to any scenario in which users can access the network through multiple interfaces with different costs and ubiquitousness~\cite{wifler, banerjee}.  

 Let the \emph{age} of a message held by a user be the duration of the interval of time since   the message was downloaded by such user.  
 If a user activates his mobile device and  is in the range of a service provider (WiFi access point or 3G antenna),   an update is received and the age of the message held by the user is reset to one,  at the expense of  the previously mentioned monetary and energy costs.  
  Thus, users face a tradeoff between energy and monetary costs and their messages aging.  To cope with such a tradeoff, users decide, based on the age of the stored message,  whether to activate the  mobile device, and if so, which technology to use (WiFi or 3G).   
We refer to a  policy which determines  activation decisions  as a function of message ages as an \emph{aging control policy}.  
\emph{The first goal of this paper is to devise efficient aging control policies.}

Strategic content providers can incent users to download updates of advertisement campaigns or unpopular content by offering \emph{bonus packages}.  The goal of the bonus package, translated in  terms of our aging control problem, consists of minimizing the average age of content held by users, subject to a budget on the number of messages transmitted  per time slot, as dictated by the service provider capacity.    Although nowadays  bonus packages are set exclusively by  service providers~\cite{sfr},  we envision that  in the future content providers will  reach agreements with service providers.   Through such agreements, content providers, also  known as publishers,  will play an important  role in the settlement of  bonus packages~\cite{misra}.    \emph{The second goal of this paper is to solve the  publishers' bonus selection problem, when  users adopt an aging control policy.  }  

 We pose the two following questions,
\begin{enumerate}
\item what is the users optimal aging control policy?
\item leveraging the users optimal aging control policy, what is the publishers optimal bonus  strategy?
\end{enumerate}

We propose a  model that allows us to answer the  questions above.  Our model  accounts for energy costs,  prices and the utility of messages as a function of their age.  Using our model, we show that users can maximize their utilities by adopting a simple threshold policy.  The policy consists of activating the mobile device if the content age surpasses a given threshold and remaining  inactive otherwise.     We  derive properties of the optimal threshold, and a closed-form expression for the average reward obtained by users as a function of the selected strategies.   We then show the accuracy  of our approach using traces collected from the UMass DieselNet bus network.   Using traces, we also study  location-aware policies, according to which users can activate their devices based on their position on campus, and compare them against location-oblivious policies.

For the strategic publishers, we present two simple algorithms to solve the bonus determination problem posed above.  The first algorithm presumes complete information while the second consists of a learning algorithm for  publishers that have imperfect information about the system parameters, and is validated using trace-driven simulations.    Finally, we  show the convergence of the proposed learning algorithm, making use of  results on differential inclusions and stochastic approximations.

  In summary, we make the following contributions.

\textbf{Model formulation: } We introduce the aging control problem, and propose a model to solve it.  Using the model, we  derive  properties about the optimal aging control policy and closed-form expressions for the expected average reward. 

\textbf{DieselNet trace analysis: } We quantify how aging control policies impact users of the DieselNet bus network.  Using traces collected from DieselNet, we show the accuracy of our model estimates and analyze policies that are out of the scope of our model.

\textbf{Mechanism design: }  We provide two simple algorithms for publishers to incent users to download advertisement updates,  leveraging the fact that users adopt their   optimal aging control strategies.   We formally show the convergence of the  proposed learning algorithm, and numerically investigate its accuracy and convergence rate using trace-driven simulations. 

The remainder of this paper is organized as follows.   After further discussing  the need for aging control, in~\textsection\ref{model} we present  our model, and in~\textsection\ref{eval} we report  results obtained with traces from the UMass DieselNet bus network.  Our model analysis is shown in~\textsection\ref{optimal}, followed by the  solution of the publishers problem in~\textsection\ref{maker}.
We present  a discussion of our modeling assumptions in~\textsection\ref{limitations}, related work in ~\textsection\ref{related} and ~\textsection\ref{conclusion} concludes the paper.

\section{Why Aging Control?}
\label{sec:why}
The goal of an aging control policy is to provide high quality of service while   1) reducing energy consumption and 2) reducing 3G usage, by leveraging WiFi connectivity when available.     Whereas the reduction in energy consumption is of interest mainly to subscribers,  the reduction in 3G usage is of interest both to service providers and subscribers.  Next, we discuss a few issues related to the adoption of aging control from the service providers and subscribers standpoints.

\subsection{Service Provider Standpoint: Limited Spectrum}

The increasing demand for mobile Internet access is creating pressure on the service providers, whose limited spectrum might not be sufficient to cope with the demand~\cite{anger, wifler}.    To deal with such pressure, some wireless providers are offering incentives to subscribers to reduce their 3G usage by switching to WiFi~\cite{tmobile}.  Therefore, it is to the best interest of the service providers to devise efficient aging control policies for their users.  Aging control policies can not only reduce the pressure on the 3G spectrum, but also reduce the costs to service providers that support hybrid 3G/WiFi networks~(with savings of up to 60\% against providers that offer only 3G~\cite{wifler}).  

\subsection{Users Standpoint: Energy Consumption, Monetary Costs and Coverage}

How do the coverage, energy consumption and monetary costs vary between 3G and WiFi?  Table~\ref{tab:comp}  illustrates the answer with data obtained from~\cite{wifler, niranjan, sms, link} and is further discussed next.  

\begin{table}
\center
\begin{tabular}{l|l|l|l}
\hline
& Coverage & Energy Consumption (Scanning and Handshaking) & Monetary Costs \\
\hline
\hline
WiFi & 10\%-20\%   & 0.63J (Android G1) 1.18J (Nokia N95) & free \\
\hline
3G   & 80\%-90\% &  $\approx 0$   &  0.20 USD (SMS in USA) \\
\hline
\end{tabular}
\caption{Coverage, energy consumption and monetary costs}
\label{tab:comp}
\end{table}

\paragraph{Energy consumption and AP scanning}  The energy efficiency of WiFi and 3G radios differs significantly.   WiFi users actively scan for APs by broadcasting probes and waiting for beacons sent by the APs.  If multiple beacons are received, users connect to the AP with highest signal strength.  The association overhead, incurred when searching and connecting to the APs, yields substantial energy costs.    Niranjan \emph{et al.}~\cite{niranjan} report that the energy consumption of a WiFi AP scan is 0.63J in Android G1 and 1.18J in Nokia N95,  taking roughly 1 and 2 seconds, respectively (note that in one time slot users might perform multiple AP scans).    In the remainder of this work, to simplify presentation, we consider only the energy costs related to association overhead.  In many scenarios of practical interest WiFi and 3G radios  incur comparable energy costs after the handshaking phase, and our model can be easily adapted to account for cases in which such energy costs differ.

\paragraph{Monetary costs} Whereas free WiFi hotspots are gaining popularity~\cite{paris, amherst}, the use of 3G is still associated to monetary costs.   For example, in the United States and in Australia, it typically costs 0.20 USD to send an SMS  through a 3G network~\cite{sms}.  As a second example, in Brazil subscribers of Claro~3G incur a cost of USD 0.10 per megabyte after exceeding their monthly quota~\cite{gizmodo}.   

\paragraph{Coverage} It is well known that the coverage of 3G is much broader than the coverage of WiFi.  This is because the 3G towers are placed by operators so as to achieve almost perfect coverage, while WiFi access points are distributed  and activated in an ad hoc fashion.    For instance, it has been reported in \eat{by Balasubramanian \emph{et al.}}~\cite{wifler} that 3G and WiFi are available roughly  12\% and 90\% of the time in the town of Amherst, Massachusetts. 

\paragraph{Approximations}  In light of the above observations, in the rest of this paper, except otherwise stated, we consider the following three approximations concerning energy consumption, monetary costs, and coverage.   We assume that 1) the energy consumption of scanning for WiFi access points dominates the energy costs of the 3G and WiFi radios, 2) WiFi access points are open and freely available, whereas the use of 3G incurs a monetary cost per message and 3) WiFi is intermittently available whereas 3G offers perfect coverage.  Note that the above three approximations are  made solely to simplify presentation:  our model has  flexibility to account for  different energy consumption and monetary costs of WiFi and 3G, and can be easily adapted to account for a 3G network that does not have perfect coverage.

\subsection{Adopting Aging Control}

Next, we discuss two key aspects pertaining the adoption of aging control: 1) the delay tolerance of the applications and 2) the availability of WiFi access points.  

\subsubsection{Delay Tolerant Applications: Age and Utility}

Aging control is useful for applications that can tolerate delays, such as news feeds and email.   For these applications, users might be willing to tolerate some delay if that translates into reduced energy consumption or monetary costs.  

\paragraph{Workload}  \label{sec:workload}    The  workload subject to aging control can be due  to one or multiple applications.  

\textbf{One application: }  Websites such as Yahoo!  provide news feeds on different topics,  ranging from  business and economics to entertainment.  Niranjan \emph{et al.}~\cite{niranjan} monitored feeds in ten categories, and reported that in three of these categories (business, top stories and opinion/editorial) at least one new feed is available every minute, with very high probability~\cite[Figure 13]{niranjan}.   \label{sec:heavyload}

\textbf{Mutiple applications: }
Aiming at energy savings, it is common practice for users of smartphones to synchronize the updates of multiple applications, using APIs such as OVI Notifications~\cite{ovi}. In this case, the larger the number of applications that require updates, the higher the chances of at least one update being available every time slot.  

In the remainder of this paper, we focus on the above mentioned workloads, for which updates are issued with high frequency.  The analysis of light and medium workloads is subject for future work (see \textsection\ref{limitations}).

\paragraph{User Interface}  

How does the utility of a content degrade with its age?  The answer to this question is user and application dependent.  It turns out that some applications, such as JuiceDefender~\cite{juicedefender}, already allow iPhone users to specify their delay tolerance per application so as to save battery.  For instance, users can  simply specify a delay threshold, after which they wish to have received an update.  In our model, this corresponds to an utility that has the shape of a step function (see \textsection\ref{sec:special}).

\subsubsection{WiFi Access Points}

Open WiFi access points are  widespread~\cite{paris, amherst}. The age control mechanisms described in this paper can rely on such access points, which users encounter in an ad hoc fashion.   Alternatively, Internet service providers can deploy their own  closed networks of WiFi access points, in order to alleviate the load of their 3G networks.   In France, enterprises such as SFR and Neuf are already adopting such strategy. The age control mechanisms presented in this paper can be applied in this setting as well.  In this case, users are required to perform a web-based login in order to access the WiFi spots, which can be done automatically using tools such as Freewifi~\cite{freewifi}.  Note that even if mobile users have perfect geographic information about the location of the access points, random factors like attenuation (fading) and user speed will determine the availability of the access points, which will  vary temporally and spatially.    

\paragraph{Markov Decision Processes}   Remaining inactive at a given instant of time is particularly useful if future transmission opportunities are available in  upcoming time slots.   However, as discussed above, WiFi access points are intermittently available.  Therefore, WiFi users experience a random delay between the instant  of time at which they activate  their devices and the instant at which they obtain content updates.    How to account for the unpredictability of transmission opportunities in upcoming slots?  To answer this question, in the next section we propose a model, based on a Markov Decision Process,  which  naturally accounts for the effect of the actions 
at a given time slot on the future states of the system, and allows us to derive the structure of the optimal aging control policy.

\section{Model}

\label{model}

We  consider  mobile users that  subscribe to receive content updates.  In the microblogging jargon, such users are said \emph{to follow} a content.  Content is transmitted from publishers to users, through messages sent by the service providers. The \emph{age} of the message held by a user   is defined as the length of time, measured in time slots, since the message was downloaded.   Note that we assume that updates are available at every time slot with high probability (see ~\textsection\ref{limitations}).

 Let $x_{t}$ be the age of the message held by a user  at time $t$.  The age of the message equals one when the user first receives it,  and increases by one every time slot, except when the user obtains an update, time at which the age is reset to one.  

A user can receive message updates when  his mobile device is in  range of a service provider and the contact between them lasts for a minimum amount  time which characterizes a \emph{useful contact opportunity}.    While the  3G technology is assumed to guarantee perfect coverage, WiFi users are subject to outages.  
 Let $e_{t}$ be an indicator random variable equal to 1 if there is a useful contact opportunity with a  WiFi provider at time $t$, and 0 otherwise.  We let $p=E[e_t]$ and assume $0<p<1$.  Next, we state our key modeling assumption.
  \begin{assumption} \label{assum:uniform}
 \textbf{Uniform and independent contact opportunity distribution: } The probability of a useful contact opportunity between a user and    WiFi providers  is  constant and independent across time slots,  and equals~\underline{$p$}.
 \end{assumption}
 
 Under Assumption~\ref{assum:uniform}, there are no correlations in time between contact opportunities experienced by a user; this is  a strong assumption, since such correlations are present  in any mobile network, as illustrated in~\textsection\ref{sec:measurecontact}.  However, as shown in  ~\textsection\ref{sec:evaluating}, there are scenarios of practical interest in which the uniformity and independence assumption does not compromise the accuracy of the results obtained using our model.  Therefore, we proceed our analysis under such an assumption, indicating its implications throughout the paper and studying scenarios that are out of the scope of our model using real-world traces.

\label{para:thrpol} 
     The goal of each user is to minimize the expected age of the  content he follows, accounting for  energy and monetary costs.   In order to achieve such goal, users must choose at each time slot $t$ their actions, $a_{t}$.  The available actions are 
\begin{equation}
a_{t}=
\left\{ \begin{array}{ll} 0, & \textrm{inactive}    \\
1, & \textrm{WiFi }    \\
2,  & \textrm{WiFi if $\exists$ useful contact opportunity, else 3G } \nonumber \end{array} \right.
\end{equation}

\paragraph{State Dynamics}
The age of the content followed  by a user increases by one if his device is inactive or if it is not in  range of  a service provider, and is reset to one otherwise.  Let ${M}$ be the maximum age of a content.  Then, 
\begin{equation}
x_{t+1} {=}
\left\{ \begin{array}{ll} \min(x_{t}+1,{M}),  & \textrm{ if }  ( a_{t} {=} 0 ) \textrm{ or  } (a_{t} {=} 1 \textrm{ and  } e_{t} {=}0) \\
1,  & \textrm{ if }   (a_{t} {=} 2 ) \textrm{ or } (a_{t} {=} 1 \textrm{ and } e_{t} {=}1) \end{array} \right.
\end{equation}

\paragraph{Utility}

Let $U(x_{t})$  the utility of the  followed content at time $t$. We assume that $U(x)$ is a non-increasing function of $x$, which corresponds to messages that become obsolete with time, and that $U(x)=Z$ if $x \ge M$.

\paragraph{Costs}

Let $G$ be the cost incurred to maintain the mobile device active, measured in monetary units.   Then, the energy cost $c_t$ is given as a function of $a_{t}$ as
\begin{equation} \label{energycost}
c_{t}(a_{t})=
\left\{ \begin{array}{ll}  \cost, & \textrm{ if }   a_{t} \ge 1\\
0,  & \textrm{ if } a_{t}=0 \end{array} \right.
\end{equation}

Service providers charge a price for each message  transmitted.  
The prices charged by WiFi and 3G providers are $P$ and $P_{3G}$, respectively.  
When a user receives an update, he is subject to a monetary cost of $m_t$,
\begin{equation} \label{energycost}
m_{t}(a_{t},e_t){=}
\left\{ \begin{array}{ll}  P_{3G}, & \textrm{ if }   a_{t} = 2 \textrm{ and } e_t = 0\\
P,  & \textrm{ if } a_{t} \ge 1 \textrm{ and } e_{t} =1 \end{array} \right.
\end{equation}

Content providers, also referred to as publishers, can offer bonuses to users that follow their contents.  Such bonuses are set in agreement with service providers, and are transfered to users as credits.  Let $B_t$ be the bonus level set by the content provider, $B_t \leq \min(P_{3G}, P)$.

   The instantaneous user reward at time $t$, $r_{t}(x_{t},a_{t})$, is
\begin{equation} \label{rtiddef} 
r_{t}(x_{t},a_{t}){=} 
 \utility(x_{t}) {-} c_t(a_t)  {-}  \max(m_t{-}B_t,0)
\end{equation}

\paragraph{Users Strategies}   The \emph{strategy} of a user is given by the probability of choosing a given action $a_t$ at each of the possible states.   Without loss of generality, in this paper we restrict to Markovian stationary policies~\cite[Chapter 8]{puterman}. The probability of choosing action $a_t$ at state $x_t$ is denoted by $u(a_t|x_t)$.  

\paragraph{Problem Definition} The problem faced by each user consists of finding the strategy $u$ that  maximizes his  expected average reward.

\medskip
\boxed{
\begin{minipage}{6.3in}
{\ensuremath{\mbox{\sc  User Problem:}}} 
Obtain strategy $u$ so as to maximize $E[r;u]$, where  \vspace{-0.1in}
\begin{equation}   \label{def:eri}
E[r;u] = \lim_{\ell \rightarrow \infty} \frac{1}{\ell} \sum_{t=0}^{\ell} E[r_{t}(x_{t},a_{t});u]  
\end{equation}
\end{minipage}
}
\medskip

In what follows, we drop the subscript $t$ from variables when analyzing the system in steady state.

\begin{figure}
\center 
\begin{tabular}{cc}
\includegraphics[scale=0.40]{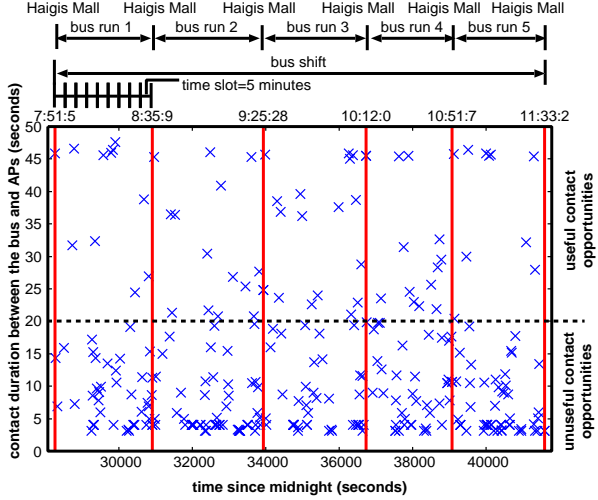} & 
\includegraphics[scale=0.60]{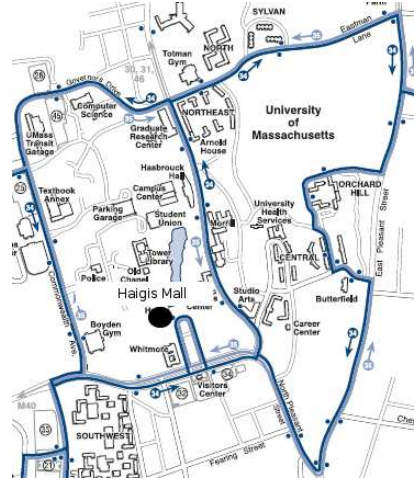}\\
(a) Bus-AP contacts during a typical bus shift &
(b) Route of a bus that passes through Haigis Mall
\end{tabular}
\caption{A typical bus shift.}
\label{history}
\vspace{-0.18in}
\end{figure}

\paragraph{Optimal Threshold Policy}  Next, we introduce two special classes of policies, the \emph{two-threshold policies} and the \emph{threshold policies}. Let $s$ and $s_{3G}$ be the WiFi  and the 3G thresholds, $1 \leq s \leq  s_{3G} \leq M+1$.  A policy which consists of setting $a=0$ when $x < s$, $a=1$ when $s \leq  x < s_{3G}$, and $a=2$ if $x \ge s_{3G}$, is referred to as a \emph{two-threshold policy}.  If users have no access to 3G $(P_{3G} \rightarrow \infty)$,  a 
policy which consists of backing off if $x < s$ and being active if $x \ge s$ is referred to as a \emph{threshold policy}.     

Note that while $x \in  [1, M]$, the thresholds assume values in the range $[1, M+1]$.  When $P_{3G} \rightarrow \infty$,  $s=M+1$ means that the user should remain always inactive  (refer to Table~\ref{notation} for notation). 

The following proposition reduces the problem of finding the optimal policy for the {\ensuremath{\mbox{\sc  User Problem}}} to the one of finding the two thresholds $s$ and $s_{3G}$. 
\begin{proposition} \label{temp1}
The {\ensuremath{\mbox{\sc  User Problem}}} admits an optimal policy  of two-threshold type such that
\\
\\
\begin{tabular}{p{4.6cm}|p{3.4cm}}
\textrm{if } $P_{3G} > G/p + P$,
\begin{equation*}
{a(x)} {=} \left\{ \begin{array}{ll}  0, & \textrm{ if }   x < s \\
1,  & \textrm{ if } s \leq {x} < s_{3G} \\
2, & \textrm{otherwise} 
\end{array} \right.
\end{equation*}
&
\textrm{if } $P_{3G} \leq G/p + P$,
\begin{equation*}
{a(x)} {=}
\left\{ \begin{array}{ll}  0, & \textrm{ if }   x {<} s_{3G} \\
2, & \textrm{otherwise} 
\end{array} \right.
\end{equation*}
\end{tabular}

If $P_{3G} \rightarrow \infty$, $s_{3G}=M+1$ and the {\ensuremath{\mbox{\sc  User Problem}}}  admits an optimal policy of threshold~type. 
\end{proposition}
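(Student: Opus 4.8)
The plan is to treat the {\sc User Problem} as a finite-state, finite-action average-reward MDP on the state space $\{1,\dots,M\}$ and to read off the threshold structure from its optimality equations, exploiting that $U$ is non-increasing and that the transition map $y(x):=\min(x+1,M)$ is non-decreasing. First I would invoke the average-reward dynamic programming equations (Puterman, Ch.~8): there are a scalar $\rho$ and a relative value function $h:\{1,\dots,M\}\to\R$ with $\rho+h(x)=\max_{a\in\{0,1,2\}}Q(x,a)$, where the action-values, obtained from the state dynamics and the cost/utility definitions, are (writing $\tilde P=\max(P-B,0)=P-B$ and $\tilde P_{3G}=\max(P_{3G}-B,0)=P_{3G}-B$, using $B\le\min(P,P_{3G})$)
\begin{align*}
Q(x,0)&=U(x)+h(y(x)),\\
Q(x,1)&=U(x)-G-p\tilde P+p\,h(1)+(1-p)\,h(y(x)),\\
Q(x,2)&=U(x)-G-p\tilde P-(1-p)\tilde P_{3G}+h(1).
\end{align*}
Existence of $(\rho,h)$ and of a stationary optimal policy is standard for finite MDPs; if a self-contained route is wanted, I would instead solve the $\beta$-discounted problem and pass to the vanishing-discount limit $\beta\uparrow1$.

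The second step is to show $h$ is non-increasing. I would do this by (discounted) value iteration from $h_0\equiv0$: if $h_n$ is non-increasing, then since $U$ is non-increasing and $y$ non-decreasing each of the three maps $x\mapsto Q_n(x,a)$ above is non-increasing, hence so is their pointwise maximum $h_{n+1}$ (the additive normalization in relative value iteration is irrelevant); letting $n\to\infty$ yields a non-increasing $h$. Consequently $\Delta(x):=h(1)-h(y(x))$ is non-negative and non-decreasing in $x$, which is the lever for the threshold form.

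The third step is a short case analysis of the differences of the $Q$-values: $Q(x,1)-Q(x,0)=p\big(\Delta(x)-G/p-\tilde P\big)$, $Q(x,2)-Q(x,1)=(1-p)\big(\Delta(x)-\tilde P_{3G}\big)$, and $Q(x,2)-Q(x,0)=\Delta(x)-G-p\tilde P-(1-p)\tilde P_{3G}$. Since $B$ cancels in the comparison of $\tilde P_{3G}$ with $G/p+\tilde P$, the relevant dichotomy is exactly $P_{3G}>G/p+P$ versus $P_{3G}\le G/p+P$. If $P_{3G}>G/p+P$, I would check the ordering of the crossing levels $\theta_{01}:=G/p+\tilde P<\theta_{02}:=G+p\tilde P+(1-p)\tilde P_{3G}<\theta_{12}:=\tilde P_{3G}$ (each inequality reducing to $P_{3G}>G/p+P$), so that $a=0$ is optimal exactly when $\Delta(x)<\theta_{01}$, $a=1$ when $\theta_{01}\le\Delta(x)\le\theta_{12}$, and $a=2$ when $\Delta(x)>\theta_{12}$; monotonicity of $\Delta$ turns these consecutive ranges into thresholds $1\le s\le s_{3G}\le M+1$ on $x$. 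If $P_{3G}\le G/p+P$, then $\tilde P_{3G}\le G/p+\tilde P$ gives $Q(x,2)-Q(x,0)\ge\Delta(x)-G/p-\tilde P=\tfrac1p\big(Q(x,1)-Q(x,0)\big)\ge Q(x,1)-Q(x,0)$ whenever $Q(x,1)\ge Q(x,0)$, so $a=1$ is never strictly optimal and the problem collapses to $\{0,2\}$ with a single threshold $s_{3G}$. Finally $P_{3G}\to\infty$ makes $\tilde P_{3G}\to\infty$, so $Q(x,2)$ is never a maximizer and $\theta_{12}\to\infty$; this rules out $a=2$, forces $s_{3G}=M+1$, and leaves a threshold policy over $\{0,1\}$ with threshold $s$.

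I expect the main obstacle to be the monotonicity step — more precisely, guaranteeing a \emph{non-increasing} solution $h$ of the average-reward optimality equation, since that equation may a priori admit several solutions and relative value iteration needs an aperiodicity transformation to converge; routing through the discounted problem and the vanishing-discount argument is the cleanest fix, as monotonicity of the discounted value function is immediate and is preserved in the limit. The remaining case analysis is bookkeeping with the explicit $Q$-values; the only point requiring care is verifying the orderings $\theta_{01}<\theta_{02}<\theta_{12}$ (resp.\ the domination of $a=1$) so that the sets of states on which $0$, $1$, $2$ are optimal are genuinely consecutive and hence pull back to thresholds in $x$.
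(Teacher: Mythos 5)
Your proposal is correct and follows the same overall architecture as the paper: write the average-reward optimality equations with the three action-values, show the relative value function is non-increasing, and then read off the two-threshold structure from the fact that the decision quantity $V(1)-V(\min(x+1,M))$ is non-decreasing in $x$ and is compared against constants whose ordering is governed exactly by the sign of $P_{3G}-(G/p+P)$; your crossing levels $\theta_{01}<\theta_{02}<\theta_{12}$ and the collapse to actions $\{0,2\}$ when $P_{3G}\le G/p+P$ match the paper's optimality conditions verbatim, and your algebra checks out (the bonus $B$ indeed cancels in the comparison). The one place where you genuinely deviate is the key monotonicity lemma. The paper proves it (Proposition on $V$ non-increasing in the 3G appendix, and its WiFi-only analogue) by a direct backward induction on the states within the average-reward relative-value equations, split into the scenarios $P_{3G}\le G/p+P$ and $P_{3G}> G/p+P$, and the induction simultaneously carries structural implications of the form ``if action $0$ is optimal at $x$ then it is optimal at $x-1$'', so monotonicity and the threshold propagation are established jointly; the existence of solutions to the optimality equations is delegated to cited references. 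You instead separate the two concerns: obtain a monotone relative value function by value iteration in the discounted problem and a vanishing-discount limit, and then derive the threshold structure purely from the algebra of the $Q$-differences and the monotonicity of $\Delta(x)$. Your route is more modular and avoids the paper's rather case-heavy induction (and its implicit reliance on how ties are handled), at the cost of importing the standard vanishing-discount machinery; the paper's route is self-contained in the average-reward setting but longer. Either way the final step — Proposition~\ref{temp1} following from monotonicity plus the optimality conditions — is identical.
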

When the price charged by the 3G providers dominates the costs, the optimal policy consists of activating the WiFi radio if $s \leq {x} < s_{3G}$ and using the 3G only if  $x \ge s_{3G}$.   If the price charge by the 3G providers is dominated by $G/p + P$, though, users are better off relying on the 3G technology, which offers  perfect coverage.  

To simplify  presentation, in the upcoming sections we assume that 1) users have access only to WiFi APs ($P_{3G} \rightarrow \infty$) and 2)  the optimal threshold policy is unique.  In~\myapp{sec:w3g} we 1) show how  our results extend to the scenario in which users can choose between WiFi and~3G and 2) characterize the scenarios under which the optimal threshold is not unique. \eat{Given that the optimal policy $u$ is fully characterized by the threshold $s$, we let  $E[r;s]=E[r;u]$.   Next, we state the definition of the optimal~threshold. }
\begin{definition} \label{defopt}
The threshold $s^{\star}$ of the optimal policy is 
\begin{equation} \label{optdef}
s^{\star} = \textrm{argmax}_s \{ E[r;s] \}
\end{equation}   
\end{definition}
The properties of the optimal threshold are discussed in the following section in light of traces from DieselNet, and are formally stated in~\textsection\ref{optimal}.

\section{Evaluating Aging Control Policies in DieselNet}

\label{eval}

In this section we use traces collected from the UMass Amherst DieselNet~\cite{Thedu-mobicom08}  to evaluate how aging control policies perform in practice.   Our goals are 1) to show the accuracy  of our model predictions, 2) to assess the optimality of the model predictions in the class of threshold policies and  \eat{in the class of threshold policies } and 3)  to compare the optimal policy  obtained  with our model against policies  that are out of the scope of our model, such as  location-aware policies.  

The users of the microblogging application considered in this section are  passengers and drivers of buses.  We assume that  users are  interested in following one of the heavy load news feeds described in \textsection\ref{sec:heavyload}, for which updates are issued every minute with high probability. \eat{street traffic information.  }  Time is divided into slots of 5 minutes. \eat{, during which a traffic update is made available by the transit authority.   }

We begin with  an overview of statistics collected from the traces and their implications on aging control.

\subsection{Measuring Contact Statistics}

\label{sec:measurecontact}

To characterize the update opportunities experienced by the users, we analyze contacts between buses and access points (APs) at the UMass campus. The traces were compiled during Fall 2007 from buses running routes serviced  by UMass Transit.
 Each bus scans for connection with  APs on the road, and when found, connects to the AP and records the duration of the connection~\cite{Thedu-mobicom08}.

 \begin{figure*}
\hspace{-0.1in}
\includegraphics[scale=0.31]{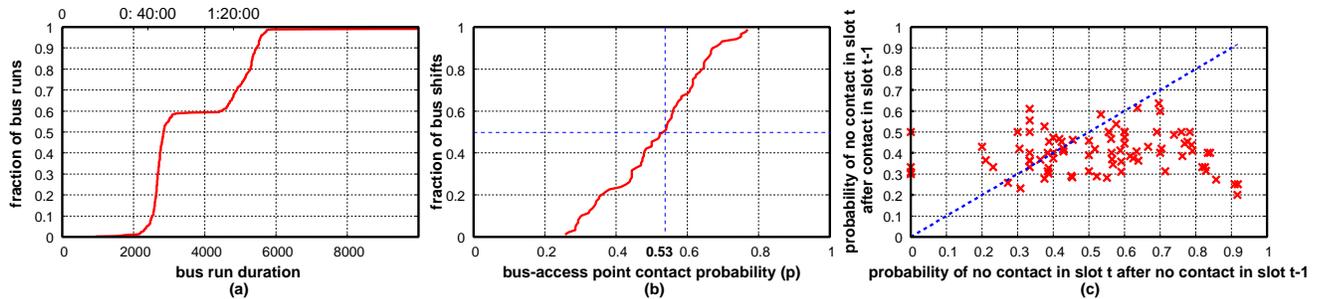}
\caption{Useful contact opportunity statistics (a) time  between contacts to AP at Haigis Mall; (b) CDF of p; (c) scatter plot of opportunities in consecutive slots.}  \label{intercontact2}  \label{intercontactb}   \label{intercontacta}

\end{figure*}

 Figure~\ref{history}(a)  illustrates a sample of the trace data.  In Figure~\ref{history}(a), each cross corresponds to a contact between the bus and an AP. The  $x$ and $y$ coordinates of each cross correspond to the time of the day at which the contact occurred and the duration of the contact, respectively.

  In the rest of this paper, we assume that when the mobile devices are active,  scans for  access points occur every 20 seconds, and last up to the moment at which an access point is found. Ra \emph{et al.}~\cite[\textsection 5]{link} empirically determined that a scanning frequency of 1/20 seconds yields a good balance between efficiency and low energy expenditure. An access point is considered useful once it is scanned in two consecutive intervals of 20 seconds.  \eat{ i.e., if the contact duration between the user and the access point lasts for at least 20 seconds. }
  \eat{a useful transmission between the bus and the AP requires 20 seconds.  Twenty seconds roughly correspond to the time required to  scan for access points, connect to  one of them and download a message of up to 500KBytes at a rate of 200Kbps (a standard  802.11g AP supports 6Mbps, which we assume  shared among 30 users).       }
  Henceforth, we refer to contact opportunities that last at least 20 seconds as \emph{useful contact opportunities}, or  \emph{contacts} for short, when the qualification is clear from the context.  Time slots in which at least one useful contact opportunity begins are referred to as \emph{useful~slots}.  

 Figure~\ref{history}(b) shows the map of one of the bus routes considered in this paper, with the Haigis Mall in evidence.   The Haigis Mall is a central location for the transit of the buses at UMass,  and in this work we restrict to routes that pass through it.   In Figure~\ref{history}(a), vertical lines correspond to instants at which the bus passed through the Haigis Mall.   A \emph{bus run} corresponds to the interval between two arrivals at Haigis Mall.  A \emph{bus shift} corresponds to a sequence of consecutive and uninterrupted bus runs, in the same day.   Note that a bus run in Figure~\ref{history}(a) takes around 40 minutes.    A typical bus run varies between 40 minutes and 1 hour and 20 minutes (see  Figure~\ref{intercontact2}(a)).

 Figure~\ref{history}(a) shows that roughly every time the bus passes through Haigis Mall there is a useful contact opportunity.  This observation has important implications on the aging control policy. In particular, it indicates that users that are location-aware can take advantage of such information in order to devise their activation strategies.  We will evaluate the performance of the policy which consists of activating the mobile device only at Haigis Mall in~\textsection\ref{locationaware}. 

Next, our goal is to study the distribution of contact opportunities between buses and APs. In particular, we wish to identify the extent at  which the uniformity and independence assumption (Assumption~\ref{assum:uniform}) holds in practice, specially when buses are far away from Haigis Mall (see Figure ~\ref{history}).  To this aim, for each day and for each bus shift, we generate out of our traces a \emph{string of ones and  zeros, corresponding to  useful and unuseful slots}, respectively.     Such strings are  used to plot  Figures~\ref{intercontactb}(b) and ~\ref{intercontactb}(c), as well as an input to our trace-driven simulator, described in the next section.

Figure~\ref{intercontactb}(b)  shows the CDF of the bus-AP contact probability ($p$) across bus shifts.  The median of $p$ is 0.53.  The contact probability is above 0.3 for up to 85\% of the bus shifts and below 0.7 for more than 90\% of the bus shifts.

Figure~\ref{intercontacta}(c)  shows a scatter plot of the contact probabilities in two consecutive slots. Each cross corresponds to a bus shift.   A cross with coordinates $x$ and $y$ corresponds to a bus shift in which the probability of no contact in slot $t$ after no contact in slot $t-1$  is $x$ and the probability of no contact in slot $t$ after a contact in slot $t-1$ is $y$.  The figure indicates that when $x$ varies between 0.3 and 0.7, a significant fraction of points is close to the line $x=y$.  This behavior is similar to the one expected in case contacts are approximately uniform and independent from each other.

\subsection{Evaluating Aging Control Policies}

\label{sec:evaluating}

In this section we validate our model against traces.   We begin by describing our methodology and reference configuration,  and then consider both location-oblivious and location-aware policies.

\subsubsection{Methodology and Reference Configuration}

To validate our model we use the traces described in the previous section.  We assume that users do not have a cell phone data plan ($P_{3G}\rightarrow \infty$) and   WiFi is  free  ($P=B=0$).

The computation of   the optimal policy using the proposed model requires estimates of  $p$, $U(x)$ and $G$. For a given bus shift, $p$ is estimated as the   number of useful slots in that shift divided by the total number of slots.   Note that to compute the optimal  strategy using our model we assume knowledge of $p$, but not of the  distribution of contact times and durations.   When searching for the optimal threshold strategy using  traces, in contrast, we perform trace-driven simulations. Our simulator, as well as  extensive statistics obtained from the traces, are available at \url{http://www-net.cs.umass.edu/~sadoc/agecontrol/}. For each bus shift, our simulator takes as input the string of ones and zeros corresponding to useful and unuseful slots, and computes the reward experienced by a user adopting a given activation policy.  The strategy  that yields the highest reward correspond to the \emph{optimal trace-driven policy}.

In our reference setting, the utility of messages decays linearly during a bus run,  and remains zero afterwards, $U(x) = \max(M-x,0)$. 
We vary  $M$ between 10 and 16 (which correspond to 50 and 80 minutes, resp., see also Figure~\ref{intercontact2}(a)), in increments of 2.   For ease of presentation, let $b$ be the energy cost scaled by a factor of  $1/(M-1)$, \eat{and present our results as a function of $b$,}
\begin{equation} \label{eq:bgm1}
b=G/(M-1)
\end{equation}
We  vary $b$ according to our experimental goals between 0.2, 0.8 and 1.8, corresponding to small, medium and high costs, respectively.

\begin{figure}
\center
\includegraphics[scale=0.45]{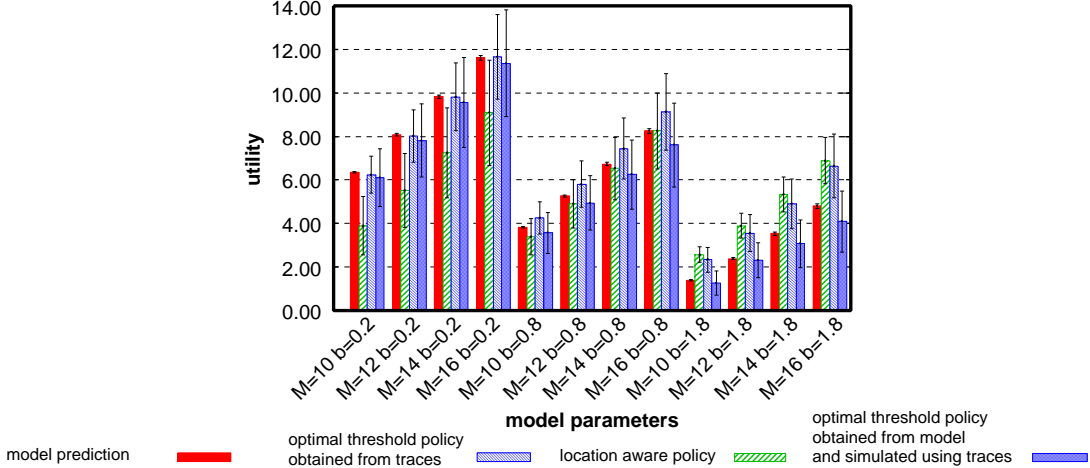}
\caption{Rewards as predicted by our model and observed in DieselNet.}
\label{fig:utility}
\end{figure}

\subsubsection{Location Oblivious Policies} Next, we validate our model against traces assuming that users are restricted to threshold policies (non-threshold policies are considered in the next section).  Figure~\ref{fig:utility} shows, for different model parameters, the averages of  1) the optimal  reward  \emph{predicted} by our model, 2) the reward \emph{effectively obtained} using the optimal strategy computed by the model as input to  our trace-driven simulator and 3)  the \emph{optimal trace-driven (threshold) policy}, with the 95\% confidence intervals (to increase the number of samples, each bus shift is replayed 40 times). \eat{\footnote}

In Figure~\ref{fig:utility}, note   that across all parameters, the rewards predicted by our model match the rewards effectively obtained  using the policy proposed by our model pretty well.  When the energy cost is low ($b=0.2$),  our model predictions also closely match the optimal trace-driven policy.    When the energy cost is medium ($b=0.8$) the accuracy of the predictions of our model depends on the maximum age $M$.  Recall that typical bus runs last between 40 minutes and 1 hour and 20 minutes (see Figure~\ref{intercontact2}(a)) and that when a bus run is completed, a contact occurs with high probability  (see Figure~\ref{history}(a)).  If the maximum age is 50 minutes ($M=10$), the fact that our model does not capture the correlations among contacts between buses and APs does not play an important role.   However, as $M$ increases,  strategically setting the policy to account for such correlations  is  relevant.   Similar reasoning holds when~$b=1.8$.

Figure~\ref{utility1a}(a) shows  the distribution of the distance between the optimal threshold using our model versus the  optimal trace-driven policy, for different model parameters.  In accordance to our previous observations, when $b=0.2$, the distance between the optimal threshold and the one computed by our model is smaller than or equal to two in at least 60\% of the bus shifts.  The same holds when $b=0.8$ and $M=10$.  When $M=16$ and $b=1.8$, in contrast, the distance to the optimal threshold is smaller than two for less than 40\% of the bus shifts.  

\begin{figure*}
\hspace{-0.2in}
\includegraphics[scale=0.32]{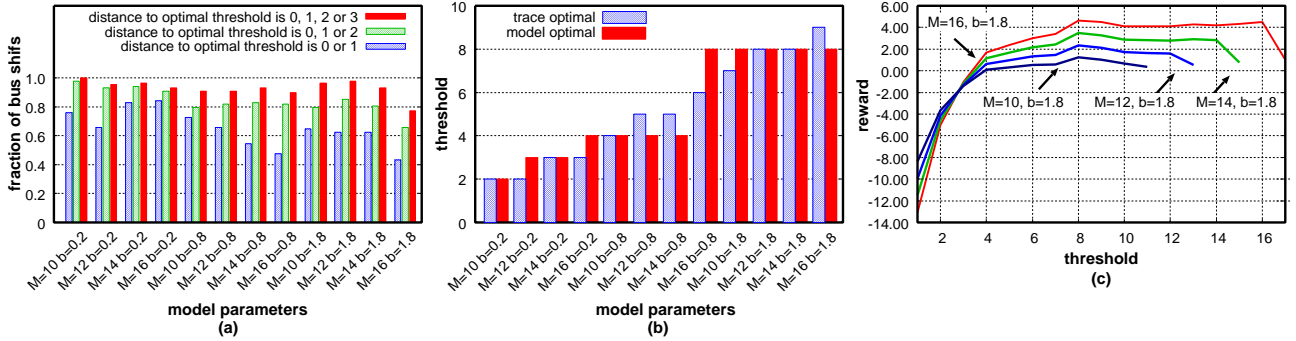}
\caption{Model validation (a) distance between optimal trace-driven  threshold and optimal model threshold; (b) optimal threshold mode; (c)  trace-driven reward as a function of the threshold, where threshold is assumed the same at  all bus shifts.}
\label{utility1a}
\end{figure*}

Figure~\ref{utility1a}(b) reports the mode of the optimal threshold as estimated by our trace-driven simulator and by our model.  The mode of the  optimal threshold increases with the energy cost (see Proposition~\ref{increasing}), and the distance between the simulator and model predictions never surpasses two.

Figure~\ref{utility1a}(c) provides further insight on the problem when $b=1.8$.   
Until this point, we allowed different bus shifts to correspond to different threshold policies (i.e., we considered bus shift discriminated strategies).  Next, we consider users that adopt the same threshold policy over all bus shifts (i.e., we consider  flat strategies over bus shifts).    For different threshold values, Figure~\ref{utility1a}(c) shows the average reward obtained using  our trace-driven simulations (see Figure~\ref{fig:utility} for confidence intervals).  Note that the optimal  threshold equals eight  in the four scenarios under consideration.  This threshold value, in turn, is in agreement with the mode predicted by our model (see Figure~\ref{utility1a}(b)).   In Figure~\ref{ploted}(a)  we reproduce the results of Figure~\ref{utility1a}(c) using our model.  We let $p=0.53$, the median contact probability (see Figure~\ref{intercontacta}(c)).   Comparing Figure~\ref{utility1a}(c)  and Figure~\ref{ploted}(a), we observe that the empirical curves are predicted by our model with remarkable accuracy.  The distance between the optimal threshold predicted by our model (marked with circles in Figure~\ref{ploted}(a)) and the one obtained through the trace-driven simulations (consistently equal to eight) is smaller than or equal to 1, and the utility discrepancy does not surpass~0.5.

\label{sec:refermodel}

\subsubsection{Location Aware Policy} Users can leverage geographic information in order to decide when to activate their mobile devices.  At UMass, for instance, a central bus stop is located at the  Haigis Mall (see Figure~\ref{history}(b)).  Buses usually linger at Haigis Mall for a couple of minutes, time at which transmission opportunities  usually  arise (see Figure~\ref{history}(a)). Therefore, in this section we consider the following location aware policy:  activate the mobile device if in Haigis Mall, and remain inactive otherwise.  Figure~\ref{fig:utility} shows the performance of our location aware policy obtained using trace-driven simulations.  When the energy  costs are not high ($b=0.2$ or $0.8$), it is always advantageous to  activate the mobile device before the bus returns to Haigis Mall, in order to opportunistically take advantage of contacts with APs during the bus run.  However, when the energy costs are high ($b=1.8$), our location aware policy outperforms the best trace-driven threshold policy.  The non-optimality of threshold policies in this scenario corroborates the fact that when energy costs are high, it is important to account for correlations among contact opportunities.

\subsubsection{Summary}

To sum up, our model accurately predicts the reward effectively obtained by its proposed policy.  If the energy costs are low, or if users do not discriminate their strategies with respect to the bus shifts, our  model can also predict   the optimal policy obtained through trace-driven simulations.  Otherwise, users can benefit from correlations among contact opportunities.

\label{locationaware}

\section{Model Analysis}

Our goals now are $(a)$ to  derive the optimality conditions that must be satisfied by the optimal policy; $(b)$  to show properties of the  optimal threshold and $(c)$  to present specialized results for step and linear utility functions.  We tackle each of the goals in one of the subsequent sections, respectively.

\label{optimal}

\subsection{Optimal Policy General Structure}

\label{sec:genstr} 

We now derive the general structure of the optimal policy.  To this goal, consider a fixed policy $u$. 
In this section we assume that, under $u$, users have a positive activation probability in at least one state.  The conditions for the optimality of the policy which consists of remaining always inactive will be established in Proposition~\ref{propoalin}.  

 Let $\mathcal{P}_{u}$ denote the transition probability matrix of the Markov chain $\{ x_t : t=1,2, \ldots \}$ which characterizes the dynamics of the  age,  given policy $u$.  Let $r_u(x)$ be the expected instantaneous reward received in a time slot when the system is in state $x$ and policy $u$ is used.  The vector of expected instantaneous rewards is denoted by $\mathbf{r}_u$.   
 
 Let the gain, $g_{u}$, be the average reward per time slot in steady state, $g_{u}=E[r; u]$. Since the number of states in the system is finite and from each state there is a positive probability of returning to state~1, $\mathcal{P}_{u}$  comprises a single connected component, and $g_{u}$ does not depend on the initial system state.

\begin{table}
\center
\begin{tabular}{lll}
\hline
variable & description & \\
\hline
$a_{t}$ & action of user \\
$x_{t}$ & age of message \\
$r_{t}$ & reward  \\
$e_{t}$ & \multicolumn{2}{l}{=1 in case of useful contact opportunity, =0 otherwise} \\
\hline
parameter & description & determined by \\
\hline
$\utility(x)$ & utility of message at age $x$ & user \\
$B_t$ & bonus level & publisher \\
$P_{3G}$ & price of 3G &  3G service provider \\
$P$ & price of WiFi & WiFi service provider \\
\hline
$G$ & activation cost \\
$p$ & \multicolumn{2}{l}{probability of useful contact  opportunity, $p=E[e_{t}]$} \\
\hline
\hline
\end{tabular}
\caption{Table of notation.  Notes: (1) Subscripts are dropped when in  steady~state. (2) $B$ is a parameter in~\textsection\ref{optimal} and a variable in~\textsection\ref{maker}. 
}
\label{notation}
\end{table}

The \emph{relative reward} of state $x$ at time $t$, $V(x,t)$,  is the difference between the expected total reward accumulated when the system starts in $x$ and the  expected total reward accumulated when the system starts in steady state.  Let  $V(x)=\lim_{\ell \rightarrow\infty} V(x,\ell) =\lim_{\ell \rightarrow\infty} \Big( \sum_{t=0}^{\ell} \mathcal{P}_{u}^t (\mathbf{r}_u -g_{u} \be) \Big)(x)$, 
  where $\be$ is a column vector with all its elements equal to one.  It follows from ~\eqref{rtiddef} and similar arguments as to those in~\cite[eq. (8.4.2)]{puterman} that a policy which satisfies  the following conditions,  for $1 \leq x \leq M$, is optimal,
\begin{eqnarray}  \nonumber
V(x)= \max \Big(\utility(x) +  V(\min(x+1,{M})) -g_{u},  \utility(x) {-}  \cost  {+}  p (V(1){-}P{+}B)  {+}  (1-p) V(\min(x+1,{M})) {-}g_{u}\Big) \nonumber \end{eqnarray}
An optimal policy is obtained from the optimality conditions as follows,
\begin{equation}   
a(x) {=} \left\{ \begin{array}{ll}  {0}, & \textrm{ if }    {{-} {\cost}/ {p}  {+}   {(}V(1){-}P{+}B } {)}    {\leq}  V(\min(x+1,{M}))  \\ \label{eq3}
{1},  & \textrm{ otherwise} \end{array} \right. 
\end{equation}
In~Appendix~\ref{sec:no3g}  we show that $V(x)$ is decreasing on $x$ when $P=0$ (the corresponding result when $P \ge 0$ can be found in~\myapp{app:vdecr}).  Thus, the existence of an  optimal policy in the class of threshold policies (Proposition~\ref{temp1}) follows from ~\eqref{eq3}. 

Note that adding a constant to  $U(x)$, $1 \leq x \leq M$, does not affect the optimal policy ~\eqref{eq3}.  Therefore, in the rest of this paper we assume, without loss of generality, that $U(M)=0$.

\subsection{Optimal Threshold Properties}

\label{sec:aver}  
  
In this section we aim at finding properties of the optimal threshold.  To this goal, we note that an user adopting a threshold strategy goes through cycles.  Each cycle consists of an  idle and active period.  An idle period is initiated when the age is one, and ends immediately before the instant at which the age reaches the threshold $s$.  At age $s$, an active period begins and lasts on average $1/p$ time slots, up to the instant at which the age is reset to one.  

The following proposition establishes conditions according to which the optimal  actions are invariant in time (see Appendix~\myapp{sec:no3g}).   The proof of the subsequent results \eat{when not found in the appendix,} are available in the appendices. \eat{at~\myapp{sec:no3g}.}

\begin{proposition} \label{propoalac}
The optimal policy consists of being always active if and only if
\vspace{-0.05in} 
\begin{equation}
\frac{1}{1-p} \left( \utility(1)-p \sum_{x=1}^{M-1} \utility(x) (1-p)^{x-1}  \right)  \geq   \frac{\cost}{p} +P-B  \label{cond:alwaysactive}
\end{equation}
\label{propoalin}
The optimal policy consists of being always inactive if and only if 
\vspace{-0.1in}
\begin{eqnarray}
\sum_{j=1}^{M-1}  \utility(M-j)  &\leq& \frac{\cost}{ p} + P-B \label{cond:alwaysinactive}
\end{eqnarray}

\end{proposition}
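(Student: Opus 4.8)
Throughout I work in the regime $P_{3G}\to\infty$ that is assumed in the statement, so that only the actions $0$ and $1$ are available and the relevant policies are the threshold policies of Proposition~\ref{temp1}; the always-active policy is the threshold policy with $s=1$ and the always-inactive policy is the one with $s=M+1$. The starting point is the action rule~\eqref{eq3}: a threshold policy $u$ is optimal exactly when, writing $V$ for the relative value function of $u$, the action prescribed by $u$ agrees with~\eqref{eq3} at every state $x$, i.e.\ $-\cost/p+V(1)-P+B\le V(\min(x+1,M))$ at the states where $u$ is inactive and the reverse inequality where $u$ is active. Since the relative value function of either candidate policy is non-increasing (as in the monotonicity argument behind Proposition~\ref{temp1} in Appendix~\ref{sec:no3g}; this is immediate for the always-inactive policy and follows by a short induction from the Poisson equation and monotonicity of $\utility$ for the always-active one), $V(\min(x+1,M))$ is largest at $x=1$ and smallest for $x\ge M-1$. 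Hence the always-active policy is optimal iff
\[
V_{\mathrm{aa}}(1)-V_{\mathrm{aa}}(2)\ \ge\ \frac{\cost}{p}+P-B ,
\]
and the always-inactive policy is optimal iff
\[
V_{\mathrm{ai}}(1)-V_{\mathrm{ai}}(M)\ \le\ \frac{\cost}{p}+P-B ,
\]
where $V_{\mathrm{aa}},V_{\mathrm{ai}}$ denote the relative value functions of the always-active and always-inactive policies, respectively.

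What remains is to evaluate the two gaps. Under the always-active policy the age chain has stationary distribution $\pi_x=p(1-p)^{x-1}$ for $1\le x\le M-1$ and $\pi_M=(1-p)^{M-1}$; since $\utility(M)=0$ (the normalization adopted after~\eqref{eq3}), the gain is $g_{\mathrm{aa}}=\overline{U}-\cost-p(P-B)$ with $\overline{U}=p\sum_{x=1}^{M-1}(1-p)^{x-1}\utility(x)$. Writing the Poisson equation $V_{\mathrm{aa}}(x)+g_{\mathrm{aa}}=\utility(x)-\cost-p(P-B)+pV_{\mathrm{aa}}(1)+(1-p)V_{\mathrm{aa}}(\min(x+1,M))$ at $x=1$ and solving gives $V_{\mathrm{aa}}(1)-V_{\mathrm{aa}}(2)=\big(\utility(1)-\cost-p(P-B)-g_{\mathrm{aa}}\big)/(1-p)=\big(\utility(1)-\overline{U}\big)/(1-p)$, and substituting this into the first displayed inequality reproduces exactly~\eqref{cond:alwaysactive}. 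Under the always-inactive policy the chain absorbs in state $M$, so $g_{\mathrm{ai}}=\utility(M)=0$ and the Poisson equation reduces to $V_{\mathrm{ai}}(x)=\utility(x)+V_{\mathrm{ai}}(\min(x+1,M))$; solving downwards from $V_{\mathrm{ai}}(M)=0$ yields $V_{\mathrm{ai}}(x)=\sum_{k=x}^{M-1}\utility(k)$, hence $V_{\mathrm{ai}}(1)-V_{\mathrm{ai}}(M)=\sum_{k=1}^{M-1}\utility(k)=\sum_{j=1}^{M-1}\utility(M-j)$, and the second displayed inequality becomes exactly~\eqref{cond:alwaysinactive}.

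It then remains to justify both implications of each ``iff'' carefully. For the sufficiency directions I would invoke the verification part of~\eqref{eq3}: when the relevant inequality holds, plug the pair $(g,V)$ just computed into the Bellman optimality equation preceding~\eqref{eq3} and check that it is satisfied at every state; monotonicity of $V$ guarantees that once the inequality holds at the binding state ($x=1$ for always-active, $x=M-1$ for always-inactive) it holds at all states, so the candidate policy is optimal. For the necessity directions I would argue by contraposition, exhibiting a policy with strictly larger gain whenever the inequality fails: one compares the always-active policy with the threshold policy $s=2$ (resp.\ the always-inactive policy with the threshold policy $s=M$) through the policy-improvement identity $g_{\sigma}-g_{u}=\sum_x\pi_{\sigma}(x)\,\delta(x)$, where $\delta$ is supported on the set of states at which the two policies differ and has sign opposite to the failed inequality.

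The main obstacle is precisely this last step. In the always-active case a single-state deviation works and the computation of $\delta(1)$ is exactly the algebra of paragraph two; but in the always-inactive case deviating at one state near $M$ leaves state $1$ transient and does not change the gain, so one must instead activate on a whole tail $\{s,\dots,M\}$ (here $\{M\}$) to make the deviation states recurrent, and then recompute the gain of the perturbed threshold policy. One must also be slightly careful to identify the relative value function of the candidate policy with the one appearing in the optimality equation, which is legitimate because in this unichain model the relative value function is unique up to an additive constant and that constant cancels in~\eqref{eq3}.
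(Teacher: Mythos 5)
Your proposal is correct, and its computational core coincides with the paper's: both arguments reduce optimality of the two extreme policies to a single binding-state inequality on a non-increasing relative value function and then evaluate that value function through the candidate policy's Poisson equation, obtaining $V_{\mathrm{aa}}(1)-V_{\mathrm{aa}}(2)=\bigl(\utility(1)-p\sum_{x=1}^{M-1}\utility(x)(1-p)^{x-1}\bigr)/(1-p)$ and $V_{\mathrm{ai}}(1)-V_{\mathrm{ai}}(M)=\sum_{x=1}^{M-1}\utility(x)$, i.e.\ exactly \eqref{cond:alwaysactive} and \eqref{cond:alwaysinactive} (the paper packages the always-active computation through $E[r;1]$ rather than through $V(1)-V(2)$, which is the same algebra). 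Where you genuinely diverge is in how the equivalence is closed. The paper works with the value function appearing in the optimality conditions \eqref{eq2}, invokes its monotonicity (Proposition~\ref{vdecr_no3g}) and, crucially, Assumption~\ref{a:app2} (uniqueness of the optimal threshold policy) to make every step of the chain an ``if and only if''; its own remark concedes that without that assumption only one direction survives. You instead prove sufficiency by a verification argument with the candidate's own pair $(g,V)$ (whose monotonicity you establish directly, so you never need the monotonicity of the \emph{optimal} value function), and necessity by contraposition via the policy-improvement identity against the neighboring threshold policies $s=2$ and $s=M$. This buys a self-contained proof that does not rest on Assumption~\ref{a:app2}, at the cost of the extra advantage computation, which you only sketch but which is routine: $\delta(1)=p\bigl(G/p+P-B-(V_{\mathrm{aa}}(1)-V_{\mathrm{aa}}(2))\bigr)$ and $\delta(M)=p\bigl(\sum_{x=1}^{M-1}\utility(x)-G/p-P+B\bigr)$, each strictly positive when the corresponding inequality fails and weighted by a strictly positive stationary probability of the deviating state. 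You also correctly anticipate the one real pitfall on the inactive side, namely that a single-state deviation at an interior state leaves state $M$ absorbing and the gain unchanged, and your fix (deviate on the tail $\{M\}$, i.e.\ compare with the threshold policy $s=M$) is the right one. The only caveat worth recording is interpretive: at equality in \eqref{cond:alwaysactive} or \eqref{cond:alwaysinactive} your argument shows the extreme policy is \emph{an} optimal policy (possibly tied with a neighbor), which is the natural reading; the paper's strict uniqueness phrasing is exactly what Assumption~\ref{a:app2} is there to enforce.
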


Given that the condition for the optimal policy to be always inactive is  established in the second part of Proposition~\ref{propoalin}, in what follows we focus on policies which  consist of being active in at least one state.

For a fixed threshold policy with corresponding threshold $s$, let the system state transition probability matrix, $\mathcal{P}_{s}$, be

{
\footnotesize
\begin{equation}
\mathcal{P}_{s}=
\begin{matrix}
1
\\
\\
\\
s
\\
\\
\\
M
\end{matrix}
\begin{bmatrix}
0  &  1 &  &        &   & \\
0  &  0 & 1&        &   & \\
   &    &  & \ddots &   & \\
   p  &    &  &        & 0 & 1-p & \\
   &    &  &   & & \ddots &   & \\
p  &   &   &     & &   &      &  1-p \\
 \end{bmatrix}
\end{equation}
}

Let $\bpi$ be the steady state solution of the system, $\bpi \mathcal{P}_{s}  =  \bpi$.   Then, the fraction of time slots in which an user issues updates is  $\pi_1$ (see Appendix~\ref{sec:exavgre}), 
\vspace{-0.15in}
\begin{eqnarray} \label{exavgre}
\pi_1&=&\left( {s+\frac{1-p}{p} } \right)^{-1}
\end{eqnarray}

Next, we derive a closed-form expression for the expected reward as a function of the threshold $s$.    In~\myapp{app:ers}  we show that replacing the expression of the steady state  solution $\bpi$ into $E[r;s]= \sum_{i=1}^{M} \pi_i r(i,\mathbf{1}_{i\ge s})$ yields,
\begin{equation} \label{eq:ers} 
\boxed{
E[r;s] {=}  {\pi_1} \left[ \sum_{x=1}^{s-1} \utility(x)  {+} \sum_{i=0}^{M-1-s} \utility(i+s) (1-p)^{i}  {-} \frac{\cost}{p} {-}P{+}B \right] } \nonumber
\end{equation}
The derivation of the equation above is found in Appendix~\ref{sec:erseq}. 
In~\myapp{app:propos1b} we use the expression above  to show that the expected average reward  is non-decreasing in $s$, for $s < s^{\star}$, as stated in the following proposition.
\begin{proposition} \label{propos1b}
The optimal threshold value $s^{\star}$ is
\begin{equation} \label{def:1}
s^{\star} = \min\left\{ s \Big| E[r; s] \ge E[r; s+1] \right\}
\end{equation}
\end{proposition}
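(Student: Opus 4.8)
The plan is to deduce this directly from the closed-form expression \eqref{eq:ers} for $E[r;s]$ together with the unimodality established in Proposition~\ref{propos1} (namely that $E[r;s]$ is non-decreasing for $s < s^\star$, proved in \myapp{app:propos1b}). Once we know that $E[r;s]$ increases up to $s^\star$ and that $s^\star = \mathrm{argmax}_s E[r;s]$, we must show two things: (i) $E[r;s^\star] \ge E[r;s^\star+1]$, so that $s^\star$ is a member of the set on the right-hand side of \eqref{def:1}; and (ii) no smaller index $s$ satisfies $E[r;s] \ge E[r;s+1]$, so that $s^\star$ is in fact the \emph{minimum} of that set.

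First I would dispense with (i): since $s^\star$ maximizes $E[r;\cdot]$ over the feasible range $\{1,\dots,M+1\}$, in particular $E[r;s^\star] \ge E[r;s^\star+1]$ (with the convention that if $s^\star = M+1$ the inequality is vacuous or handled by the boundary convention that the user is always inactive). Hence $s^\star$ lies in $\{ s \mid E[r;s]\ge E[r;s+1]\}$, and therefore $\min\{ s \mid E[r;s]\ge E[r;s+1]\} \le s^\star$.

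For (ii), suppose for contradiction that some $s' < s^\star$ satisfies $E[r;s'] \ge E[r;s'+1]$. Since $s' \le s^\star - 1$, Proposition~\ref{propos1} gives $E[r;s'] \le E[r;s'+1]$, so in fact $E[r;s'] = E[r;s'+1]$. Under the standing assumption (stated just before Definition~\ref{defopt}) that the optimal threshold policy is unique, the maximizer of $E[r;\cdot]$ is unique, so two distinct thresholds cannot both attain the same value on a plateau adjacent to — and hence forced by monotonicity to be at least as large as everything up through — the maximum; more precisely, applying Proposition~\ref{propos1} repeatedly from $s'+1$ up to $s^\star$ yields $E[r;s'+1] \le E[r;s^\star]$, and combined with $E[r;s']=E[r;s'+1]$ this produces two distinct optimal thresholds unless the chain of inequalities is strict somewhere, contradicting uniqueness. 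Hence no such $s'$ exists, so $\min\{ s \mid E[r;s]\ge E[r;s+1]\} \ge s^\star$. Combining with (i) gives \eqref{def:1}.

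The main obstacle I anticipate is the boundary bookkeeping and the precise role of the uniqueness hypothesis: one must be careful that \eqref{eq:ers} and Proposition~\ref{propos1} are stated over the full admissible range $s \in \{1,\dots,M+1\}$ (recall $s=M+1$ encodes "always inactive"), that the comparison $E[r;s^\star]\ge E[r;s^\star+1]$ is well-defined at the endpoint, and that "unique optimal threshold policy" is exactly what rules out a flat plateau of maximizers. If one does not want to invoke uniqueness, an alternative is to prove directly from \eqref{eq:ers} that the sign of the forward difference $E[r;s+1]-E[r;s]$ changes at most once (strict unimodality up to the possibility of a single tie at the peak), which would also pin down the minimum; but given that the paper already assumes uniqueness, the short argument above suffices.
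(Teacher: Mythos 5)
Your proposal follows the same basic route as the paper: the paper proves Proposition~\ref{propos1b} as a direct consequence of Proposition~\ref{propos1} (the unimodality of $E[r;s]$ around $s^{\star}$, proved in Appendix~\ref{app:propos1b} by induction from the closed form), and your part (i) is exactly the observation, also the base case of that induction, that Definition~\ref{defopt} gives $E[r;s^{\star}]\ge E[r;s^{\star}+1]$. The genuine gap is in your part (ii). Proposition~\ref{propos1} only gives the weak inequality $E[r;s']\le E[r;s'+1]$ for $s'<s^{\star}$, so you correctly reduce to the case of a tie $E[r;s']=E[r;s'+1]$; but your appeal to Assumption~\ref{a:app2} does not close this case. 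Uniqueness of the optimal threshold forbids two \emph{maximizers}; a tie strictly below the maximum value is not by itself two maximizers. Your own caveat, ``unless the chain of inequalities is strict somewhere,'' concedes precisely the scenario your argument cannot handle: if the chain from $s'+1$ up to $s^{\star}$ were strict somewhere, there would be no contradiction with uniqueness, and yet $s'$ would lie in $\{s \mid E[r;s]\ge E[r;s+1]\}$ with $s'<s^{\star}$, falsifying \eqref{def:1}. So the minimality half of the claim is not actually proved as written.

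What is missing is a lemma showing that a tie below $s^{\star}$ cannot be followed by a strict increase, i.e., that ties propagate upward. This is available from the machinery already in the appendix: by the identity \eqref{eq:ers4} (derived from \eqref{diffreward}), $E[r;s']=E[r;s'+1]$ forces $E[r;s'+1]$ to equal the corresponding geometrically weighted tail sum of $U$; since $U$ is non-increasing and non-negative (with the normalization $U(M)=0$), these tail sums are non-increasing in their starting index, and since $E[r;\cdot]$ is non-decreasing up to $s^{\star}$ by Proposition~\ref{propos1}, applying the same identity one step higher yields $E[r;s'+1]\ge E[r;s'+2]$, hence equality again; iterating gives $E[r;s']=\cdots=E[r;s^{\star}]$, i.e., two optimal thresholds, and only at that point does Assumption~\ref{a:app2} produce the contradiction. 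Your fallback suggestion (showing directly from \eqref{eq:ers} that the forward difference changes sign at most once) is the same repair in different clothing; either way this extra step is needed, and it is what the paper's one-line proof implicitly leans on through Proposition~\ref{propos1}.
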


\begin{figure}
\center
\includegraphics[scale=0.42]{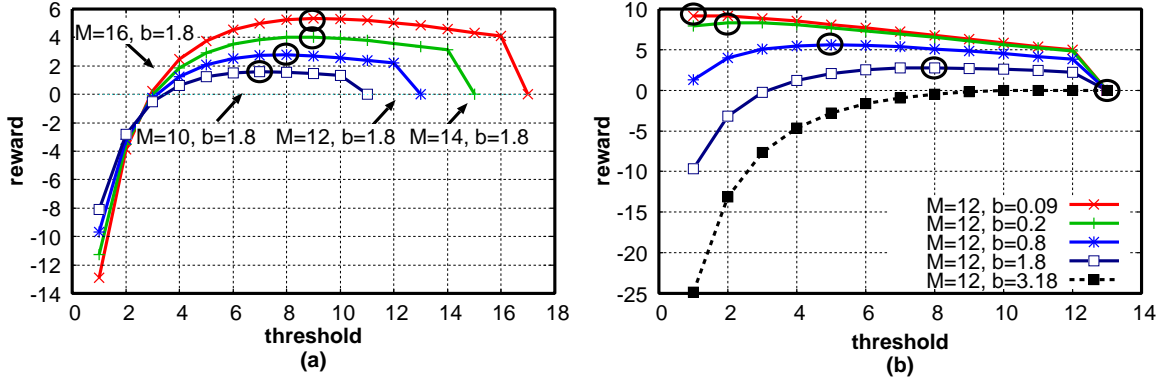}
\caption{Model numerical evaluation ($U(x)=M-x$, $P=0$, $p=0.54$).}
\label{ploted}
\end{figure}

Finally, Proposition~\ref{increasing} formalizes the monotonicity of $s^{\star}$ with respect to $G$  observed in~\textsection\ref{eval}.
\begin{proposition}  The optimal threshold $s^{\star}$ is non-decreasing with respect to  $G$ and $P$ and non-increasing with respect to~$B$. \label{increasing} \end{proposition}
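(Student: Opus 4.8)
The plan is to work directly from the closed-form expression for $E[r;s]$ in~\eqref{eq:ers} together with the characterization of $s^\star$ in Proposition~\ref{propos1b}. The key observation is that the costs $G$, $P$ and $B$ enter $E[r;s]$ only through the single additive term $\pi_1\bigl(-G/p - P + B\bigr)$, where $\pi_1 = \pi_1(s) = (s + (1-p)/p)^{-1}$ is strictly decreasing in $s$. Write $E[r;s] = \pi_1(s)\,W(s) + \pi_1(s)\bigl(-G/p - P + B\bigr)$, where $W(s) = \sum_{x=1}^{s-1}U(x) + \sum_{i=0}^{M-1-s}U(i+s)(1-p)^i$ collects the utility terms and does not depend on $G$, $P$ or $B$. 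Since by Proposition~\ref{propos1b} the optimal threshold is the first $s$ at which the increment $\Delta(s) := E[r;s+1] - E[r;s]$ turns nonpositive, it suffices to understand how $\Delta(s)$ depends on the parameters.

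First I would compute $\Delta(s)$ and isolate the parameter-dependent part. Because the utility part contributes $\pi_1(s+1)W(s+1) - \pi_1(s)W(s)$, which is parameter-free, the dependence of $\Delta(s)$ on $G,P,B$ is entirely through $\bigl(\pi_1(s+1) - \pi_1(s)\bigr)\bigl(-G/p - P + B\bigr)$. Since $\pi_1$ is strictly decreasing, $\pi_1(s+1) - \pi_1(s) < 0$, so this term is a strictly increasing function of $G$, a strictly increasing function of $P$, and a strictly decreasing function of $B$. Hence $\Delta(s)$ itself is nondecreasing in $G$ and $P$ and nonincreasing in $B$, for every fixed $s$ (with $1\le s\le M$).

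Now I would translate a monotone shift in $\Delta(\cdot)$ into a monotone shift in $s^\star$. Suppose $G$ increases to $G'$. Then $\Delta(s)$ evaluated at $G'$ is at least $\Delta(s)$ evaluated at $G$, for every $s$. By Proposition~\ref{propos1b}, $s^\star(G) = \min\{s : \Delta(s)\le 0\}$. For any $s < s^\star(G)$ we have $\Delta(s) > 0$ at parameter $G$; I will need to argue that increasing $G$ cannot make $\Delta(s)$ become $\le 0$ at such an $s$ — but that is exactly the wrong direction, since $\Delta$ moved up, not down, so $\Delta(s)$ stays positive, and therefore $s^\star(G') \ge s^\star(G)$. (One should be slightly careful with the boundary and with the ``always inactive''/``always active'' regimes of Proposition~\ref{propoalin}: increasing $G$ can only push the system toward the always-inactive case, i.e., toward $s^\star = M+1$, which is consistent with monotonicity, and decreasing $B$ likewise; these degenerate cases are handled by noting that conditions~\eqref{cond:alwaysactive} and~\eqref{cond:alwaysinactive} are themselves monotone in $G,P,B$ in the claimed directions.) The argument for $P$ is identical, and the argument for $B$ is the same with the inequality reversed.

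The main obstacle I anticipate is not any single estimate but the bookkeeping around the claim ``$\Delta(\cdot)$ shifts up pointwise $\Rightarrow$ the first sign change moves right.'' This is immediate once one knows $\Delta(s)$ is being compared at the \emph{same} $s$ across the two parameter values, but it requires that Proposition~\ref{propos1b} genuinely characterizes $s^\star$ as a first-crossing (which it does, given the non-decreasing-then-the-rest structure established in~\myapp{app:propos1b}), and it requires ruling out the possibility that $E[r;\cdot]$ is non-monotone \emph{after} $s^\star$ in a way that creates a second, earlier crossing when parameters change — but Proposition~\ref{propos1b}'s ``$\min$'' already commits us to the earliest crossing, so this cannot happen. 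A clean way to sidestep the subtlety entirely is to prove the contrapositive at the level of $s^\star$ directly: fix $s_0 = s^\star$ at the larger value of $B$ (say); then $\Delta(s_0 - 1) > 0$ there, hence $\Delta(s_0-1) > 0$ at the smaller $B$ as well, so $s^\star$ at the smaller $B$ is $\ge s_0$ — giving $s^\star$ nonincreasing in $B$. I would present the $B$ case in this form and remark that $G$ and $P$ are symmetric.
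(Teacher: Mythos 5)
Your proof is correct, but it takes a different route from the paper's. The paper also starts from the closed form \eqref{eqersspecial}, but it exploits it through the derivative $\frac{d}{dG}E[r;s]=-\frac{1}{ps+1-p}$, which is increasing in $s$: this gives strictly increasing differences of $E[r;s,G]$ in $(s,G)$, and the monotonicity of $s^{\star}$ is then obtained by a Topkis-style exchange argument by contradiction (if the cost rises from $G$ to $G+\Delta G$ while the optimal threshold drops from $s_1$ to $s_0<s_1$, the two optimality inequalities combine to contradict the strict increasing-differences inequality), working directly with the argmax definition of $s^{\star}$ and never invoking Proposition~\ref{propos1b}. You instead isolate the single additive parameter term $\pi_1(s)\left(-G/p-P+B\right)$, show that the increment $\Delta(s)=E[r;s+1]-E[r;s]$ shifts monotonically in $G$, $P$, $B$ because $\pi_1(s+1)-\pi_1(s)<0$, and then move the first crossing of Proposition~\ref{propos1b}; this handles all three parameters in one stroke and gives a transparent ``shift the crossing'' picture, but it leans on the first-crossing characterization (hence on the unimodality result behind it), whereas the paper's exchange argument is independent of that machinery. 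Your handling of the boundary cases ($s^{\star}=1$, $M$, $M+1$, and the always-active/always-inactive regimes) is adequate, with the small caveat that in your contrapositive for $B$ you should note that $\Delta(s)>0$ for \emph{all} $s<s_0$ at the larger $B$ (which the $\min$ characterization gives you), not just at $s_0-1$. Both arguments ultimately rest on the same structural fact that the parameters enter $E[r;s]$ only through $-\pi_1(s)\left(G/p+P-B\right)$ with $\pi_1$ strictly decreasing in $s$.
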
 

\vspace{-0.25in}

\subsection{Special Utility Functions}  \label{sec:special}

\vspace{-0.05in}

Next, we specialize our results to two classes of utilities.

\subsubsection{Step Utility} Let the utility function $\utility(x)$ be given by a step function, $\utility(x)= v$ if  $x \leq k$ and $U(x)=0$ otherwise. 
The next proposition allows us to efficiently compute the optimal threshold in this case.   Let $\varphi$ be the root of $\frac{d}{ds} E[r;s]=0$ when $k \leq  s$.  The  closed form expression of $\varphi$ is given in~\myapp{app:step}. 
\begin{proposition} \label{propo:step}
For the step function utility, the optimal threshold $s^{\star}$ is equal to either 1, $k-1$, $\lfloor \varphi \rfloor$, $\lceil  \varphi \rceil$, $M$ or $M+1$.
\end{proposition}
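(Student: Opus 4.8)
The plan is to analyze the function $s \mapsto E[r;s]$ given in closed form by~\eqref{eq:ers}, restricting attention to the step utility $\utility(x) = v \cdot \mathbf{1}_{x \le k}$, $U(M)=0$. By Proposition~\ref{propos1b}, $s^{\star}$ is the first $s$ at which $E[r;s]$ stops increasing, so it suffices to understand the sign pattern of the forward differences $E[r;s+1]-E[r;s]$, or equivalently the behavior of $E[r;s]$ viewed as a function of a continuous variable $s$ on the relevant interval. The key structural observation is that the summands in~\eqref{eq:ers} depend on whether the threshold $s$ lies to the left or to the right of the step point $k$: for $s \le k$ the truncated sum $\sum_{x=1}^{s-1}\utility(x)$ contributes $v(s-1)$ and the geometric tail $\sum_{i=0}^{M-1-s}\utility(i+s)(1-p)^i$ picks up the first few terms at level $v$; for $s \ge k+1$ the first sum saturates at $vk$ and the tail vanishes (since $\utility(i+s)=0$ for all $i\ge 0$ when $s>k$). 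So the analysis naturally splits into the two regimes $s \le k$ and $s \ge k$.

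First I would treat the regime $s \ge k+1$. There $E[r;s] = \pi_1\big[vk - G/p - P + B\big]$ with $\pi_1 = (s + (1-p)/p)^{-1}$, a strictly decreasing function of $s$ whenever the bracket is positive and strictly increasing (toward $0$) when it is negative; in either case $E[r;s]$ is monotone on this range, so no interior optimum can occur for $s$ strictly between $k+1$ and $M$ — the only candidates contributed by this regime are the endpoints $s=k-1$ (handled in the other regime's closure), $s=M$, and $s=M+1$ (always inactive). Actually, more care is needed: when $s=k$ the tail $\sum_{i=0}^{M-1-s}\utility(i+s)(1-p)^i$ has exactly the single nonzero term $\utility(k)=v$, so $s=k$ sits at the boundary; I would fold $s=k$ and $s=k-1$ into the left regime. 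Next I would treat $s \le k$. Substituting $\utility(x)=v$ for $x \le k$ into~\eqref{eq:ers}, the bracket becomes $v(s-1) + v\sum_{i=0}^{k-s}(1-p)^i - G/p - P + B = v(s-1) + v\frac{1-(1-p)^{k-s+1}}{p} - G/p - P + B$, a smooth function of $s$; multiplying by $\pi_1(s)$ gives $E[r;s]$ as an explicit differentiable function $f(s)$ of a real variable $s$ on $[1,k]$. I would then compute $f'(s)$, show (by the form of the expression — a polynomial-times-geometric numerator over a linear denominator) that $f'(s)=0$ reduces to a transcendental equation whose solution is exactly the quantity $\varphi$ defined via the Lambert $W$ function in~\myapp{app:step}, and argue that $f$ is unimodal on $[1,k]$ (increasing then decreasing), so that its maximum over the integer points of $[1,k]$ is attained at $1$, at one of $\lfloor\varphi\rfloor$, $\lceil\varphi\rceil$ (if $\varphi \in (1,k)$), or at $k$ — and $k$ can be absorbed into the adjacent candidate $k-1$ together with the boundary value carried over from the right regime. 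Combining the two regimes: the global maximizer $s^{\star}$ must be one of $1$, $k-1$, $\lfloor\varphi\rfloor$, $\lceil\varphi\rceil$, $M$, or $M+1$, which is the claim.

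The main obstacle I anticipate is establishing the unimodality of $f(s)$ on $[1,k]$ cleanly — i.e., showing that $f'$ changes sign at most once. Since $f(s) = N(s)/D(s)$ with $D(s)=s+(1-p)/p>0$ linear and $N(s)$ a sum of a linear term and a term of the form $(\text{const})\cdot(1-p)^{-s}$ times a linear factor, $f'(s)$ has the sign of $N'(s)D(s)-N(s)D'(s)$, whose zero set one must show is a singleton. The cleanest route is probably to show directly that the equation $\frac{d}{ds}E[r;s]=0$ on $s \le k$ has a unique real root $\varphi$ (this is presumably what the Lambert-$W$ derivation in~\myapp{app:step} already delivers), and to supplement it with sign checks of $f'$ at the endpoints $s=1$ and $s=k$ to pin down which integer neighbors of $\varphi$ (or which endpoint) actually realize the max. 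I would also need to double-check the boundary bookkeeping at $s=k$ versus $s=k-1$ so that no candidate is lost: the value of $E[r;\cdot]$ on the right regime at its left endpoint should coincide with the left regime's value, so listing $k-1$ (rather than $k$) together with the right-regime endpoints is safe. Everything else — the two case substitutions into~\eqref{eq:ers} and the monotonicity on $s\ge k+1$ — is routine algebra.
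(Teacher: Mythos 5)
Your plan follows the paper's own proof essentially verbatim: split the analysis at the step point $k$, observe that for $s$ beyond the step the reward is $\pi_1\left[vk - G/p - P + B\right]$ and hence monotone in $s$ (so no interior optimum there), and on the other regime treat $s$ as real, locate the critical point $\varphi$ of the closed-form expression via the Lambert-$W$ root of $\frac{d}{ds}E[r;s]=0$, and round to $\lfloor\varphi\rfloor,\lceil\varphi\rceil$ while adding the boundary candidates $1$, $k-1$, $M$, $M+1$. The two delicate points you flag --- uniqueness of the root (the paper simply speaks of ``the root'') and the boundary bookkeeping at $s=k$ versus $k-1$ --- are handled no more explicitly in the paper's proof than in your sketch, so your attempt is the same route at essentially the same level of rigor (and your assignment of the two formulas to the regimes $s\le k$ and $s\ge k+1$ is in fact the consistent one, the appendix's case labels notwithstanding).
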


\subsubsection{Linear Utility}

Let $\utility(x)=M-x$. Then, $E[r;s]$ can be expressed in closed form as a function of $s$, $M$, $p$ and $G$ (see~\myapp{app:linear}).    To illustrate the behavior of $E[r;s]$, we let $p=0.54$ (see Figure~\ref{intercontactb}), $P=B=0$, and vary $b$ and $M$ as shown in the legend of Figure~\ref{ploted}.   Figure~\ref{ploted}(a) was discussed in~\textsection\ref{sec:refermodel} in light of our trace-driven simulations.     Figure~\ref{ploted}(b)   shows that when    $b=0.09$ (resp., $b=3.18$),  inequality ~\eqref{cond:alwaysactive} (resp., ~\eqref{cond:alwaysinactive}) holds and the optimal threshold is 1 (resp., 13), in agreement to Proposition~\ref{propoalac}.  In accordance to Proposition~\ref{propos1b}, Figure~\ref{ploted}(b)  also indicates that when $s < s^{\star}$ the reward is increasing.  Finally, the optimal threshold in Figure~\ref{ploted}(b)    increase as a function of $b=G/(M-1)$, which serves to illustrate Proposition~\ref{increasing}.

\section{The  Publisher Bonus Package}

In this section we  consider strategic publishers that offer bonus packages to users, so as  to incent them to download updates  of  advertisement campaigns or unpopular content.    In~\textsection\ref{complete} we assume  that publishers have complete information on the system parameters and consider the incomplete information case in~\textsection\ref{incomplete}.

\label{maker}

\subsection{Complete Information}

\label{complete}

Next, we consider publishers that, while devising their optimal bonus strategies,  leverage the fact that users solve the {\ensuremath{\mbox{\sc  User Problem}}}.  The optimal bonus strategy consists of finding  the bonus level $B$ that minimizes the average age of messages in the network,  under the constraint that the expected number of messages  transmitted  per time slot is below a given budget, dictated by the service provider.

Let $N$ be the number of users in the network. In what follows, we make the dependence of $s^{\star}$ and $\pi_1$ (see~\eqref{exavgre}) on the bonus level $B$ explicit.  Let $Q$ be the average number of messages transmitted per time slot.
\begin{equation}
Q= N  \pi_1(B) = N  / (s^{\star}(B)+(1-p)/p)
\end{equation}
Let $A$ be the average age of messages in the network (see~\myapp{app:expectedage} for its closed-form expression). Let $T$ be the constraint imposed by the service provider on the expected number of messages sent per time slot.  Then, the problem faced by the publisher~is

\medskip
\boxed{
\begin{minipage}{6.3in}
{\ensuremath{\mbox{\sc  Publisher Problem:}}} Obtain bonus level $B$ so as to
\begin{eqnarray}
\min && A=\sum_{i=1}^{M} i \pi_i(B)  \label{objectivepublisher} \\
s.t. && Q \leq T \label{ineq2}
\end{eqnarray}
\end{minipage}
}

Note that since $s^{\star}(B)$ is not injective, there might be a range of bonuses levels that solves the {\ensuremath{\mbox{\sc  Publisher Problem}}}.   Under the assumption that the above problem admits a solution, which is guaranteed to exist if $s(P) \ge {N}/{T}-({1-p})/{p}$, 
\begin{proposition}  \label{optimalpubprob}
The solution of the {\ensuremath{\mbox{\sc  Publisher Problem}}} consists of setting the bonus level $B$ in such a way that  $s=\max(0, \min(M+1, \lceil N/T - (1-p)/p \rceil ))$.  The solution can be found using a binary search algorithm.
\end{proposition}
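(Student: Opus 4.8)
The plan is to exploit the structure already established for the {\ensuremath{\mbox{\sc User Problem}}}: the optimal threshold $s^\star(B)$ is a well-defined integer in $\{1,\dots,M+1\}$, it is non-increasing in $B$ (Proposition~\ref{increasing}), and $\pi_1(B) = (s^\star(B) + (1-p)/p)^{-1}$ (see~\eqref{exavgre}). The first observation is that both the objective $A$ and the constraint $Q \le T$ are monotone in the single scalar $s = s^\star(B)$. Indeed $Q = N/(s + (1-p)/p)$ is strictly decreasing in $s$, so the constraint $Q \le T$ is equivalent to $s + (1-p)/p \ge N/T$, i.e. $s \ge N/T - (1-p)/p$; since $s$ is an integer this is $s \ge \lceil N/T - (1-p)/p \rceil$. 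On the other hand, the average age $A = \sum_i i\,\pi_i(B)$ is non-decreasing in $s$: a larger threshold means the user waits longer before refreshing, so each age-cycle is stochastically larger. I would verify this either directly from the closed form of $A$ in~\myapp{app:expectedage} (differentiate with respect to $s$ and check the sign) or by the more transparent coupling argument that the stationary age distribution under threshold $s+1$ stochastically dominates that under threshold $s$.

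Given these two monotonicities, the {\ensuremath{\mbox{\sc Publisher Problem}}} reduces to: minimize a non-decreasing function of $s$ subject to $s \ge s_0$, where $s_0 := \lceil N/T - (1-p)/p \rceil$. The minimizer is therefore the smallest feasible value of $s$, namely $s = s_0$, clipped to the admissible range $[1, M+1]$ (or $[0,M+1]$ as written) of threshold values — hence $s = \max(0,\min(M+1, \lceil N/T - (1-p)/p\rceil))$, which is exactly the claimed expression. The stated existence condition $s(P) \ge N/T - (1-p)/p$ is precisely what guarantees that this target value of $s$ is attainable as $s^\star(B)$ for some admissible bonus $B \le \min(P_{3G},P)$: when $B = 0$ the user faces the unsubsidised price, giving the largest optimal threshold $s(P)$, and as $B$ increases $s^\star(B)$ decreases; so every integer between the minimal achievable threshold and $s(P)$ is hit, and in particular $s_0$ is, provided $s_0 \le s(P)$.

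The last step is the algorithmic claim. Since $s^\star(B)$ is a non-increasing step function of $B$ on $[0,\min(P_{3G},P)]$, one can locate a bonus level $B^\star$ with $s^\star(B^\star)$ equal to the target value $s_0$ (equivalently, the largest $B$ with $s^\star(B) \ge s_0$, or any $B$ in the corresponding plateau) by binary search on $B$: at each step evaluate $s^\star(B)$ — itself computable from Proposition~\ref{propos1b} via~\eqref{eq:ers} — and move the search interval according to whether $s^\star(B)$ exceeds or falls below $s_0$; this converges geometrically to an interval of bonus values realising $s = s_0$. The main obstacle, and the only place real work is needed, is the monotonicity of $A$ in $s$; the reduction to a one-dimensional monotone optimisation and the binary-search argument are then routine. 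A secondary point to handle carefully is the clipping and the non-injectivity of $s^\star(B)$ (flagged in the remark preceding the proposition): one must note that any $B$ in the plateau mapping to $s_0$ is optimal, so the solution set may be an interval rather than a point, but every such $B$ yields the same $(A,Q)$ and hence the same optimum.
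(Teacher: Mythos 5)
Your proposal follows essentially the same route as the paper: monotonicity of the average age $A$ in $s$ (the paper's Proposition~\ref{ageincrs}), monotonicity of $s^{\star}$ in $B$ (Proposition~\ref{increasing}), reduction of the constraint $Q\leq T$ to $s\geq \lceil N/T-(1-p)/p\rceil$ clipped to the admissible range, and binary search on $B$ exploiting the non-increasing step structure of $s^{\star}(B)$. The only piece you defer --- that $A$ is increasing in $s$ --- is exactly what the paper establishes separately (by algebraic manipulation of the closed form), and your proposed closed-form or stochastic-dominance verification is a routine and adequate substitute, so the argument is correct and not materially different.
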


\subsection{Incomplete Information: Online Learning Algorithm}

\label{incomplete}
How publishers can set their bonus levels without knowing the number of users in the system and their  	strategies?  To answer this question, we present a simple learning algorithm to solve the {\ensuremath{\mbox{\sc  Publisher Problem}}} when the system parameters are unknown.

\begin{algorithm}
\caption{Online estimation of optimal bonus level.}
\label{alg:algorit}
\begin{algorithmic}[1]
\begin{small}
\STATE \textbf{Input: }  maximum bonus level $\widehat{B}$, target number of messages per slot $T$, round duration $\tau$ time slots, learning rate $\alpha$ 
\STATE Choose initial bonus level $B_0$ such that $B_0\in [0,\widehat{B}]$; $t \leftarrow 0$ \WHILE {$|T - Q_t| > \epsilon$}
\STATE \emph{At the end of round $t$,}
\STATE $Q_t \leftarrow R_t/\tau$ 
\STATE $ B_{t+1} \leftarrow \min(\widehat{B},\max(0, \left(B_t+  \alpha\left( T -  Q_t \right)/t\right)))$
\STATE $t \leftarrow t+1$
\ENDWHILE
\end{small}
\end{algorithmic}
\end{algorithm}

\begin{figure}
\center
\includegraphics[scale=0.41]{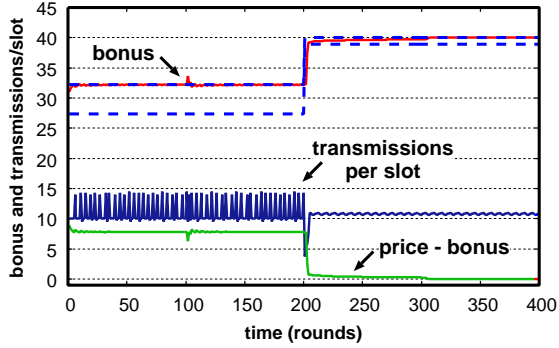}
\caption{Learning algorithm: sample path of trace-driven simulation} \label{fig:learning}
\end{figure}

The proposed algorithm proceeds in rounds.  Each round corresponds to $\tau$ time slots,  at which users have their requests served.  Let $B_t$ be the bonus set by the publisher at the beginning of round $t$, and let  $R_t$ be the number of requests served at that round.  The average number of requests served per time slot in round $t$ is  $Q_t = R_t/\tau$.

Algorithm~\ref{alg:algorit} updates the bonus level as follows.  At round $t$, if the average number of requests served  per time slot is above the target $T$, the bonus level is decreased by $\alpha (Q_t-T)/t$; otherwise, the bonus level is increased by $\alpha (T-Q_t)/t$.
The learning step size,  $\alpha$, is a learning parameter that impacts the algorithm convergence time. Smaller values of $\alpha$ yield a smoother but slower convergence~\cite[Chapter 5]{kushner}.   
The bonus level is required to be positive,  and cannot surpass the maximum bonus level, $\widehat{B}$ (line 6 in Algorithm~\ref{alg:algorit}). 
The algorithm stops when $|T-Q_t| \leq \epsilon$, where $\epsilon >0$ is the  tolerance parameter~(line~3).   Under the assumption that the system parameters (see Table~\ref{notation}) are fixed,   
\begin{proposition} \label{proposition:converge} \vspace{-0.05in}
The sequence of bonus levels $\{ B_t \}_{t=0}^{\infty}$ converges to the optimal solution of the {{\ensuremath{\mbox{\sc  Publisher Problem}}} }  with probability one. \end{proposition}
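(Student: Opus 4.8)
The plan is to cast Algorithm~\ref{alg:algorit} as a stochastic approximation recursion of Robbins--Monro type with a decreasing step size $\alpha/t$, and to invoke the ODE method (together with results on differential inclusions, since the map $B \mapsto s^\star(B)$ is piecewise constant and hence the mean-field vector field is set-valued). Write the update in line~6 as $B_{t+1} = \Pi_{[0,\widehat B]}\big(B_t + (\alpha/t)(T - Q_t)\big)$, where $\Pi_{[0,\widehat B]}$ is projection onto the feasible interval. The first step is to decompose $T - Q_t = h(B_t) + \xi_t$, where $h(B) \dfn T - N\pi_1(B) = T - N/(s^\star(B) + (1-p)/p)$ is the conditional expected drift given the current bonus (recall that, by Proposition~\ref{propos1b} and the stationary analysis leading to~\eqref{exavgre}, once $B$ is fixed the users settle into the threshold policy $s^\star(B)$ and the steady-state request rate is $N\pi_1(B)$), and $\xi_t \dfn \mathbb{E}[Q_t \mid \mathcal{F}_t] - Q_t$ is a martingale-difference noise term. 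The second step is to verify the standing hypotheses of the stochastic-approximation convergence theorem from~\cite[Chapter 5]{kushner}: the step sizes satisfy $\sum_t \alpha/t = \infty$ and $\sum_t (\alpha/t)^2 < \infty$; the iterates are bounded (automatic, since $B_t \in [0,\widehat B]$ by the projection); and the noise $\xi_t$ has bounded conditional second moment (it does, since $R_t \le N\tau$, so $Q_t \in [0,N]$).

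The third step — and the one requiring the differential-inclusion machinery rather than a plain ODE argument — is to analyze the limiting dynamics. Because $s^\star(\cdot)$ takes integer values and is non-increasing in $B$ (Proposition~\ref{increasing}), $h$ is a piecewise-constant, non-increasing function of $B$: it is positive on the set of $B$ for which $s^\star(B) + (1-p)/p > N/T$, negative where the reverse strict inequality holds, and it jumps across zero at the bonus values where $s^\star$ changes. The associated projected differential inclusion $\dot B \in h(B) + \partial\Pi(B)$ (Filippov regularization at the jump, plus the boundary term from the projection) has as its unique globally asymptotically stable set exactly the solution set of the {\ensuremath{\mbox{\sc  Publisher Problem}}}: by Proposition~\ref{optimalpubprob} this is the set of bonus levels $B$ with $s^\star(B) = \max(0,\min(M+1,\lceil N/T - (1-p)/p\rceil))$, which is precisely where the set-valued drift contains $0$ — on one side the drift pushes $B$ up, on the other it pushes $B$ down, so this level set is a Lyapunov-stable attractor (a candidate Lyapunov function is $V(B) = \mathrm{dist}(B, \mathcal{B}^\star)$, or more concretely a convex function of $N\pi_1(B) - T$, which decreases along the inclusion). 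Standard Lyapunov arguments for stochastic approximation with constraints then give $B_t \to \mathcal{B}^\star$ a.s., which is the assertion; the stopping condition $|T - Q_t| \le \epsilon$ in line~3 is met in finite time on almost every sample path once $B_t$ is inside $\mathcal{B}^\star$ and the empirical rate $Q_t$ has concentrated.

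I expect the main obstacle to be the non-smoothness: $h$ is discontinuous at the finitely many bonus thresholds where $s^\star$ jumps, so the classical ODE method does not apply verbatim and one must (i) justify the Filippov / differential-inclusion formulation of the mean dynamics, (ii) argue that the noise is "sufficiently spread" that the iterate does not get pinned at a discontinuity point off the optimal set (here the boundedness of $s^\star$ to integers actually helps — the attractor $\mathcal{B}^\star$ is an entire interval of bonus levels, typically with nonempty interior, so there is no need to balance on a knife's edge), and (iii) handle the interaction of the projection $\Pi_{[0,\widehat B]}$ with the inclusion, invoking the Kushner--Yin results for \emph{constrained} stochastic approximation. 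A secondary, more mundane point is the implicit timescale separation: each "round" is $\tau$ slots, which must be taken long enough (or the analysis carried out on the slower clock indexed by rounds) for the users' Markov chain to be adequately mixed so that $\mathbb{E}[Q_t\mid\mathcal F_t]$ is genuinely close to $N\pi_1(B_t)$; this contributes an additional vanishing bias term that is absorbed into $\xi_t$ without affecting the limit.
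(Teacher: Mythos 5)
Your proposal is correct and follows essentially the same route as the paper's proof: a Robbins--Monro recursion with martingale-difference noise, a Filippov differential inclusion to handle the piecewise-constant drift induced by the integer-valued $s^{\star}(B)$, and a Lyapunov argument showing the solution set of the {\ensuremath{\mbox{\sc Publisher Problem}}} is the globally stable attractor. The only differences are minor technical choices -- you keep the projection onto $[0,\widehat{B}]$ and cite Kushner--Yin constrained stochastic approximation, whereas the paper drops the $\min/\max$ operators and establishes $\sup_n|B_n|<\infty$ via Borkar's scaling condition before invoking his differential-inclusion convergence theorem -- and your explicit treatment of the within-round mixing bias, which the paper leaves implicit.
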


\vspace{-0.05in}

According to Proposition~\ref{proposition:converge}, the convergence of Algorithm~\ref{alg:algorit} is assured when the system parameters are fixed.    In order to 1) study the behavior of our algorithm when the population size varies and to 2) investigate how the convergence speed may be affected by the correlation among users, we conducted trace-driven simulations, whose results  are   reported in~\myapp{app:simulation}.  Next, we use Figure~\ref{fig:learning} to illustrate some of our findings.

Figure~\ref{fig:learning} shows a sample path of our  trace-driven simulations.  We let $\alpha=1$, $M=30$, which corresponds to 2:30 hours (see Figure~\ref{intercontacta}(a)),  $\tau=100$,  which corresponds to up to  3  updates per day, $p=0.54$ (see Figure~\ref{intercontacta}(b)), $G=0.4$, $P=\widehat{B}=40$ and $T=11$.   The number of users is initially 50, and decreases to 20 at round 200.   We assume that users solve the  {\ensuremath{\mbox{\sc  User Problem}}}   when setting their activation strategies, and     Algorithm~\ref{alg:algorit} is run by the publisher every 100 rounds.  We  consider half of the population in one bus and the other half in another (see~\myapp{app:simulation}).  Despite the correlations among users,  Figure~\ref{fig:learning} does not qualitatively deviate from the results obtained with  uncorrelated users.  In particular, the algorithm converges in up to 20 rounds, and the number of transmissions per slot varies between 9 and 11. The oscillations are not always centered at $T=11$ due to the fact that the  threshold adopted by the users is integer valued, which might prevent inequality~\eqref{ineq2} to~bind.     Note that the bonus level converges to values in the optimal range, obtained using Proposition \ref{optimalpubprob} and marked with dotted lines in Figure~\ref{fig:learning}.  When $N=20$, the publisher  sponsors the service provider costs ($B=P$), and users make their updates for~free.

\vspace{-0.05in}
\section{Discussion of Assumptions, Limitations and Future Work}

Next, we discuss the main simplifications adopted to yield a tractable model (Assumption~\ref{assum:uniform} is discussed in details in ~\textsection\ref{eval}, so we do not include it in the following list).

\label{limitations}

\textbf{Frequent decisions assumption: }   We assume that users are interested in maximizing their expected average rewards. This assumption is appropriate if decisions are made frequently.  In our  measurement study,  we assumed that users make decisions every 5 minutes, and we observed that for a vast number of bus shifts the system parameters are stable over hours.  

\textbf{High load assumption: }  We assume that updates are issued every time slot with high probability. As discussed in~\textsection\ref{sec:workload}, this assumption holds for certain news feeds categories such as \emph{top stories}.  
The high load assumption also holds if users synchronize the updates of multiple applications, using APIs such as OVI Notifications~\cite{ovi}.  In this case, the larger the number of applications that require updates, the higher the chances of at least one update being available every time slot.  If multiple application updates can be performed at a single time slot, the bundle being subject to no additional costs or bonus, our model holds without modifications.  However, if the interaction among the applications is non trivial, new policies need to be devised accounting for decisions such as when  and from whom to download updates from each of the applications (see~\textsection\ref{related} for complementary work on that topic).

\textbf{Self-regarding users assumption: } We assume that users do not collaborate with each other.  Although researchers are interested in leveraging collaboration among users~\cite{pocket,giovanni},  a large number of mobile systems still does not take advantage of peer-to-peer transfers, and users need to download their messages exclusively from access points or base stations~\cite{podcast}.    Our model applies to such systems. Nevertheless, we envision that if peers are roughly uniformly distributed,  peer-assisted opportunistic transfers can be easily captured by our model simply by adapting the contact probability, $p$, in order to  account for peer-to-peer contacts.  Future work consists of  accounting for spatial information when modeling the likelihood of contact opportunities.  

Finally, note that in this paper we do not consider  prediction algorithms in order to infer future opportunities with WiFi access points.   In particular, some of  the algorithms without  prediction previously proposed in the literature, such as \cite[\textsection 6.1.1]{wifler}, are special instances of the aging control policies described in this paper.     As shown in~\textsection\ref{eval}, neglecting the possibility of prediction comes with no significant loss in scenarios of practical interest,  the incorporation of prediction algorithms in our framework being subject for future work. 

\vspace{-0.05in}
\section{Related Work}
\label{related}

The literature on  measuring~\cite{Thedu-mobicom08,Zhang:2007,podcast}, modeling~\cite{chaintreau, pocket, chants, qest, meshes} and control~\cite{ altman1,activation, impatience,giovanni,battery,wu1,wu2} in wireless networks is vast.   Nevertheless, we were not able to find any previous study on the \emph{aging control}  problem as described in this paper.  Previous work accounted for the modeling of aging~\cite{chaintreau} or for the age control by publishers~\cite{altman1, impatience}, but not for users aging control as described in this paper.    We were also not able to find any previous study on the the implications of \emph{bonus packages} set by content and/or service  providers~\cite{sfr}.

Chaintreau \emph{et al.}~\cite{chaintreau} model the distribution of message ages in a large scale mobile network using a spatial mean field approach.  Their model allows the analysis of gossiping through opportunistic contacts. 
In this paper, in contrast, we assume that nodes rely exclusively on base stations and access points in order to receive their updates. 

Activation control strategies were proposed in~\cite{infocom2010, altman1, activation, link}.  In~\cite{altman1} the authors consider publishers of evolving files,  that aim at reducing their  energy expenditure by controlling the probability of transmitting messages to  users. In \cite{activation}, a joint activation and transmission control 
policy is proposed so as to maximize the throughput of  users under energy constraints.   In~\cite{link}, a joint activation and link selection control policy is proposed so as to minimize the energy consumption under delay constraints. Our work differs from~\cite{altman1, activation, link} in two ways as 1) we investigate the activation policy of mobile nodes based on the utility of their  messages and 2) we study the publishers bonus strategy. \eat{,   3)  our analysis is  carried out using  MDPs and 4) we validate our conclusions using traces from UMass DieselNet.  }  In addition, our analysis is carried out using MDPs and is validated using UMass Dieselnet traces (similar methodology is adopted, for instance, by Yang \emph{et al.}~\cite{optimalyang}).

The utility function introduced in this paper corresponds to the impatience function presented by Reich and Chaintreau~\cite{impatience}. 
Reich and Chaintreau~\cite{impatience}  study the implications of delays between requests and services on optimal content replication schemes.      
If users have limited caches and cannot download all the requested files every time they are in range of an access point, 
the insights provided by~\cite{impatience, ioannidis2} need to be coupled with the ones presented here in order to devise the optimal joint activation-replication~strategy.  Therefore, aging control, as described in this paper, significantly complements replication control, as described in~\cite{impatience, ioannidis2}. \eat{, are complementary to each~other.}

\vspace{-0.05in}

\section{Conclusion}

\label{conclusion}

This paper reports our measurement and modeling studies of aging control in hybrid wireless networks.  From the DieselNet measurements, we learned that correlations among contact opportunities do not play a key role if the energy costs incurred by users are small or if users cannot discriminate their strategies based on  bus shifts.  We then modeled and solved the aging control problem, and used trace-driven simulations to show that a very simple threshold strategy derived from our model performs pretty well in practice.  When publishers are strategic, we analyzed the bonus package selection problem, and showed that in some scenarios it is beneficial for publishers to fully sponsor the content updates  requested by users.  We believe that the study of mechanisms to  support   applications that require frequent updates, such as microblogging, in wireless networks, is an interesting field of research, and we see our paper as a first attempt to shed light into the tradeoffs faced by users and publishers of such applications. 

\vspace{-0.08in}

{
\footnotesize
\bibliographystyle{IEEEtran}
\bibliography{dtn1}

\begin{thebibliography}{10}
\providecommand{\url}[1]{#1}
\csname url@samestyle\endcsname
\providecommand{\newblock}{\relax}
\providecommand{\bibinfo}[2]{#2}
\providecommand{\BIBentrySTDinterwordspacing}{\spaceskip=0pt\relax}
\providecommand{\BIBentryALTinterwordstretchfactor}{4}
\providecommand{\BIBentryALTinterwordspacing}{\spaceskip=\fontdimen2\font plus
\BIBentryALTinterwordstretchfactor\fontdimen3\font minus
  \fontdimen4\font\relax}
\providecommand{\BIBforeignlanguage}[2]{{%
\expandafter\ifx\csname l@#1\endcsname\relax
\typeout{** WARNING: IEEEtran.bst: No hyphenation pattern has been}%
\typeout{** loaded for the language `#1'. Using the pattern for}%
\typeout{** the default language instead.}%
\else
\language=\csname l@#1\endcsname
\fi
#2}}
\providecommand{\BIBdecl}{\relax}
\BIBdecl

\bibitem{twitter}
{Twitter Inc.}, ``Twitter,'' 2010, \url{http://twitter.com}.

\bibitem{exchange}
``Exchange activesync,'' 2011, \url{http://www.microsoft.com/}.

\bibitem{wifler}
A.~Balasubramanian, R.~Mahajan, and A.~Venkataramani, ``Augmenting mobile 3g
  using wifi,'' in \emph{MobiSys}, 2010.

\bibitem{banerjee}
N.~Banerjee, M.~Corner, and B.~Levine, ``Energy efficient architecture for dtn
  throwboxes,'' in \emph{Infocom}, 2007.

\bibitem{sfr}
{SFR}, ``B\'{e}n\'{e}ficiez de {B}onus {F}id\'{e}lit\'{e},'' 2010,
  \url{http://www.lepactesfr.fr/}.

\bibitem{misra}
R.~T. Ma, D.~Chiu, J.~C. Lui, V.~Misra, and D.~Rubenstein, ``On cooperative
  settlement between content, transit and eyeball internet service providers,''
  in \emph{CoNEXT}, 2008.

\bibitem{anger}
J.~Wortham, ``Customers angered as iphones overload 3g,'' 2009,
  \url{http://nytimes.com/}.

\bibitem{tmobile}
``{T-Mobile @ Home},'' 2009, \url{http://t-mobile.com/}.

\bibitem{niranjan}
N.~Balasubramanian, A.~Balasubramanian, and A.~Venkataramani, ``Energy
  consumption in mobile phones,'' in \emph{IMC}, 2009.

\bibitem{sms}
``Text messaging,'' 2011, \url{http://en.wikipedia.org/wiki/Text_messaging}.

\bibitem{link}
M.-R. Ra, J.~Paek, A.~B. Sharma, R.~Govindan, M.~H. Krieger, and M.~J. Neely,
  ``Energy-delay tradeoffs in smartphone applications,'' in \emph{MobiSys},
  2010.

\bibitem{paris}
``Free wifi hotspots in {P}aris,'' 2011, \url{http://goparis.about.com/}.

\bibitem{amherst}
``Amherst's community wireless mesh network with free internet access,'' 2011,
  \url{http://www.amherstma.gov/}.

\bibitem{gizmodo}
``The new data plans of tim,'' 2011, \url{http://www.gizmodo.com.br/}.

\bibitem{ovi}
``Ovi notifications api \u2013 intelligent way to stay connected to the
  network,'' 2011, \url{http://blogs.forum.nokia.com}.

\bibitem{juicedefender}
``Juice defender,'' 2011, \url{http://latedroid.com/juicedefender}.

\bibitem{freewifi}
``Free wifi manager,'' 2011, \url{http://www.freewifimanager.fr/}.

\bibitem{puterman}
M.~L. Puterman, \emph{Markov Decision Processes}.\hskip 1em plus 0.5em minus
  0.4em\relax Academic Press, 1994.

\bibitem{Thedu-mobicom08}
A.~Balasubramanianm, B.~N. Levine, and A.~Venkataramani, ``{Enabling
  Interactive Applications for Hybrid Networks},'' in \emph{MOBICOM}, 2008.

\bibitem{kushner}
H.~J. Kushner and G.~G. Yin, \emph{Stochastic Approximation and Recursive
  Algorithms and Applications}.\hskip 1em plus 0.5em minus 0.4em\relax
  Springer, 2003.

\bibitem{pocket}
A.~Chaintreau, P.~Hui, J.~Crowcroft, C.~Diot, R.~Gass, and C.~Diot, ``Pocket
  switched networks,'' in \emph{UCAM-CL-TR-617}, 2005.

\bibitem{giovanni}
E.~Altman, G.~Neglia, F.~D. Pellegrini, and D.~Miorandi, ``Decentralized
  stochastic control of delay tolerant networks,'' in \emph{INFOCOM}, 2009.

\bibitem{podcast}
D.~Gunawardena, T.~Karagiannis, A.~Proutiere, and M.~Vojnovic, ``Characterizing
  podcast services,'' in \emph{IMC}, 2009.

\bibitem{Zhang:2007}
X.~Zhang, J.~Kurose, B.~N. Levine, D.~Towsley, and H.~Zhang, ``{Study of a
  Bus-Based Delay Tolerant Network},'' in \emph{MOBICOM}, 2007.

\bibitem{chaintreau}
A.~Chaintreau, J.-Y.~L. Boudec, and N.~Ristanovic, ``The age of gossip: spatial
  mean field regime,'' in \emph{SIGMETRICS}, 2009.

\bibitem{chants}
K.-W. Kwong, A.~Chaintreau, and R.~Guerin, ``Quantifying content consistency
  improvements through opportunistic contacts,'' in \emph{4th ACM Workshop on
  Challenged Networks}, 2009.

\bibitem{qest}
R.~Bakhshi, J.~Endrullis, S.~Endrullis, W.~Fokkink, and B.~Haverkort,
  ``Automating the mean-field method for large dynamic gossip networks,'' in
  \emph{QEST}, 2010.

\bibitem{meshes}
N.~Banerjee, M.~D. Corner, D.~Towsley, and B.~N. Levine, ``Relays, base
  stations, and meshes: enhancing mobile networks with infrastructure,'' in
  \emph{MOBICOM}, 2008.

\bibitem{altman1}
E.~Altman, P.~Nain, and J.~Bermond, ``Distributed storage management of
  evolving files in delay tolerant ad hoc networks,'' in \emph{INFOCOM}, 2009.

\bibitem{activation}
E.~Altman, A.~P. Azad, T.~Basar, and F.~D. Pellegrini, ``{Optimal Activation
  and Transmission Control in DTNs},'' in \emph{INFOCOM}, 2009.

\bibitem{impatience}
J.~Reich and A.~Chaintreau, ``The age of impatience: optimal replication
  schemes for opportunistic networks,'' in \emph{CONEXT}, 2009.

\bibitem{battery}
I.~Menache and E.~Altman, ``Battery-state dependent power control as a dynamic
  game,'' in \emph{WiOpt}, 2009.

\bibitem{wu1}
Q.~Yuan, I.~Cardei, and J.~Wu, ``Predict and relay: An efficient routing in
  disruption-tolerant networks,'' in \emph{Mobihoc}, 2009.

\bibitem{wu2}
C.~Liu and J.~Wu, ``An optimal probabilistic forwarding protocol in delay
  tolerant networks,'' in \emph{Mobihoc}, 2009.

\bibitem{infocom2010}
L.~B. Le, E.~Modiano, and N.~B. Shroff, ``Optimal control of wireless networks
  with finite buffers,'' in \emph{Infocom}, 2010.

\bibitem{optimalyang}
R.~Yang, S.~Bhulai, R.~van~der Mei, and F.~Seinstra, ``Optimal resources
  allocation for time reservation systems,'' \emph{Performance Evaluation},
  vol.~68, pp. 414--428, 2011.

\bibitem{ioannidis2}
S.~Ioannidis, A.~Chaintreau, and L.~Massoulie, ``Optimal and scalable
  distribution of content updates over a mobile social network,'' in
  \emph{INFOCOM}, 2009.

\bibitem{cavazos1}
R.~Cavazos-Cadena, ``Necessary conditions for the optimality equation in
  average-reward markov decision processes,'' \emph{Applied Mathematical
  Optimization}, vol.~19, pp. 97--112, 1989.

\bibitem{cavazos2}
------, ``Necessary and sufficient conditions for a bounded solution to the
  optimality equation in average reward markov decision chains,'' \emph{Systems
  and Control Letters}, vol.~10, pp. 71--78, 1988.

\bibitem{borkar}
V.~Borkar, \emph{Stochastic approximation: A Dynamic Systems Viewpoint}.\hskip
  1em plus 0.5em minus 0.4em\relax Cambridge University Press, 2008.

\bibitem{filippov2}
A.~F. Filippov, \emph{Differential equations with discontinuous righthand
  sides}.\hskip 1em plus 0.5em minus 0.4em\relax Kluwer Academic Publishers,
  1988.

\bibitem{shevitz}
D.~Shevitz and B.~Paden, ``Lyapunov stability theory of nonsmooth systems,''
  \emph{Trans. on Auto. Cont.}, vol.~39, no.~9, 1994.

\bibitem{laurent}
P.~Key, L.~Massoulie, and B.~Wang, ``Emulating low-priority transport at the
  application layer: a background transfer service,'' in \emph{ACM Sigmetrics},
  2004.

\end{thebibliography}
}

\clearpage
\pagebreak

\begin{appendix}

\section{Derivation of Main Results When $P_{3G} \rightarrow \infty$}
\label{sec:no3g}

We begin analyzing the case in which 3G is not available.  The optimality conditions are
\begin{eqnarray}   \boxed{\boxed{
V(x)+g_u{=} \max \Big(\utility(x) {+}  V(\min(x+1,{M})),  \utility(x) {-}  \cost  {+}  p (V(1){-}P{+}B)  {+}  (1-p) V(\min(x+1,{M})) \Big), \quad x{=}1, \ldots, M  \label{eq2}   }}
 \end{eqnarray}
Let
\begin{eqnarray}
H(x,0)&=&U(x) + V(\min(x+1,{M})) \label{eq2A} \\
H(x,1)&=&  \utility(x) {-}  \cost  {+}  p (V(1){-}P{+}B)  {+}  (1-p) V(\min(x+1,{M})) \label{eq2B}
\end{eqnarray}

It follows from \eqref{eq2}-\eqref{eq2B} and ~\cite{puterman} that the following policy is optimal, for $m=1, \ldots, M$,
\begin{eqnarray} \label{eqeq1}
a(m)= \left\{ \begin{array}{ll}  0, & \textrm{if } H(m,0) \ge H(m,1)  \\
1, & \textrm{otherwise}    \end{array} \right.
\end{eqnarray}

At states $M$ and $M-1$, \eqref{eq2}  implies that
\begin{eqnarray}  \label{eqeq1}
 \left\{ \begin{array}{ll}  (H(M,0) \ge H(M,1)) \wedge (H(M-1,0) \ge H(M-1,1)), & \textrm{ if }    V(M) \ge  \utility(M) {-}  \cost  {+}  p (V(1){-}P{+}B)  {+}  (1-p) V(M) \\
(H(M,0) \le H(M,1)) \wedge (H(M-1,0) \le H(M-1,1)), & \textrm{otherwise}
\end{array} \right.
\end{eqnarray}
Equation \eqref{eqeq1} yields the following remark,  used in the analysis of the  base cases of the  inductive arguments that follow.
\begin{remark} $(H(M,0) \ge H(M,1))  \iff (H(M-1,0) \ge H(M-1,1))$.  \label{remark1}
\end{remark}

\subsection{Derivation of ~\eqref{exavgre} and corresponding steady state probabilities} \label{sec:pi1app}
\label{sec:exavgre}

From~\textsection\ref{sec:aver},
\begin{eqnarray} 
\pi_1 &=& \pi_i, \quad i=1, \ldots, s  \label{eq1a} \\
\pi_{i} &=& \pi_{i-1} (1-p)=\pi_1 (1-p)^{i-s}, \quad  i=s+1, \ldots, M-1\label{eq2a} \\
\pi_M &=& (1-p)(\pi_{M-1} + \pi_M) = \frac{1-p}{p}\pi_{M-1}
\end{eqnarray}
Therefore, 
\begin{eqnarray}
\pi_1&=&\left( {s+\sum_{i=s+1}^{M-1}  (1-p)^{i-s}+(1-p)^{M-s} /p} \right)^{-1} = \left( {s+\sum_{i=1}^{M-1-s}  (1-p)^{i}+(1-p)^{M-s} /p} \right)^{-1}  = \left( {s+\frac{1-p}{p} }  \right)^{-1} \nonumber
\end{eqnarray}

\subsection{Derivation of expression of $E[r;s]$} \label{sec:erseq}
Next, we show that 
\begin{equation} \label{eq:ers}
E[r;s] {=}  {\pi_1} \left[ \sum_{x=1}^{s-1} \utility(x)  {+} \sum_{i=0}^{M-1-s} \utility(i+s) (1-p)^{i}  {-} \frac{\cost}{p} {-}P{+}B \right]
\end{equation}
\label{app:ers}
Let 
\begin{equation}
\gamma=\left( (U(M){-}G{-}pP{+}pB)\pi_M - \sum_{i=s}^{M-1}(G{+}pP{-}pB)\pi_i \right)/\pi_1
\end{equation}
Then,
\begin{eqnarray} \label{eq:erapp}
E[r;s]&=&\sum_{i=1}^{s-1} \utility(i) \pi_1 + \sum_{i=s}^{M} (\utility(i)-\cost -pP+pB) \pi_i =  \pi_1 \left[ \sum_{i=1}^{s-1} \utility(i)  + \sum_{i=s}^{M-1} \utility(i)\pi_i/\pi_1   + \gamma \right] \\
&=&  \pi_1 \left[ \sum_{i=1}^{s-1} \utility(i)  + \sum_{i=0}^{M-1-s} \utility(i+s) (1-p)^{i}  + \frac{1}{p} (1{-}p)^{M-s} (\utility(M){-}\cost{-}pP{+}pB){-}(\cost{+}pP{-}pB) \frac{1{-}(1{-}p)^{M{-}s}}{p}\right]  \nonumber \\
&=&  \pi_1 \left[ \sum_{i=1}^{s-1} \utility(i)  + \sum_{i=0}^{M-1-s} \utility(i+s) (1-p)^{i}   -G/p - P + B\right]  \label{eqersspecial} 
\end{eqnarray}

\subsection{Proof of Proposition~\ref{vdecr_no3g}}
\begin{proposition} \label{vdecr_no3g} 
The {\ensuremath{\mbox{\sc  User Problem}}} admits an optimal threshold policy 
\begin{eqnarray}
{a(x)} {=} \left\{ \begin{array}{ll}  0, & \textrm{ if }   x < s^{\star} \\
1,  & \textrm{ if } s^{\star} \leq x \leq  M.
\end{array} \right.
\end{eqnarray}
\end{proposition}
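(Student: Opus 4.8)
The plan is to establish that the optimal policy is of threshold type by showing that the relative value function $V(x)$ is non-increasing in $x$, and then invoking the optimality condition~\eqref{eq3}. Concretely, once we know $V(x) \ge V(x+1)$ for all $x$, the left-hand side of the inequality in~\eqref{eq3}, namely $-G/p + V(1) - P + B$, is a constant in $x$, while the right-hand side $V(\min(x+1,M))$ is non-increasing in $x$. Hence if the inequality holds at some state $x$ (so that $a(x)=0$), it also holds at every state $x' < x$ (since $V(\min(x'+1,M)) \ge V(\min(x+1,M))$), which means $a(x')=0$ as well. Equivalently, the set of states at which the optimal action is $1$ is ``upward closed'', so it has the form $\{s^\star, s^\star+1, \ldots, M\}$ for some $s^\star$, which is exactly the claimed threshold structure.

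The core of the argument is therefore the monotonicity of $V$. First I would set up the value iteration operator implicit in the optimality equations~\eqref{eq2}--\eqref{eq2B}: define $V_{n+1}(x) + g_u = \max(H_n(x,0), H_n(x,1))$ where $H_n(x,\cdot)$ is obtained from~\eqref{eq2A}--\eqref{eq2B} by replacing $V$ with $V_n$, starting from $V_0 \equiv 0$. Since the state space is finite and the chain is unichain with aperiodicity obtainable by the usual perturbation, $V_n \to V$ up to an additive constant. I would then prove by induction on $n$ that each $V_n$ is non-increasing in $x$. The base case is trivial. For the inductive step, assume $V_n(1) \ge V_n(2) \ge \cdots \ge V_n(M)$; I must show $V_{n+1}(x) \ge V_{n+1}(x+1)$. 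Using that $U$ is non-increasing, that $\min(x+1,M)$ is non-decreasing in $x$, and the inductive hypothesis on $V_n$, both $H_n(x,0)$ and $H_n(x,1)$ are non-increasing in $x$ (each is $U(x)$ plus a non-increasing-in-$x$ term, since $V_n(\min(x+1,M))$ is non-increasing in $x$, and the extra constants $-G$, $p(V_n(1)-P+B)$ do not depend on $x$). The maximum of two non-increasing functions is non-increasing, so $V_{n+1}$ is non-increasing, closing the induction; monotonicity passes to the limit $V$.

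The main obstacle I anticipate is handling the boundary behaviour at $x = M$ and $x = M-1$ cleanly, because $\min(x+1,M)$ is flat there ($\min(M,M) = \min(M-1+1,M) = M$), which means the ``$V$-part'' of $H(x,0)$ and $H(x,1)$ does not strictly decrease across that edge; monotonicity at the top relies purely on $U(M-1) \ge U(M)$, and one must also check the self-consistency displayed in~\eqref{eqeq1} and Remark~\ref{remark1} — i.e.\ that the action is the same at $M$ and $M-1$ — so that the threshold $s^\star$ is well defined even when it would nominally fall at $M$ or $M+1$. A second, more minor, point is that the proposition as stated uses $V(x)$ decreasing only for $P = 0$, but the statement of Proposition~\ref{vdecr_no3g} is for general $P$ (and $B$); I would note that $P$ and $B$ enter only through the $x$-independent constant $p(V(1)-P+B)$ in $H(x,1)$, so the same induction goes through verbatim for any $P \ge 0$ and $B \le \min(P_{3G},P)$, which is presumably why the paper defers the $P \ge 0$ case to Appendix~\ref{app:vdecr} but can still assert the threshold structure here. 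Finally I would record that the exact value $s^\star$ is the one identified in Definition~\ref{defopt}/Proposition~\ref{propos1b}, tying the structural result to the optimization characterization already given.

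\begin{proof}[Proof sketch]
By the value-iteration characterization associated with~\eqref{eq2}, set $V_0 \equiv 0$ and
\[
V_{n+1}(x) + g_u = \max\bigl(H_n(x,0),\, H_n(x,1)\bigr),
\]
with $H_n(x,0) = U(x) + V_n(\min(x+1,M))$ and $H_n(x,1) = U(x) - G + p(V_n(1) - P + B) + (1-p)V_n(\min(x+1,M))$. One shows by induction on $n$ that $V_n$ is non-increasing: given that $V_n$ is non-increasing, $V_n(\min(x+1,M))$ is non-increasing in $x$, hence so are $H_n(x,0)$ and $H_n(x,1)$ (each equals $U(x)$, which is non-increasing, plus a term that is either $x$-independent or non-increasing in $x$), and the pointwise maximum of non-increasing functions is non-increasing. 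Passing to the limit, $V$ is non-increasing. Plugging this into~\eqref{eq3}: the quantity $-G/p + (V(1) - P + B)$ is constant in $x$ while $V(\min(x+1,M))$ is non-increasing, so $\{x : a(x) = 0\}$ is a down-set and $\{x : a(x) = 1\}$ is an up-set, i.e.\ there is $s^\star$ with $a(x) = 0$ for $x < s^\star$ and $a(x) = 1$ for $s^\star \le x \le M$. The consistency at the top boundary (Remark~\ref{remark1} and~\eqref{eqeq1}) guarantees $s^\star$ is well defined in $\{1,\dots,M+1\}$, and by Definition~\ref{defopt} it equals $\arg\max_s E[r;s]$.
\end{proof}
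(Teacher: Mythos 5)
Your proposal is correct, and it reaches the result by a genuinely different route for the key lemma. Both you and the paper reduce the proposition to the monotonicity of the relative value function and then read the threshold structure off \eqref{eq3} (constant left-hand side versus non-increasing right-hand side, so the inactive set is a down-set); the difference is how monotonicity of $V$ is obtained. The paper works directly with the average-reward optimality equations \eqref{eq2}--\eqref{eq2B} and argues by backward induction on the state index, splitting into the scenarios $H(M,0)\ge H(M,1)$ and $H(M,0)<H(M,1)$, and in the first scenario running a proof by contradiction (\eqref{minm}--\eqref{yield2}) that simultaneously establishes the action-monotonicity implication \eqref{existencecond}; this never requires any convergence facts because it manipulates the fixed point itself. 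You instead propagate monotonicity through the dynamic-programming operator by induction on value-iteration steps and pass to the limit, which eliminates the case analysis entirely, handles general $P\ge 0$ and $B$ verbatim (they enter only through $x$-independent constants), and recovers the boundary consistency at $x=M-1,M$ (the paper's Remark~\ref{remark1}) automatically from the flatness of $\min(x+1,M)$. The price is the limit step: plain value iteration does not converge in values for the average-reward criterion, so your appeal to the unichain structure plus the standard aperiodicity perturbation (or, alternatively, a vanishing-discount argument on $V_\beta(x)-V_\beta(1)$) is exactly the point that must be made rigorous; it does go through, since the transformed operator adds a term $(1-\tau)V_n(x)$ that is itself non-increasing and hence does not disturb your induction, and monotonicity, being a property of differences, is insensitive to the additive-constant ambiguity in the limit. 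With that step spelled out, your argument is a complete and somewhat more standard alternative to the paper's proof.
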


\label{app:vdecr}

\begin{proof}
We show that, for $m=0, \ldots, M-1$, 
\begin{itemize}
\item \begin{equation} \label{eqdecreasingv}
V(M-m-1) - V(M-m) \ge 0
\end{equation}
\item \begin{equation} 
( H(M-m,0) \ge H(M-m,1) ) \Rightarrow  ( H(M-m-1,0) \ge H(M-m-1,1) ) \label{existencecond}
\end{equation} 
\end{itemize}
We consider two scenarios, $H(M,0) \ge H(M,1)$ and~$H(M,0) < H(M,1)$.

\begin{center}
\begin{equation}
\boxed{ \boxed{
\textrm{{scenario 1)  } } H(M,0) \ge H(M,1)} }
\end{equation}
\end{center}

If $H$ satisfies~\eqref{existencecond} then, from~\eqref{eq2}-\eqref{eq2B},
\begin{equation}
V(m)=U(m)+V(m+1), 1 \leq m \leq M-1
\end{equation}
and ~\eqref{eqdecreasingv} holds. 
 
Next, assume for the sake of contradiction that $H$ does not satisfy~\eqref{existencecond}.  Let $m$ be the largest state at which condition~\eqref{existencecond} is violated,
\begin{equation}
m= \max\{ i |  H(M-i,0) \ge H(M-i,1) \textrm{ and } H(M-i-1,1) > H(M-i-1,0)  \} \label{minm}
\end{equation}
It follows from~\eqref{minm} that 
\begin{eqnarray}
H(M-m-1,1)&>& H(M-m-1,0) \label{minm1} \\
H(M-m+k,0)&\ge& H(M-m+k,1),  k = 0,1, \ldots, m  \label{minm2}
\end{eqnarray}
\eqref{minm1} and \eqref{minm2} yield, respectively,
\begin{eqnarray}
V(M-m) - V(1) &<& -G/p - P + B \label{yield1} \\
\eat{ V(M-m+1) - V(1) &\ge& -G/p - P + B \\ }
V(M-k) - V(1) &\ge& -G/p - P + B,  \qquad k = 0, \ldots, m-1 \label{hmkv1k0m1} 
\end{eqnarray}
Letting $k=m-1$ in~\eqref{hmkv1k0m1},
\begin{equation}
V(M-m) = U(M-m) + V(M-m+1) \ge U(M-m) + V(1) - G/p - P + B \label{yield2} 
\end{equation}

~\eqref{yield1} and ~\eqref{yield2}  yield the following contradiction
\begin{equation}
V(1) - G/p - P + B \le V(M-m) < V(1)-G/p - P +B
\end{equation}
Therefore, ~\eqref{existencecond} holds for $m=0, \ldots, M-1$.

\eat{
From the above, we conclude that $H(M-m-1,0)=H(M-m-1,1)$ which contradicts the fact that $H$ violates~\eqref{existencecond} at state $M-m-1$.} \eat{.  Let $a'(M-k)=a(M-k)$, for $k \leq m$ and $a'(M-m-1)=0$. If $a'$  satisfies ~\eqref{eqdecreasingv}-\eqref{existencecond}, the proof is complete. Otherwise, repeated application of the above procedure yields an optimal policy which satisfies~\eqref{eqdecreasingv}-\eqref{existencecond}. }

\begin{center}
\begin{equation}
\boxed{ \boxed{
\textrm{{scenario 2)  } } H(M,0) < H(M,1)} }
\end{equation}
\end{center}

\textbf{Base case: } We first show that $V(M-1) \ge V(M) \label{ineq1}$.

 Note that $(H(M,1) \ge H(M,0)) \Rightarrow (H(M-1,1) \ge H(M-1,0))$ (see remark~\ref{remark1}). It follows from  ~\eqref{eq2} that
 \begin{eqnarray}
 V(M-1)&=&\utility(M-1)-G-P+B+pV(1) +   (1-p)V(M)-g_{u} \label{base1b} \\
 V(M) &=&\utility(M)-G-P+B+pV(1)+  (1-p)V(M)-g_{u} \label{base2b}
 \end{eqnarray}
  Hence, \eqref{base1b}-\eqref{base2b}, together with the fact that $U(x)$ is non-increasing, yield $V(M-1) \ge V(M)$.

\textbf{Induction hypothesis [assume result holds for $m < t$]: } Assume that $V(M-m-1) - V(M-m) \ge 0$, for $m < t$, and  $(H(M-m,0) \ge H(M-m,1)) \Rightarrow (H(M-m-1,0) \ge H(M-m-1,1))$, for $m < t$.

\textbf{Induction step [show result holds for $m = t$]: } Next, we show that $V(M-t-1) - V(M-t) \ge 0$ and that 
$(H(M-t,0) \ge H(M-t,1)) \Rightarrow (H(M-t-1,0) \ge H(M-t-1,1))$.

It follows from the induction hypothesis that $V(M-t+1) \leq V(M-t)$.   We consider two cases,

\underline{$i)$ $H(M-t,0)\ge H(M-t,1)$}
The proof is similar to that of scenario 1.

\underline{$ii)$ $H(M-t,1) \ge H(M-t,0)$}   Now, we show that 
$$V(M-t-1) - V(M-t) \ge 0.$$
Note that $(H(M-t,0) \ge H(M-t,1)) \Rightarrow (H(M-t-1,0) \ge H(M-t-1,1))$ holds vacuously.    If $(H(M-t,1) \ge H(M-t,0))$   and $(H(M-t-1,1) \ge H(M-t-1,0))$,
\begin{eqnarray}
V(M-t) + g_{u} &=& \utility(M{-}t) {-}  \cost -P+B +  p V(1) +
 (1-p) V(\min(M-t+1,{M}))  \label{eq3aa} \\
V(M-t-1)+ g_{u}&=&
  \utility(M{-}t{-}1) {-}  \cost -P +B +   p V(1) +  (1-p) V(\min(M-t,{M})) \label{eq3ba}
  \end{eqnarray}
Also, $V(M-t+1) \leq V(M-t)$ (induction hypothesis) and $\utility(M-t) < \utility(M-t-1)$ (by assumption). Hence, \eqref{eq3aa}-\eqref{eq3ba} yield   $V(M-t) \leq V(M-t-1)$.

\end{proof}

\begin{remark} In what follows our analysis is restricted to threshold policies.
\end{remark}

\subsection{On the number of optimal thresholds}
\label{sec:appdiscussion}

The existence of policies satisfying~\eqref{eq2} follows from~\cite{cavazos1} and~\cite{cavazos2}.  In particular, the optimal threshold policies  satisfy ~\eqref{eq2}.    In general, though, the solution to ~\eqref{eq2} is not unique.  Next, we characterize the scenarios in which there are two or more optimal threshold policies.

Figure~\ref{fig:multiple}  illustrates scenarios in which  Assumption~\ref{a:app2} does not hold.  Let $b=0.3$, $M=21$, $p=0.5$ and $U(x) = c$ if $x \le 3$ and $U(x)=0$ otherwise.  Figures~\ref{fig:multiple}(b),~\ref{fig:multiple}(d) and ~\ref{fig:multiple}(f) show the three utility functions considered, corresponding to  $c=4$, $12$ and $16$, respectively. If $c=4$, threshold policies with $s \ge 3$ yield optimal reward of zero.  If $c=12$, there are two optimal thresholds ($s=2,3$).  Finally, if $c=16$ the optimal threshold policy is unique ($s=2$).

Proposition~\ref{propomultitude} establishes conditions according to which the number of optimal thresholds is at most two.

\begin{figure}
\center
\includegraphics[scale=0.4]{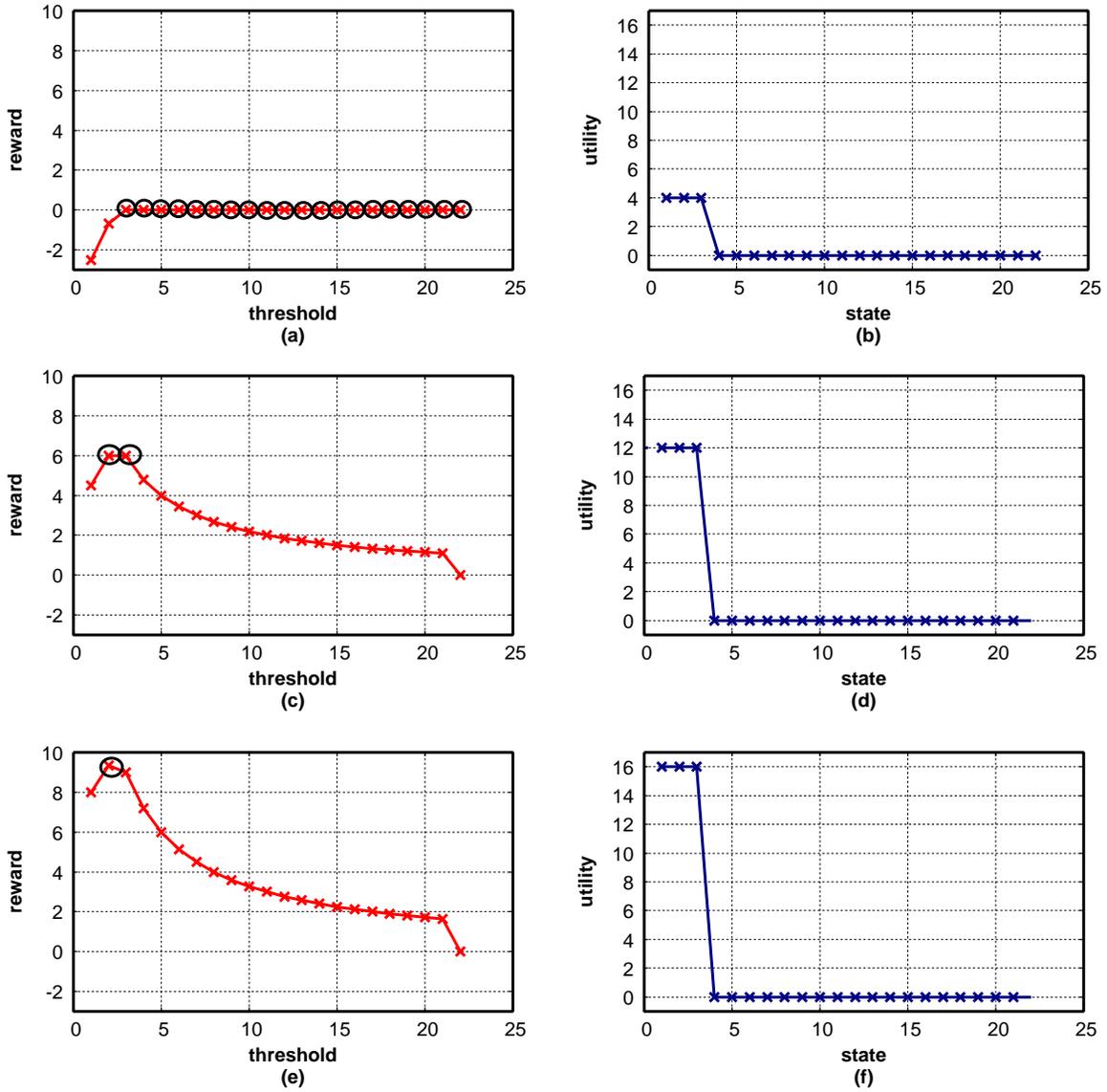}
\caption{Illustration of scenarios in which the optimal threshold is not unique ((a) and (c)) and unique (e).} \label{fig:multiple}
\end{figure}

\begin{proposition}  \label{propomultitude}
Let $R$ be the number of optimal thresholds. 
\begin{equation} \label{eqconditionsmultiple}
 \left\{ \begin{array}{ll} 
R\ge3,  & \textrm{ if } \exists m < {M{-}1} \quad s.t. \quad \sum_{x=1}^{m-1}\utility(x) = G/p+P-B \;\; \textrm{and}\;\;\; \utility(x) = 0 \;\;\;\forall x > m \\
R\leq 2,  & \textrm{ otherwise} 
\end{array} \right.
\end{equation}
If $R \ge 3$ then the policy \emph{always inactive} is optimal.
\label{atmosttwo}
\end{proposition}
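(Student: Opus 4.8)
The plan is to analyze the function $s \mapsto E[r;s]$ on the discrete domain $\{1, 2, \ldots, M, M+1\}$ using the closed-form expression \eqref{eq:ers}, and to count how many values of $s$ can simultaneously attain the maximum. Recall from Proposition~\ref{propos1b} that $E[r;s]$ is non-decreasing up to $s^{\star}$, so the set of optimal thresholds is an interval $\{s^{\star}, s^{\star}+1, \ldots\}$ of consecutive integers at which $E[r;s]$ has plateaued at its maximum value; we need $E[r;s]=E[r;s+1]$ to hold for several consecutive $s$. First I would write, using \eqref{eq:ers} and $\pi_1 = (s + (1-p)/p)^{-1}$, the difference $E[r;s+1]-E[r;s]$ and clear the positive denominators $\pi_1(s)^{-1}\pi_1(s+1)^{-1}$, obtaining a quantity whose sign I can track. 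A convenient route is to use the ``cycle'' reformulation hinted at in the commented-out Proposition~\ref{propo:2}: $E[r;s]=E[r;s+1]$ is equivalent to $E[r;s+1] = E_{s+1}[U(x);s]$, i.e.\ the per-cycle reward at threshold $s+1$ equals the utility accumulated over the extra slot, which after simplification reads $E[r;s+1] = U(s)$ (intuitively, raising the threshold from $s$ to $s+1$ is neutral exactly when the reward you would earn per slot equals the utility $U(s)$ you give up by deferring).

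Given that characterization, two consecutive ties $E[r;s]=E[r;s+1]=E[r;s+2]$ force $U(s)=U(s+1)=E[r;s+1]=E[r;s+2]$, and since $E[r;\cdot]$ is constant on the optimal plateau while $U$ is non-increasing, pushing this forward shows that $U$ must be constant and equal to the common reward value $G/p+P-B$ times the appropriate normalizing factor on the whole plateau, and moreover $U$ must vanish beyond some index $m$ — otherwise monotonicity of $U$ together with the plateau condition produces a strict inequality somewhere, contradicting a third tie. So the first bullet of \eqref{eqconditionsmultiple} (the existence of $m<M-1$ with $\sum_{x=1}^{m-1}U(x)=G/p+P-B$ and $U(x)=0$ for $x>m$) is precisely the algebraic condition, obtained by substituting a step-like $U$ into \eqref{eq:ers} and into the always-inactive condition \eqref{cond:alwaysinactive}, that makes $E[r;s]$ identically equal to its maximum for all $s$ in a range of length at least $3$; I would verify this by direct substitution into \eqref{eq:ers}, checking that for $s \ge m$ the bracket reduces to $\sum_{x=1}^{m-1}U(x) - G/p - P + B = 0$, so $E[r;s]=0$ for all $s\ge m$, and for $s < m$ one checks $E[r;s]\le 0$ using that the partial sums of a non-increasing nonnegative sequence grow sublinearly in the relevant sense. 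In the converse direction, assuming no such $m$ exists, I would argue by contradiction: if $R\ge 3$ there are at least two consecutive equalities $E[r;s]=E[r;s+1]=E[r;s+2]$, extract from \eqref{eq:ers} that $U(s)=U(s+1)$ and that the reward equals $U(s)$, then run the comparison one step further to deduce $U(x)=0$ for $x>s+1$ and reconstruct the value $\sum_{x=1}^{s}U(x)=G/p+P-B$, contradicting non-existence of $m$.

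The last sentence — that $R\ge 3$ implies the always-inactive policy is optimal — then follows almost immediately: under the stated condition we just showed $E[r;s]=0=E[r;M+1]$ for every $s$ in the plateau (including, one checks, $s=M+1$), and $E[r;M+1]$ is by definition the reward of the always-inactive policy; alternatively one verifies that the condition $\sum_{x=1}^{m-1}U(x)=G/p+P-B$ with $U\equiv 0$ beyond $m$ implies $\sum_{j=1}^{M-1}U(M-j)=\sum_{x=1}^{m-1}U(x)\le G/p+P-B$, which is exactly the always-inactive optimality condition \eqref{cond:alwaysinactive} of Proposition~\ref{propoalin}.

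The main obstacle I anticipate is the bookkeeping in the ``$R\le 2$'' direction: translating ``two consecutive equalities'' into pointwise statements about $U$ requires care because $E[r;s]$ mixes a finite sum $\sum_{x=1}^{s-1}U(x)$ with a geometrically weighted tail $\sum_{i=0}^{M-1-s}U(i+s)(1-p)^i$, so the difference $E[r;s+1]-E[r;s]$ after clearing denominators is not simply $U(s)$ minus something clean — one must actually carry out the telescoping of the geometric tail and use $\pi_1(s)-\pi_1(s+1)$ explicitly. Handling the boundary indices ($s$ near $1$, near $M$, and the artificial state $M+1$ where the formula \eqref{eq:ers} degenerates) is the fiddly part; everything else is monotonicity of $U$ plus the already-established Proposition~\ref{propos1b}.
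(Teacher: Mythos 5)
Your skeleton is the paper's own: Proposition~\ref{propos1b} (via Proposition~\ref{propos1}) makes the optimal thresholds a block of consecutive integers, you difference the closed form \eqref{eq:ers} across consecutive ties, monotonicity of $U$ is supposed to force $U\equiv 0$ beyond the plateau and pin $\sum_{x=1}^{m-1}U(x)=G/p+P-B$, sufficiency is by substitution, and the always-inactive claim comes from the optimal reward being zero (equivalently \eqref{cond:alwaysinactive}). The problem is the pivot of your necessity argument: $E[r;s]=E[r;s+1]$ is \emph{not} equivalent to $E[r;s+1]=U(s)$. Clearing denominators (the paper's identity \eqref{diffreward}, or \eqref{eq:ers4}) shows that a tie at $(s,s+1)$ is equivalent to $E[r;s+1]=p\sum_{i=1}^{M-s-1}U(i+s)(1-p)^{i-1}$, a geometrically weighted tail in which $U(s)$ does not appear; so "two consecutive ties force $U(s)=U(s+1)=E$" does not follow as you describe. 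What two ties at $(s,s+1)$ and $(s+1,s+2)$ actually give, after writing the first tail as $U(s+1)+(1-p)\cdot(\text{second tail})$, is $U(s+1)=E$, where $E$ is the common optimal value; and you still need a separate step to force $E=0$, e.g.\ $E=p\sum_{i=1}^{M-s-2}U(i+s+1)(1-p)^{i-1}\le U(s+1)\bigl(1-(1-p)^{M-s-2}\bigr)<U(s+1)=E$ unless $E=0$ (this, together with $U\ge 0$ and $U$ non-increasing, is what underlies the paper's terse step "subtracting \eqref{thres3b} from \eqref{thres3a}, $U(s^{\star})=0$"). Your proposal never makes the "common reward is zero" deduction; instead you assert that $U$ is "constant and equal to the common reward value $G/p+P-B$ times the appropriate normalizing factor on the whole plateau", which is not correct and does not by itself yield $U(x)=0$ beyond $m$, the exact partial-sum identity, or the always-inactive conclusion. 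Since you yourself defer exactly this telescoping to the "fiddly part", the necessity direction as written has a genuine gap, not a cosmetic one.

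The rest is in order and parallels the paper: for sufficiency, substitution into \eqref{eq:ers} gives a zero bracket for $s\ge m$ and, using $(1-p)^i\le 1$, $E[r;s]\le 0$ for $s<m$ (your "sublinear growth" phrase is doing this work); just note that your claim that the bracket vanishes for all $s\ge m$ tacitly uses $U(m)=0$ (i.e.\ the reading $U(x)=0$ for $x\ge m$, which is what the necessity direction actually delivers), since with $U(m)>0$ the bracket equals $U(m)$ and the rewards $\pi_1(s)U(m)$ are strictly decreasing in $s$. The final implication, via $E[r;M+1]=0$ or via \eqref{cond:alwaysinactive} of Proposition~\ref{propoalin}, is fine and matches the paper's \eqref{uxequalzero2}.
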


\begin{proof}
From proposition~\ref{propos1}, the expected reward  $E[r;s]$ of 
a threshold policy is non-decreasing when $s \leq s^{\star}$, and is non-increasing when $s \ge s^{\star}$. Therefore, the optimal thresholds should be consecutive. Let $s^{\star}-1$, $s^{\star}$ and  $s^{\star}+1$ be three optimal thresholds, 
\begin{equation}
E[r;s^{\star}-1]=E[r;s^{\star}]=E[r;s^{\star}+1] \label{thres3}
\end{equation}
From \eqref{thres3} and \eqref{diffreward}, 
\begin{eqnarray}
&& \left( s^{\star}-1+\frac{1-p}{p} \right) E[r;s^{\star}-1]{-} \left( s^{\star}+\frac{1-p}{p} \right)  E[r;s^{\star}] = - E[r;s^{\star}]  = -p\sum_{i=1}^{M-s^{\star}} \utility(i+s^{\star}-1) (1-p)^{i-1} \label{thres3a}\\
&& \left( s+\frac{1-p}{p} \right) E[r;s^{\star}]{-} \left( s^{\star}+1+\frac{1-p}{p} \right)  E[r;s^{\star}+1] = - E[r;s^{\star}]  = -p\sum_{i=1}^{M-s^{\star}-1} \utility(i+s^{\star}) (1-p)^{i-1} \label{thres3b}
\end{eqnarray}
Subtracting~\eqref{thres3b} from ~\eqref{thres3a},
\begin{equation}
U(s^{\star})=0
\end{equation}
The above equation together with the fact that $U(x)$ is a non-increasing function yields 
\begin{equation} \label{uxequalzero}
U(x)=0, \qquad   x\geq s^{\star}
\end{equation}
~\eqref{uxequalzero} together with ~\eqref{thres3a} and ~\eqref{thres3b} yields
 \begin{equation} \label{uxequalzero2}
 E[r; s^{\star}-1]=E[r; s^{\star}]=E[r; s^{\star}+1]=E[r; M+1] = 0
 \end{equation}
Substituting \eqref{uxequalzero} and \eqref{uxequalzero2}  into \eqref{eq:ers} yields
\begin{eqnarray} \label{uxequalzero3}
\sum_{x=1}^{s^{\star}-1}\utility(x) - \frac{G}{p} - P + B = 0.
\end{eqnarray}
Therefore, if~\eqref{uxequalzero3} and ~\eqref{uxequalzero}  hold then $R \ge 3$. Otherwise, $R \leq 2$.


\end{proof}

\begin{corollary} If the policy \emph{always inactive} is not optimal, the number of optimal thresholds is at most two.
\end{corollary}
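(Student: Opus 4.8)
The plan is to obtain the Corollary as the contrapositive of the closing assertion of Proposition~\ref{atmosttwo} (equivalently Proposition~\ref{propomultitude}). That proposition packages three facts: an explicit degeneracy condition on the triple $(U,G,B)$ — the existence of some $m<M-1$ with $\sum_{x=1}^{m-1}U(x)=G/p+P-B$ and $U(x)=0$ for all $x>m$ — under which $R\ge 3$; the statement that $R\le 2$ when this condition fails; and the statement that whenever $R\ge 3$ the policy \emph{always inactive} is optimal. Only the third fact is needed here: taking its contrapositive, if \emph{always inactive} is not optimal then we cannot have $R\ge 3$, hence $R\le 2$.

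The only point worth spelling out is that $R$ is a well-defined nonnegative integer with $R\ge 1$, so that ``$R$ is not $\ge 3$'' is literally ``$R\le 2$''. Existence of at least one optimal threshold policy is Proposition~\ref{vdecr_no3g} in the case $P_{3G}\to\infty$ treated here (and Proposition~\ref{temp1} in the two-threshold case), and thresholds range over the finite set $\{1,\dots,M+1\}$, so unconditionally $1\le R\le M+1$; the content of the Corollary is the sharp bound $R\le 2$, valid exactly when the \emph{always inactive} policy — which is the threshold policy with $s=M+1$ — fails to be optimal.

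For completeness I would also record the self-contained route, which reuses the computation inside the proof of Proposition~\ref{atmosttwo} rather than invoking its ``$R\ge 3\Rightarrow$ always inactive optimal'' clause as a black box: $E[r;s]$ is non-decreasing for $s\le s^{\star}$ and non-increasing for $s\ge s^{\star}$ (cf. Proposition~\ref{propos1b}), so the optimal thresholds form a block of consecutive integers; if $R\ge 3$ then $E[r;s^{\star}-1]=E[r;s^{\star}]=E[r;s^{\star}+1]$, and subtracting the two consecutive reward-difference identities \eqref{thres3a}--\eqref{thres3b} forces $U(s^{\star})=0$, whence $U(x)=0$ for all $x\ge s^{\star}$ by monotonicity of $U$; plugging this into \eqref{eq:ers} gives $E[r;s^{\star}]=0$, and since $E[r;M+1]=0$ as well and no threshold policy beats $0$ in this case, $s=M+1$ (\emph{always inactive}) is optimal, contradicting the hypothesis. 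Either argument is a few lines; there is no real obstacle, the only care needed being the boundary/degenerate bookkeeping ($s^{\star}\in\{1,M,M+1\}$, and $U\equiv 0$) that makes the ``block of consecutive optimal thresholds'' description legitimate, and that is already dispatched in the proof of Proposition~\ref{atmosttwo} — which is why I would prefer to derive the Corollary straight from that proposition.
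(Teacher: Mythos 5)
Your proposal is correct and matches the paper's (implicit) argument: the corollary is stated as an immediate consequence of Proposition~\ref{atmosttwo}, whose final clause says that $R\ge 3$ forces the \emph{always inactive} policy to be optimal, so the contrapositive gives $R\le 2$. The alternative self-contained route you sketch is just the proposition's own proof re-run, so there is no substantive difference.
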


\begin{corollary} If the optimal reward is greater than zero the number of optimal thresholds is at most two.
\end{corollary}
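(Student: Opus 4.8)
The plan is to obtain this corollary as an immediate consequence of the final sentence of Proposition~\ref{propomultitude} (equivalently, of the preceding corollary), by showing that a strictly positive optimal reward precludes optimality of the \emph{always inactive} policy. The logical backbone is the contrapositive of that sentence: since $R \ge 3$ forces the \emph{always inactive} policy to be optimal, whenever that policy fails to be optimal we must have $R \le 2$. Thus the entire task reduces to connecting the hypothesis ``optimal reward $>0$'' to the statement ``\emph{always inactive} is not optimal.''

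The one substantive step I would carry out is to pin down the reward of the \emph{always inactive} policy. This policy corresponds to the threshold $s = M+1$: the user never activates, incurs neither the energy cost $G$ nor any monetary cost, and the age climbs deterministically to $M$ and stays there. Its steady-state instantaneous reward is therefore $U(M)$, which equals zero under the normalization $U(M)=0$ adopted without loss of generality in Section~\ref{sec:genstr}; this is consistent with the identity $E[r;M+1]=0$ already used in the proof of Proposition~\ref{propomultitude}. Since $s=M+1$ is itself a feasible threshold, the optimal reward always satisfies $E[r;s^{\star}] \ge E[r;M+1] = 0$, so the hypothesis of the corollary is precisely the statement that this inequality is strict.

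With this in hand the corollary closes in a single line: if $E[r;s^{\star}] > 0$, then the optimal reward strictly exceeds the reward of the \emph{always inactive} policy, so that policy is not optimal, and the preceding corollary (equivalently, the contrapositive of Proposition~\ref{propomultitude}) gives $R \le 2$. The only point requiring care -- and it is minor -- is invoking the normalization $U(M)=0$ so that the \emph{always inactive} reward is exactly zero rather than merely nonpositive; once that is done, no casework, monotonicity argument, or further computation is needed, so I do not expect any real obstacle here.
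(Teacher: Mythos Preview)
Your proposal is correct and matches the paper's implicit reasoning: the corollary is stated without proof as an immediate consequence of Proposition~\ref{propomultitude}, and your argument---that $E[r;M+1]=0$ under the normalization $U(M)=0$, so a strictly positive optimal reward rules out optimality of \emph{always inactive}, whence $R\le 2$ by the preceding corollary---is exactly the intended chain. One could shortcut slightly by observing that equation~\eqref{uxequalzero2} in the proof of Proposition~\ref{propomultitude} already establishes $E[r;s^{\star}]=0$ whenever $R\ge 3$, giving the contrapositive directly, but this is the same content.
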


\begin{corollary} Let $s^{\star}$ be an optimal threshold.  If $E[r;s^{\star}]\neq E[r;s^{\star}+1]$ and $E[r;s^{\star}]\neq E[r;s^{\star}-1]$ then the optimal threshold is unique.
\end{corollary}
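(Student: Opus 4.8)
The plan is to build on the reward monotonicity that is invoked at the start of the proof of Proposition~\ref{propomultitude}: $E[r;s]$, viewed as a function of the integer threshold $s$, is non-decreasing for $s\le s^{\star}$ and non-increasing for $s\ge s^{\star}$, and hence the set of optimal thresholds---the integers attaining $\max_s E[r;s]$---is a contiguous block $\{\,\underline{s},\underline{s}+1,\dots,\overline{s}\,\}$, every element of which attains the common maximal reward. Since $s^{\star}$ is assumed to be an optimal threshold, $\underline{s}\le s^{\star}\le\overline{s}$, and the number $R$ of optimal thresholds equals $\overline{s}-\underline{s}+1$.

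First I would show $\underline{s}=s^{\star}$: were $\underline{s}\le s^{\star}-1$, the threshold $s^{\star}-1$ would lie in the optimal block and so $E[r;s^{\star}-1]=\max_s E[r;s]=E[r;s^{\star}]$, contradicting the hypothesis $E[r;s^{\star}]\neq E[r;s^{\star}-1]$. Symmetrically, were $\overline{s}\ge s^{\star}+1$, then $s^{\star}+1$ would be optimal and $E[r;s^{\star}+1]=E[r;s^{\star}]$, contradicting $E[r;s^{\star}]\neq E[r;s^{\star}+1]$; hence $\overline{s}=s^{\star}$. Therefore $\underline{s}=\overline{s}=s^{\star}$, i.e.\ $R=1$ and the optimal threshold is unique.

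I do not expect a genuine obstacle here: once the optimal set is known to be an interval of consecutive integers, the two strict inequalities in the hypothesis immediately forbid both neighbours of $s^{\star}$ from being optimal, collapsing the interval to $\{s^{\star}\}$. The only points deserving a word of care are (i) stating the contiguity fact precisely---it is exactly what is established at the opening of the proof of Proposition~\ref{propomultitude} and underlies Proposition~\ref{propos1b}---and (ii) the boundary cases $s^{\star}=1$ and $s^{\star}=M+1$, where one of $s^{\star}\pm 1$ is not an admissible threshold; there the corresponding hypothesis is read as vacuous, and the single remaining strict inequality still forces the optimal interval to be $\{s^{\star}\}$.
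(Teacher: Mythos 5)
Your argument is correct and is essentially the paper's intended one: the corollary follows from the fact, established via Proposition~\ref{propos1} at the start of the proof of Proposition~\ref{propomultitude}, that $E[r;s]$ is non-decreasing up to $s^{\star}$ and non-increasing after, so optimal thresholds form a block of consecutive integers, and the two inequalities exclude both neighbours of $s^{\star}$. Your handling of the boundary cases $s^{\star}=1$ and $s^{\star}=M+1$ is a reasonable reading consistent with the paper.
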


Given the above characterization of the cases in which the optimal threshold policy is not unique, in the rest of this Appendix we  consider the following assumption.

 \begin{assumption} \label{a:app2}
There is at most one optimal threshold policy.
 \end{assumption}

\subsection{Proof of Proposition \ref{propoalin}}

\begin{proof} Let $a$ be the optimal threshold policy.

\subsubsection{Conditions for the optimal policy to be always inactive}

It follows from the optimality conditions~\eqref{eq2} and Assumption~\ref{a:app2} that
 \begin{equation} \label{eqoriginal0b}
 a(x) = 0, 1 \leq x\leq M  \iff H(x,1) \le H(x,0), 1 \leq x \leq M
 \end{equation}
Therefore, from the above equation and ~\eqref{eq2A}-\eqref{eq2B},
 \begin{equation} \label{eqoriginal1b}
 a(x) = 0, 1 \leq x\leq M  \iff V(1)-V(x) \le G/p + P -B, 1 \leq x \leq M
 \end{equation}
    Since $V(x)$ is decreasing (see Proposition~\ref{vdecr_no3g}), ~\eqref{eqoriginal1b} yields
 \begin{equation}
 a(x) = 0, 1 \leq x\leq M  \iff V(1)-V(M) \le G/p + P -B
 \end{equation}
or equivalently,
 \begin{equation}  
 a(x) = 0, 1 \leq x\leq M  \iff V(M) \ge V(1) - G/p - P +B \label{eq:inbelb}
 \end{equation}

From~\eqref{eq2},
 \begin{equation} \label{leftarrowbc}
 a(x) = 0, 1 \leq x\leq M  \iff V(i)=U(i)+V(i+1), 1 \leq i \leq M-1
 \end{equation}
where $\Leftarrow$ in \eqref{leftarrowbc} follows from the assumption that there is a unique optimal policy that satisfies~\eqref{eq2}.  Therefore, 
 \begin{equation} \label{leftarrowb}
 a(x) = 0, 1 \leq x\leq M  \iff V(j)=\sum_{i=j}^M U(i), j=1, \ldots, M
 \end{equation}

Finally, \eqref{eq:inbelb} and \eqref{leftarrowb} yield
 \begin{equation}  
 a(x) = 0, 1 \leq x\leq M  \iff  0 \ge \sum_{i=1}^M U(i) - G/p - P +B 
 \end{equation}

\subsubsection{Conditions for the optimal policy to be always active}

It follows from the optimality conditions~\eqref{eq2} and Assumption~\ref{a:app2} that
 \begin{equation} \label{eqoriginal0}
 a(x) = 1, 1 \leq x\leq M  \iff H(x,1) \ge H(x,0), 1 \leq x \leq M
 \end{equation}
Therefore, from the above equation and ~\eqref{eq2A}-\eqref{eq2B},
 \begin{equation} \label{eqoriginal1}
 a(x) = 1, 1 \leq x\leq M  \iff V(1)-V(x) \ge G/p + P -B, 1 \leq x \leq M
 \end{equation}
    Since $V(x)$ is decreasing (see Proposition~\ref{vdecr_no3g}), it follows from~\eqref{eqoriginal1} that
 \begin{equation}
 a(x) = 1, 1 \leq x\leq M  \iff V(1)-V(2) \ge G/p + P -B
 \end{equation}
or equivalently,
 \begin{equation}  
 a(x) = 1, 1 \leq x\leq M  \iff V(2) < V(1) - G/p - P +B \label{eq:inbel}
 \end{equation}
Letting $x=1$ in~\eqref{eq2} yields
 \begin{equation} \label{leftarrow}
 a(x) = 1, 1 \leq x\leq M  \iff V(1)=-G+U(1)+pV(1) -pP +pB + (1-p)V(2) - E[r;1]
 \end{equation}
where  $\Leftarrow$  in~\eqref{leftarrow} follows since our analysis is restricted to threshold policies.  Note that $a(1)=1$ is implied by the right hand side of~\eqref{leftarrow} together with ~\eqref{eq2}.  If $a(1)=1$ then $a(x)=1, 1\leq x \leq M$.

Finally, \eqref{eq:inbel} and \eqref{leftarrow} yield
 \begin{eqnarray}
&&V(1)=-G+U(1)+pV(1) -pP +pB + (1-p)V(2) - E[r;1] \iff \\
\iff &&V(1)\leq-G+U(1)+pV(1) -pP +pB + (1-p)(V(1)-G/p-P+B) - E[r;1] \label{eqfinalal}
 \end{eqnarray}
The desired result follows from algebraic manipulation of~\eqref{eqfinalal},
 \begin{equation}  
 a(x) = 1, 1 \leq x\leq M  \iff E[r;1] < U(1) - G/p - (P-B)
 \end{equation}
\end{proof}

\begin{remark}
If Assumption \ref{a:app2} does not hold, the proof of Proposition \ref{propoalin} presented above remains valid after replacing all $\iff$ by $\Rightarrow$.
\end{remark}

\subsection{Proof of Proposition~\ref{propos1b}}

\label{app:propos1b}

The proof of Proposition~\ref{propos1b} follows directly from Proposition~\ref{propos1}.

\begin{proposition} \label{propos1}
If $s \leq s^{\star}-1$ then $E[r; s] \leq E[r; s+1]$.  Otherwise, $E[r; s] \geq E[r; s+1]$.
\end{proposition}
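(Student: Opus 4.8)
## Proof Proposal for Proposition~\ref{propos1}

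The plan is to work directly with the closed-form expression~\eqref{eq:ers} for $E[r;s]$ and examine the sign of the finite difference $E[r;s+1]-E[r;s]$. Writing $\pi_1(s) = (s+(1-p)/p)^{-1}$ to make the $s$-dependence explicit, the first step is to compute
\[
\left(s+\tfrac{1-p}{p}\right)\left(s+1+\tfrac{1-p}{p}\right)\big(E[r;s+1]-E[r;s]\big),
\]
clearing the two denominators $\pi_1(s)^{-1}$ and $\pi_1(s+1)^{-1}$ so that only polynomial-and-geometric terms remain. Expanding both bracketed sums in~\eqref{eq:ers}, most of the utility terms telescope or cancel: the term $\utility(s)$ moves from the geometric block into the plain-sum block, and the geometric block re-indexes. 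After collecting terms one should arrive at an identity of the form
\[
\left(s+\tfrac{1-p}{p}\right)\big(E[r;s+1]-E[r;s]\big) = p\sum_{i=1}^{M-s-1}\utility(i+s)(1-p)^{i-1} - E[r;s+1],
\]
i.e.\ the ``$\mathrm{diffreward}$'' identity already invoked in the proof of Proposition~\ref{propomultitude} (equations~\eqref{thres3a}--\eqref{thres3b} reference exactly such a formula). Deriving this identity cleanly is the bulk of the routine work, and I would present it as a displayed lemma-style computation rather than inline.

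The second step is the monotonicity argument itself. From the identity above, the sign of $E[r;s+1]-E[r;s]$ is the sign of $\Delta(s) := p\sum_{i=1}^{M-s-1}\utility(i+s)(1-p)^{i-1} - E[r;s+1]$, since the prefactor $s+(1-p)/p$ is strictly positive. The key structural fact I would establish is that $\Delta(s)$ changes sign at most once as $s$ increases, and does so from nonnegative to nonpositive — equivalently, that $E[r;s]$ is unimodal in $s$. For this I would show that once $E[r;s+1]-E[r;s] \le 0$ (i.e.\ $s \ge s^\star$ by Definition~\ref{def:1}), it stays $\le 0$ for all larger $s$. The cleanest route: subtract consecutive copies of the $\mathrm{diffreward}$ identity (at $s$ and at $s+1$) to get a recursion relating $E[r;s+2]-E[r;s+1]$ to $E[r;s+1]-E[r;s]$ with a positive multiplier plus a term proportional to $\utility(s)$; because $\utility$ is non-increasing and the geometric weights are positive, a nonpositive difference at step $s$ forces a nonpositive difference at step $s+1$. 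This reduces the whole proposition to: "the first $s$ with $E[r;s]\ge E[r;s+1]$ exists and all subsequent differences are $\le 0$,'' which together with Definition~\ref{def:1} gives both halves of the statement — $E[r;s]\le E[r;s+1]$ for $s\le s^\star-1$, and $E[r;s]\ge E[r;s+1]$ for $s\ge s^\star$.

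The main obstacle I anticipate is bookkeeping in the first step — correctly handling the boundary index $i=M-1-s$ in the geometric sum when $s$ is incremented (the sum loses its top term), and the term $\utility(M)$, which by the normalization $\utility(M)=0$ adopted right after~\eqref{eq3} should vanish but must be tracked carefully through~\eqref{eqersspecial}. A secondary subtlety is the edge cases $s=1$ and $s=M$ (and the convention $s=M+1$ meaning ``always inactive''), where the sums degenerate; these should be checked against Proposition~\ref{propoalin} for consistency but do not affect the monotonicity recursion. Once the $\mathrm{diffreward}$ identity is in hand, the unimodality recursion is short, and Proposition~\ref{propos1b} follows immediately as stated in~\myapp{app:propos1b}.
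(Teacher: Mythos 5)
Your proposal is correct, and it shares the paper's first step exactly: the ``diffreward'' identity you display is the same identity the paper derives from \eqref{eq:ers} and uses as the engine of its proof. Where you diverge is the second step. The paper runs a \emph{downward} induction anchored at the optimum: the base case invokes Definition~\ref{defopt} (that $s^{\star}$ is the argmax, so $E[r;s^{\star}-1]\le E[r;s^{\star}]$) to extract the bound $E[r;s^{\star}-1]\le p\sum_{i=1}^{M-s^{\star}}\utility(i+s^{\star}-1)(1-p)^{i-1}$, and the induction step combines this with monotonicity of $\utility$ and the induction hypothesis to push the inequality down to all $s<s^{\star}$ (the other direction is stated to follow similarly). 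You instead prove a \emph{forward} single-crossing property that never references $s^{\star}$: writing $c_s=s+(1-p)/p$, $D(s)=E[r;s+1]-E[r;s]$ and $S(s)=p\sum_{i=1}^{M-s-1}\utility(i+s)(1-p)^{i-1}$, subtracting consecutive copies of the identity gives
\begin{equation*}
(c_{s+1}+1)\,D(s+1)=c_s\,D(s)+p\bigl(S(s+1)-\utility(s+1)\bigr),
\end{equation*}
and since $\utility$ is non-increasing with $\utility(M)=0$ (hence $\utility\ge 0$), the last term is nonpositive, so $D(s)\le 0$ forces $D(s+1)\le 0$; this yields unimodality of $E[r;s]$, from which both halves of the proposition follow. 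This buys a cleaner, self-contained argument (and it proves the ``otherwise'' half explicitly, which the paper leaves as ``similar''), at the cost of two small points you should make explicit: (i) the extra term is $p\bigl(S(s+1)-\utility(s+1)\bigr)$, not literally ``proportional to $\utility(s)$'', and its sign uses both monotonicity and the normalization $U(M)=0$; (ii) since $s^{\star}$ is defined in the paper as the argmax (Definition~\ref{defopt}), not by \eqref{def:1} (which is the \emph{conclusion} of Proposition~\ref{propos1b} and would be circular to assume), you need the short bridging observation that the argmax of a unimodal sequence lies at (or within the plateau beginning at) the first index where the difference becomes nonpositive, which immediately gives $D(s)\ge 0$ for $s\le s^{\star}-1$ and $D(s)\le 0$ for $s\ge s^{\star}$.
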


\begin{proof}  Next, we show that if $s \leq s^{\star}-1$ then $E[r; s] \leq E[r; s+1]$ (the other case follows similarly).  
Let $s^{\star}$ be the optimal threshold. It follows from Definition~\ref{defopt}  that $E[r; s]- E[r; s^{\star}] \leq 0$ for $1 \leq s \leq M$. Next, we show that
$E[r; s^{\star}-m] \ge E[r; s^{\star}-(m+1)]$ for  $m < s^{\star}$.  The proof is by induction on $m$.

Algebraic manipulation of \eqref{eq:ers} yields
\begin{eqnarray}
&& \left( s-1+\frac{1-p}{p} \right) E[r;s-1]{-} \left( s+\frac{1-p}{p} \right)  E[r;s] \nonumber \\
&=&  \left[ \sum_{i=1}^{s-2} \utility(i)  {+} \sum_{i=0}^{M-s} \utility(i+s-1) (1-p)^{i}  - \frac{\cost}{p} -P+B\right]  - \left[ \sum_{i=1}^{s-1} \utility(i)  {+}  \sum_{i=1}^{M-s} \utility(i+s-1) (1-p)^{i-1}  - \frac{\cost}{p} {-}P+B \right] \nonumber\\
&=& -p\sum_{i=1}^{M-s} \utility(i+s-1) (1-p)^{i-1}. \label{diffreward}
\end{eqnarray}
Equation~\eqref{diffreward} yields
\begin{eqnarray} \label{eq:ers4}
\left({s+\frac{1-p}{p} }\right) (     E[r;s-1] {-} E[r;s] ) = E[r;s]  -p\sum_{i=1}^{M-s} \utility(i+s) (1-p)^{i-1}
\end{eqnarray}

\textbf{Base case: } It follows from Definition~\ref{defopt} that the statement holds for $m=0$.
Note that $E[r; s^{\star}-1] - E[r; s^{\star}] \leq 0$ yields
\begin{eqnarray}   \label{totsum2}
&&\left(s^{\star}-1+\frac{1-p}{p}\right) (E[r; s^{\star}-1] - E[r; s^{\star}])   \\
&\stackrel{(*)}{=}& E[r; s^{\star}] - p\sum_{i=1}^{M-s^{\star}} \utility(i+s^{\star}-1) (1-p)^{i-1} \\
&=& \left(s^{\star}+\frac{1-p}{p}\right) (E[r; s^{\star}-1] - E[r; s^{\star}]) \\
&\stackrel{(**)}{=}&   E[r; s^{\star}-1] - p\sum_{i=1}^{M-s^{\star}} \utility(i+s^{\star}-1) (1-p)^{i-1} \leq 0 \nonumber \\
&\therefore& E[r; s^{\star}-1]  \leq  p\sum_{i=1}^{M-s^{\star}} \utility(i+s^{\star}-1) (1-p)^{i-1} \label{lemmares}
\end{eqnarray}
where $(*)$ follows from ~\eqref{eq:ers} and $(**)$  follows from $(*)$ after summing $E[r; s^{\star}-1] - E[r; s^{\star}]$  to both sides.

\textbf{Induction hypothesis: } Assume that $E[r; s^{*}-m] \ge E[r; s^{*}-(m+1)]$ for  $m < t$.

\textbf{Induction step: } We show that  the proposition holds for $m=t$.
To this goal, we compare $E[r; s^{*}-t]$ and $ E[r; s^{*}-(t+1)]$,
\begin{eqnarray}
&& \left(s^{\star}-t+\frac{1-p}{p}\right) (E[r; s^{\star}-t] - E[r; s^{\star}-t-1])   \nonumber \\
&\stackrel{(*)}{=}&   p\sum_{i=1}^{M-s^{\star}+t} \utility(i+s^{\star}-t-1) (1-p)^{i-1} - E[r; s^{\star}-t] \nonumber \\
&>&  p\sum_{i=1}^{M-s^{\star}} \utility(i+s^{\star}-t-1) (1-p)^{i-1}  -E[r; s^{\star}-t]  \nonumber \\
& >&p\sum_{i=1}^{M-s^{\star}} \utility(i+s^{\star}) (1-p)^{i-1} -E[r; s^{\star}-t]  \nonumber \\
& \stackrel{(**)}{\ge} & p\sum_{i=1}^{M-s^{\star}} \utility(i+s^{\star}) (1-p)^{i-1} -E[r; s^{\star}-1]  \stackrel{(***)}{ \ge} 0 \nonumber
\end{eqnarray}
where $(*)$ follows from ~\eqref{eq:ers4}, $(**)$ follows from the induction hypothesis and ${(}{*}{*}{*}{)}$ follows from ~\eqref{lemmares}.   The proof is completed by noting that $s^{\star}-t+\frac{1-p}{p} \ge 0$ hence $E[r; s^{\star}-t] \ge E[r; s^{\star}-t-1]$.

\end{proof}

\subsection{Proof of Proposition~\ref{increasing}}

\begin{proof}
In what follows we show that the optimal threshold increases with respect to $G$.  The proof that the optimal threshold increases with respect to $P$ and decreases with respect to $B$ is similar. From~\eqref{eqersspecial},

\begin{equation} \label{deriv1}
\frac{d}{dG} E[r;s] = -\frac{1}{ps+1-p}
\end{equation}
Let $s_1 > s_0$.  Then,
\begin{eqnarray} 
\lim_{\Delta G \rightarrow 0} \frac{ E[r;s_1,G+\Delta G] - E[r;s_1,G]}{\Delta G}    >  \lim_{\Delta G \rightarrow 0} \frac{E[r;s_0,G+\Delta G] - E[r;s_0,G]}{\Delta G}
\end{eqnarray}

From~\eqref{eqersspecial}, it also follows that
\begin{equation}
E[r;s,G+\Delta G] - E[r;s,G] = -\frac{1}{s+(1-p)/p} \frac{\Delta G}{p}
\end{equation}

Therefore,
\begin{equation} \label{eqless}
E[r;s_1,G] - E[r;s_1,G+\Delta G]  < E[r;s_0,G]- E[r;s_0,G+\Delta G]
\end{equation}

Assume, for the sake of contradiction, that $s_1$ and $s_0$ are optimal thresholds when the energy cost is $G$ and $G+\Delta$, respectively,
\begin{eqnarray}
E[r;s_1,G] &\ge& E[r;s,G], \qquad \qquad s \neq s_1 \\
E[r;s_0,G+\Delta G] &\le& E[r;s,G+\Delta G], \quad s \neq s_0
\end{eqnarray}
In particular,
\begin{eqnarray}
E[r;s_1,G] &\ge& E[r;s_0,G]\\
E[r;s_0,G+\Delta G] &\le& E[r;s_1,G+\Delta G]
\end{eqnarray}
  Then, $E[r;s_1,G] - E[r;s_1,G+\Delta G]  \ge E[r;s_0,G]- E[r;s_0,G+\Delta G]$, which contradicts the ~\eqref{eqless}.

\end{proof}

\begin{remark}
In~\eqref{eqless}, $<$ should be replaced by $\le$ if Assumption~\ref{a:app2} does not hold. In this case, the optimal threshold in non decreasing with respect to $G$.
\end{remark}

\subsection{Proof of Proposition~\ref{propo:step} (step utility)} 
\label{app:step}

\begin{proof}
We assume that the optimal policy consists of remaining active in at least one state (otherwise, $s^{\star}=M+1$ and $E[r;s]=0$).  We consider two cases,

\underline{$i)$ $k > s$}:
\begin{equation} \label{toderive1}
E[r;s]=  \frac{1}{s+\frac{1-p}{p} } \left[ v k  - \frac{\cost}{p} -P +B\right]
\end{equation}

\underline{$ii)$  $k \leq s$}:  
\begin{eqnarray}
E[r;s]&=& \frac{1}{s+\frac{1-p}{p} } \left[ vs  + v \sum_{i=1}^{k-s}  (1-p)^{i}  - \frac{\cost}{p} -P \right]  =  \label{toderive2} 
 \frac{1}{s+\frac{1-p}{p} } \left[ vs  + \frac{v(1-p)}{p} -\frac{v(1-p)^{k-s+1}}{p} - \frac{\cost}{p} -P +B \right] \nonumber
\end{eqnarray}
We now show that in the above two cases the optimal threshold can be efficiently computed by comparing the values of $E[r;s]$ at five points.   To this goal, we first assume that the optimal threshold, $s^{\star}$,  can take real values (the assumption will be removed in the next paragraph). We refer to $s^{\star}$ as an interior maximum if $1 < s^{\star} < M$, and as a boundary maximum if $s^{\star}=1$ or $s^{\star}=M$.  Then, a necessary condition for $s^{\star}$ to be an interior maximum of $E[r;s]$ consists of $s^{\star}$  being a root of $\frac{d}{ds} E[r;s]=0$.    If $k > s$ then  $\frac{d}{ds} E[r;s]=0$  has no roots in the interval $[1,M]$, since~\eqref{toderive1} is monotonic with respect to $s$. If $k \leq s$,  let $\varphi$ be  the root of $\frac{d}{ds} E[r;s]=0$. Then, 
\begin{eqnarray}
\varphi &=& - \left( p+{W} \left(  \left( G+pP-pB \right) {\exp\left({-{\frac {
\ln  \left( 1-p \right)(1+ kp)+p }{p}}}\right)}{v}^{-1}
 \right) p+\ln  \left( 1-p \right) (1- p )\right)
\Big/ \left( \ln  \left( 1-p \right)  \right) {p} \label{varphi}
\end{eqnarray}
In~\eqref{varphi}, $W(x)$ denotes the  Lambert function, i.e., $W(x)=w$ if $w e^w  = x$.

Accounting for the fact that $s^{\star}$ is an integer, and that it might be either an interior or boundary maximum, yields,
\begin{equation}
s^{\star} {=}
\left\{ \begin{array}{ll} 1, & \textrm{if } k > s^{\star} \textrm{ and } vk-G/p-P+B > 0 \\
k-1, & \textrm{if }  k > s^{\star} \textrm{ and } vk-G/p-P+B \le 0 \\
\psi = \min(\lceil {\varphi} \rceil,M), & \textrm{if }  k \leq s^{\star} \textrm{ and } E[r;\psi] {\ge} E[r;\psi+ 1]  \\
\min(\lfloor  {\varphi} \rfloor,M), & \textrm{otherwise} \\
 \end{array} \right.
\end{equation}
\end{proof}
Therefore, $s^{\star}$ can be computed comparing the value of $E[r;s]$ at $s=1, k-1, \lfloor \varphi \rfloor, \lceil \varphi \rceil, M$.

\subsection{Linear utility}\label{app:linear}

When $U(x)=M-x$, the expected reward is
\begin{eqnarray}
E[r;s]&=& \pi_1 \Big[ (s-1)M - s(s-1)/2  + \frac{(1-p)^{M-s+1}+pM-ps-1+p}{p^2}  - \frac{\cost}{p} -P +B \Big] \label{ers1}  \nonumber
\end{eqnarray}

\subsection{Expected age}
\label{app:expectedage}

If $p>0$, the expected age, $A$, is $A=\sum_{i=1}^M i \pi_i$,

\begin{eqnarray}
A &=& \pi_1 \Big[ \sum_{i=1}^{s}{i} + \sum_{i=s+1}^{M-1}{i (1-p)^{i-s}} \nonumber + \frac{(1-p)^{M-s}}{p} {M} \Big] \\
&=&\sum_{i=1}^M i \pi_i= M \frac{1-p}{p} (1-p)^{M-1-s} \frac{1}{s+(1-p)/p} +  \sum_{i=1}^{s}  i \frac{1}{s+(1-p)/p} +  \sum_{i=s+1}^{M-1} (1-p)^{i-s}i\frac{1}{s+(1-p)/p} \nonumber \\
&=&\,{\frac {{p}^{2}{s}^{2}-{p}^{2}s-2\, \left( 1-p \right) ^{M-s} (1-p)+2\,sp+2-2\,p}{2 p \left( sp+1-p \right) }} \label{expectedage}
\end{eqnarray}

\subsection{Proposition~\ref{ageincrs}}

\begin{proposition} The expected age is an increasing function of $s$. \label{ageincrs}
\end{proposition}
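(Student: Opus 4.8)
The plan is to establish something slightly stronger than monotonicity of the mean: that the steady-state age under threshold $s+1$ stochastically dominates the steady-state age under threshold $s$. Since $A(s)=\sum_{t\ge 0}\Pr[X_s>t]$ for the (nonnegative, integer-valued) stationary age $X_s$, stochastic dominance immediately yields $A(s+1)\ge A(s)$, with strictness coming from strictness of the dominance at a single value of $t$.

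First I would write the stationary law in closed form. From \eqref{eq1a}--\eqref{eq2a} and \eqref{exavgre}, with $q:=1-p$, the stationary age satisfies $\pi_i(s)=\pi_1(s)$ for $1\le i\le s$, $\pi_i(s)=\pi_1(s)q^{i-s}$ for $s<i<M$, $\pi_M(s)=\pi_1(s)q^{M-s}/p$, and $\pi_1(s)=(s+q/p)^{-1}$. Summing the upper tail and using the telescoping identity $\sum_{j=1}^{n-1}q^j+q^{n}/p=q/p$, this collapses to
\begin{equation}
\Pr[X_s>t]=\left\{\begin{array}{ll}
1-\dfrac{tp}{ps+q},&0\le t\le s,\\[2mm]
\dfrac{q^{\,t-s+1}}{ps+q},&s\le t\le M-1,\\[1mm]
0,&t\ge M,
\end{array}\right.
\end{equation}
the two nontrivial branches agreeing at $t=s$.

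Second, I would compare $\Pr[X_{s+1}>t]$ with $\Pr[X_s>t]$ for each $t$, taking $1\le s\le M-1$. On $0\le t\le s$ both use the first branch, and $tp/(ps+q)$ is strictly decreasing in $s$, so the inequality holds, strictly once $t\ge 1$. On $s+1\le t\le M-1$ both use the geometric branch; cancelling the common factor $q^{\,t-s}>0$ reduces the claim to $\frac{1}{p(s+1)+q}\ge\frac{q}{ps+q}$, i.e.\ to $ps+q\ge q\!\left(ps+p+q\right)$, and the difference of the two sides equals exactly $p^{2}s>0$ after using $p+q=1$; the shared index $t=s$ is settled by the same computation. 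For $t\ge M$ both sides vanish. Hence $\Pr[X_{s+1}>t]\ge\Pr[X_s>t]$ for all $t$ with strict inequality at $t=1$, so $A(s+1)>A(s)$. Finally, the threshold $s=M+1$ (permanently inactive) pins the age at $M$, whereas $\pi_1(M)>0$ gives $A(M)<M$; together with the above this shows $A$ is increasing over the whole admissible range $1\le s\le M+1$.

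The only real obstacle is the bookkeeping in the second step: because the complementary c.d.f.\ is piecewise, the comparison must be split into the regions $t\le s$ and $t\ge s+1$, with care that the boundary index $t=s$ is handled by whichever branch is used on each side; the identities $\sum_{j=1}^{n-1}q^j+q^n/p=q/p$ and $p+q=1$ do the rest. An alternative, more computational route is to write $A(s)=N(s)/D(s)$ with $N(s)=\binom{s}{2}+\sum_{k\ge0}q^k\min(s+k,M)$ the expected per-cycle sum of ages and $D(s)=s+q/p$ the expected cycle length, observe that $A(s+1)-A(s)$ has the sign of $[N(s+1)-N(s)]D(s)-N(s)$, and verify the ``marginal exceeds average'' inequality $N(s+1)-N(s)=s+\frac{1-q^{M-s}}{p}>A(s)$ directly from \eqref{expectedage}; but the stochastic-dominance argument avoids having to guess the right algebraic identity and makes the monotonicity structurally transparent.
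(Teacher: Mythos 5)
Your proof is correct, and it takes a genuinely different route from the paper's. The paper works directly with the closed-form mean: it multiplies $A(s)$ and $A(s+1)$ by suitable factors, subtracts, and lands on the identity $(A(s)-A(s+1))(1-p)=\frac{ps}{s+1+\frac{1-p}{p}}\bigl(\frac{1}{s}\sum_{i=1}^{s}i-A(s)\bigr)$, then concludes by noting that $\frac{1}{s}\sum_{i=1}^{s}i=A(s,1)$ and asserting that the age is minimized at $p=1$, so the right-hand side is negative. You instead prove the stronger statement that the stationary age under threshold $s+1$ first-order stochastically dominates that under $s$, by writing the complementary c.d.f.\ in closed form and comparing it branch by branch; monotonicity of the mean then follows from $A(s)=\sum_{t\ge0}\Pr[X_s>t]$, with strictness at $t=1$. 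I checked the tail formula against \eqref{eq1a}--\eqref{exavgre} (including $\pi_M=\pi_1(1-p)^{M-s}/p$), the agreement of the two branches at $t=s$, and the key inequality in the geometric region, whose slack is indeed $p^2s>0$; all are correct, and your separate treatment of $s=M+1$ versus $s=M$ closes the boundary case. What your approach buys: it yields dominance of the whole distribution (hence monotonicity of any nondecreasing functional of the age, not just the mean) and it avoids the paper's auxiliary claim $A(s,1)<A(s,p)$, which the paper states without proof. What the paper's approach buys: it stays entirely within the already-derived expression \eqref{expectedage} and needs no distributional bookkeeping. The only blemish is cosmetic: for $s=M$ your displayed first branch nominally extends to $t=M$ where it conflicts with the zero tail, but your argument only ever invokes it for $t\le M-1$, so nothing breaks.
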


\begin{proof}
Let $A(s)$ be the average age of a threshold policy whose threshold is $s$.
Consider two threshold policies, with corresponding thresholds $s$ and $s+1$.  Then,
\begin{eqnarray}
A(s) &=& \sum_{i=1}^{s}\frac{i}{s+\frac{1-p}{p}} + \sum_{i=s+1}^{M-1}\frac{i (1-p)^{i-s}}{s+\frac{1-p}{p}}  + \frac{(1-p)^{M-s}}{p} \frac{M}{s+\frac{1-p}{p}} \label{eq:as} \\
A(s+1) &=& \sum_{i=1}^{s+1}\frac{i}{s+1+\frac{1-p}{p}} + \sum_{i=s+2}^{M-1}\frac{i (1-p)^{i-s-1}}{s+1+\frac{1-p}{p}}  + \frac{(1-p)^{M-s-1}}{p} \frac{M}{s+1+\frac{1-p}{p}} \label{eq:as1} 
\end{eqnarray}
Algebraic manipulation of~\eqref{eq:as}-\eqref{eq:as1} yields
\begin{eqnarray}
&& A(s)  \frac{s+\frac{1-p}{p}}{s+1+\frac{1-p}{p}} = \sum_{i=1}^{s}\frac{i}{s+1+\frac{1-p}{p}} +  \sum_{i=s+1}^{M-1}\frac{i (1-p)^{i-s}}{s+1+\frac{1-p}{p}} + \frac{(1-p)^{M-s}}{p} \frac{M}{s+1+\frac{1-p}{p}} \nonumber \\ \label{subtr1} \\
&&  A(s+1)  (1-p) = \sum_{i=1}^{s+1}\frac{i(1-p)}{s+1+\frac{1-p}{p}} + \sum_{i=s+2}^{M-1}\frac{i (1-p)^{i-s}}{s+1+\frac{1-p}{p}} + \frac{(1-p)^{M-s}}{p} \frac{M}{s+1+\frac{1-p}{p}} \nonumber \\ \label{subtr2}
\end{eqnarray}
Subtracting the left-hand side of ~\eqref{subtr2}  from the left-hand side of ~\eqref{subtr1}, and denoting the difference by $\Delta_1$,  yields
\begin{eqnarray}
\Delta_1&=& A(s)  \frac{s+\frac{1-p}{p}}{s+1+\frac{1-p}{p}}  - A(s+1)  (1-p) \nonumber\\
&=& A(s)-A(s+1)-A(s)\frac{1}{s+1+\frac{1-p}{p}}  + A(s+1) p \nonumber\\
& =& (A(s) -A(s+1))(1-p) + A(s) \frac{ps}{s+1+\frac{1-p}{p}}.
\end{eqnarray}
Subtracting the right-hand side of ~\eqref{subtr2}  from the right-hand side of ~\eqref{subtr1}, and denoting the difference by $\Delta_2$, yields
\begin{eqnarray}
\Delta_2 = \sum_{i=1}^{s}\frac{pi}{s+1+\frac{1-p}{p}}.
\end{eqnarray}
Letting $\Delta_1=\Delta_2$,
\begin{equation}
(A(s) -A(s+1))(1-p) {=} \frac{ps}{s+1+\frac{1-p}{p}}  \left( \frac{1}{s} \sum_{i=1}^{s}i -  A(s) \right)
\label{eqkeydecre}
\end{equation}

  Next, we make the dependence of the age on $p$ explicit, and denote it by $A(s,p)$.  
  
  The age is minimized when $p=1$,
\begin{equation} 
A(s,1) < A(s,p), \qquad 0 < p < 1
\end{equation}

  Letting $p=1$ in \eqref{expectedage},
\begin{eqnarray} \label{p1final}
A(s,1) = \frac{1}{s}\sum_{i=1}^{s}i.
\end{eqnarray}
~\eqref{p1final}  implies that the right hand side of ~\eqref{eqkeydecre} is negative.  
Therefore, it follows from \eqref{eqkeydecre}-\eqref{p1final} that $A(s) < A(s+1)$ and, for $s_1 < s_2$, $A(s_1) < A(s_2)$.
\end{proof}

\subsection{Proof of Proposition~\ref{optimalpubprob}}

\begin{proof}
Let $f$ be a function that maps a bonus level into the corresponding optimal threshold selected by users that solve the {\ensuremath{\mbox{\sc  User Problem}}}, 

\begin{eqnarray}
f: B & \rightarrow& s^{\star} \\
\mathbb{R} & \rightarrow& \mathbb{N}
\end{eqnarray}

Let $g$ be a function that maps a threshold $s$ into the \emph{minimum} bonus level $B$ such that  $s$ is the optimal threshold under $B$,

\begin{eqnarray}
g: s & \rightarrow& B \\
\mathbb{N} & \rightarrow& \mathbb{R}
\end{eqnarray}

Let $h$ be a function that maps a threshold $s$ into the pair of  bonus levels $(\underline{B}(s), \overline{B}(s))$ such that  $s$ is the optimal threshold under any bonus in the range $[\underline{B}(s), \overline{B}(s)]$,

\begin{eqnarray}
h: s & \rightarrow& (\underline{B}(s), \overline{B}(s)) \\
\mathbb{N} & \rightarrow& \mathbb{R}^2
\end{eqnarray}

Note that $f$ admits no inverse (changing $B$ might not alter $s^{\star}$).  The function $g$, in contrast, is injective, since given  $B$ the optimal threshold $s^{\star}(B)$ is unique.

The publisher must choose $B$ so as to minimize $A$.  We consider two cases, varying according to the value of $s(B)$ evaluated at  $B=P$,

\underline{case i)}  

$$s(P) < \frac{Np}{T}-\frac{1-p}{p}$$

  In this case, inequality~\eqref{ineq2} is not satisfied when $B = P$. Since $B$ cannot surpass $P$, the problem admits no solution.

\bigskip
\underline{case ii)} 

$$s(P) \ge \frac{Np}{T}-\frac{1-p}{p}$$ 
In this case, the problem admits a solution, to be derived next.

According to Proposition~\ref{ageincrs}, the average age $A$ is an increasing function of $s$. According to Proposition \ref{increasing}, the optimal threshold $s^{\star}$ is
a non-increasing function of $B$.   Therefore, the objective function $A$ of the {\ensuremath{\mbox{\sc  Publisher Problem}}}  is a non-increasing function of $B$.

The fact that $A$ is a non-increasing function of $B$, together with the fact that the l.h.s. of inequality~\eqref{ineq2} is increasing in $B$,  imply that the optimal bonus consists of setting $B$  as large as possible, without violating~\eqref{ineq2}.

Inequality~\eqref{ineq2} binds when
\begin{eqnarray}
s =  \frac{Np}{T}-\frac{1-p}{p}
\end{eqnarray}
Therefore, the optimal threshold, which must be integer, is 
\begin{eqnarray}
s = \min \left(M+1, \left\lceil \frac{Np}{T}-\frac{1-p}{p} \right\rceil  \right)
\end{eqnarray}
The solution of the {\ensuremath{\mbox{\sc  Publisher Problem}}} consists of finding $B$ such that $s^{\star}=s$.  

The bonus $B$ must be chosen in the range $[\underline{B}(s^{\star}),\overline{B}(s^{\star})]$.  Since $s^{\star}$ is a non-increasing function of $B$ (Proposition \ref{increasing}), the value of $B$ which yields $s^{\star} = \min \left(M+1, \left\lceil \frac{Np}{T}-\frac{1-p}{p} \right\rceil  \right)$  can be found using a binary search algorithm.

\end{proof}

\section{Additional results when $P_{3G} \rightarrow \infty$}

\subsection{Matrix form and linear program}

\begin{equation}
A_1=\left(
\begin{array}{ccccccccccccc}
1  &  -1 &  &        &   & \\
0  &  1 & -1&        &   & \\
0  &  0 & 1& -1      &   & \\
   &    &  & \ddots &   & \\
     &    &  &        & 1 & -1 & \\
   &    &  &   & & \ddots &   & \\
  &    &  &   & &      & 1 & -1 & \\
  &   &   &     & &   &    0  &    0 \\
 \end{array}
\right)
\end{equation}

\begin{equation}
A_2=\left(
\begin{array}{ccccccccccc}
-p +1 & -(1-p)  &  &        &   & &  &  \\
-p &  1 &  -(1-p) &        &   & \\
-p  &  0 & 1&   -(1-p)    &   & \\
   &    &  & \ddots &   & \\
     &    &  &        & 1 & -(1-p) & \\
   &    &  &   & & \ddots &   & \\
-p  &    &  &   & &      & 1 & -(1-p) & \\
-p  &   &   &     & &   &    0  &    -p \\
 \end{array}
\right)
\end{equation}

Let $V$ be the solution of the optimality equations.

\begin{eqnarray}
A_1 V &\ge& U \\
A_2 V &\ge& U - G + pP - pB 
\end{eqnarray}
Letting $\widetilde{A_1}$ and $\widetilde{A_2}$  denote matrices $A_1$ and $A_2$ with their last line and column removed, yields the following LP 
\begin{eqnarray}
\min & |\widetilde{V}| \\
\widetilde{V} &\ge& \widetilde{A_1}^{-1} U \\
\widetilde{V} &\ge& A_2^{-1} ( U - G + pP - pB )
\end{eqnarray}
where $\widetilde{V}$ is the vector $V$ with its last element removed.

\section{Derivation of Main Results Accounting for 3G and WiFi}
\label{sec:w3g}

We now analyze the case in which 3G is available.  The optimality conditions are, for $1 \leq x \leq M$,
\begin{eqnarray}   \boxed{ \boxed{ 
V(x)+g_u = \max \Big( U(x) {+}K(x),U(x) {-} G {+} pV(1) {+} (1-p) K(x) {-}p P {+}pB, U(x) {-}G {+} V(1){-}p P  {-}(1-p)P_{3G} {+} B \Big) }} \label{with3gop}
\end{eqnarray}
where $K(x)=V(\min(M,x+1))$.

\bigskip

Let $F(a,x)$ be the relative reward at state $x$ when action $a$ is chosen, plus $g_u$,
\begin{eqnarray}
F(0,x)&=& U(x) +V(\min(M,x+1))\label{fct1} \\
F(1,x)&=& U(x) - G + pV(1) + (1-p) V(\min(M,x+1)) -p P +pB \label{fct2}\\
F(2,x)&=& U(x) -G + V(1)-p P  -(1-p)P_{3G} + B.\label{fct3}
\end{eqnarray}
The optimality  conditions are, for $x=1,\ldots, M$,
\begin{eqnarray}
V(x)= \max\left( F(0,x)-g_u ,F(1,x)-g_u,F(2,x)-g_u \right). \label{fct4}
\end{eqnarray}
Let $\mathcal{F}(a,x)$, $a=0,1,2$, and $x=1,\ldots, M$, be boolean variables such that
\begin{eqnarray}
\mathcal{F}(0,x) = \top &\Leftrightarrow& F(0,x) \ge \max_i(F(i,x))\\
\mathcal{F}(1,x) = \top &\Leftrightarrow& F(1,x) \ge \max_i(F(i,x))\\
\mathcal{F}(2,x) = \top &\Leftrightarrow& F(2,x) \ge \max_i(F(i,x))\\
\end{eqnarray}

From~\eqref{fct1}-\eqref{fct3}, a policy that selects $a(x)$, $x=1,\ldots, M$, as follows, is optimal 

\begin{equation} 
\left\{ \begin{array}{llll} 0,  & \textrm{if }  V(1) {-} V(\min(M,x+1)) \leq {G}/{p} + P - B   & \textrm{ and } V(1) - V(\min(M,x+1)) \leq G + pP + (1-p)P_{3G} {-} B  & (i) \\ 
1,  & \textrm{if } V(1) {-} V(\min(M,x+1)) \ge {G}/{p} + P - B & \textrm{ and } V(1) - V(\min(M,x+1)) \leq P_{3G} - B & (ii) \\
2, & \textrm{if } V(1) {-} V(\min(M,x+1)) \ge P_{3G} - B  & \textrm{ and } V(1) - V(\min(M,x+1)) \ge G + pP + ( 1-p)P_{3G} {-} B \label{opt_cond3} & (iii)  \\
 \end{array} \right.
\end{equation}
where ties are broken arbitrarily.

\subsection{Proposition~\ref{prop:decr_w3g}}

\label{app:vdecrw3g}
\begin{proposition} \label{prop:decr_w3g}
$V(x)$ is a non-increasing function.
\end{proposition}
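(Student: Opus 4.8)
The plan is to follow the backward-induction strategy used in the proof of Proposition~\ref{vdecr_no3g}, but to streamline it by exploiting the fact that, once the optimal gain $g_u$ and the value $V(1)$ are fixed (their existence being guaranteed as in~\textsection\ref{sec:appdiscussion} via~\cite{cavazos1,cavazos2}), the right-hand side of the optimality equations~\eqref{with3gop} depends on the state $x$ only through $U(x)$ and through the single scalar $V(\min(M,x+1))$. Concretely, using~\eqref{fct1}--\eqref{fct4} I would first rewrite~\eqref{with3gop} as
\[
V(x) = U(x) - g_u + \phi\!\left(V(\min(M,x+1))\right), \qquad 1 \le x \le M,
\]
where $\phi(v) := \max\!\left(v,\ (1-p)v + c_1,\ c_2\right)$, with the constants $c_1 := -G + pV(1) - pP + pB$ and $c_2 := -G + V(1) - pP - (1-p)P_{3G} + B$ read off directly from~\eqref{fct1}--\eqref{fct3}.

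The key observation is then that $\phi$ is a \emph{non-decreasing} scalar function of its argument: since $0 < p < 1$, both $v \mapsto v$ and $v \mapsto (1-p)v + c_1$ are non-decreasing, $c_2$ is constant, and the pointwise maximum of non-decreasing functions is non-decreasing. This single fact replaces the explicit case analysis over which of the three actions $F(0,\cdot), F(1,\cdot), F(2,\cdot)$ attains the maximum.

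With these two ingredients I would prove $V(x) \ge V(x+1)$ for $1 \le x \le M-1$ by backward induction on $x$. For the base case $x = M-1$, note that $\min(M, M+1) = \min(M, M) = M$, so $V(M-1) = U(M-1) - g_u + \phi(V(M))$ and $V(M) = U(M) - g_u + \phi(V(M))$, whence $V(M-1) - V(M) = U(M-1) - U(M) \ge 0$ because $U$ is non-increasing. For the inductive step with $x \le M-2$, I use $\min(M,x+1) = x+1$ and $\min(M,x+2) = x+2$ to write
\[
V(x) - V(x+1) = \big(U(x) - U(x+1)\big) + \big(\phi(V(x+1)) - \phi(V(x+2))\big);
\]
the first term is non-negative since $U$ is non-increasing, and the second is non-negative since $V(x+1) \ge V(x+2)$ by the induction hypothesis and $\phi$ is non-decreasing. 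Hence $V(x) \ge V(x+1)$, and the induction is complete.

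I do not expect a genuine obstacle here. The only points needing care are the boundary bookkeeping at $x = M-1$ and $x = M$, where the clipping $\min(M, x+1)$ must be handled explicitly, and the standard verification that $g_u$ and $V(1)$ are constants independent of $x$, so that $\phi$ is a fixed scalar map. Unlike the WiFi-only argument in Appendix~\ref{sec:no3g}, no auxiliary statement about the monotonicity of the optimal action (analogous to~\eqref{existencecond}) is required, since encapsulating the three-way maximum inside the monotone map $\phi$ already propagates the monotonicity of $V$ backward through the states.
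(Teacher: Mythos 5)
Your proof is correct, and it takes a genuinely simpler route than the paper's. The paper establishes the same statement by splitting into the two price regimes $P_{3G}\le G/p+P$ and $P_{3G}>G/p+P$ and, within each, running a backward induction that tracks which of $F(0,\cdot)$, $F(1,\cdot)$, $F(2,\cdot)$ attains the maximum at each state, including a contradiction argument and auxiliary implications of the form $\mathcal{F}(M-i,0)=\top \Rightarrow \mathcal{F}(M-i-1,0)=\top$ that strengthen the induction hypothesis. You bypass all of this by noting that, for a fixed solution $(V,g_u)$ of \eqref{with3gop}, the three expressions \eqref{fct1}--\eqref{fct3} share the common summand $U(x)$ and otherwise depend on $x$ only through $K(x)=V(\min(M,x+1))$, so $V(x)=U(x)-g_u+\phi(K(x))$ with $\phi(v)=\max(v,(1-p)v+c_1,c_2)$ a fixed non-decreasing scalar map once $g_u$ and $V(1)$ are frozen; monotonicity of $U$ and of $\phi$ then propagate backward in a few lines, with only the boundary case $x=M-1$ needing separate (and trivial) treatment, and with no circularity since you are deriving inequalities among the already-determined values of one fixed solution rather than constructing it. Both arguments presuppose, as the paper does, existence of a solution to the average-reward optimality equations. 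What the paper's heavier case analysis buys is by-product information about how the optimizing action varies with the state; but the proposition as stated, and its downstream use in Proposition~\ref{temp1} (which combines monotonicity of $V$ with the conditions in \eqref{opt_cond3}), require only that $V$ be non-increasing, so your lighter argument suffices --- and it uses only the model assumption that $U$ is non-increasing, whereas the paper's proof invokes strict decrease of $U$ at a few steps.
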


\begin{proof}
According to ~\eqref{opt_cond3}-(ii),  if $P_{3G}< {G}/{p}+P$, there exists an optimal  policy in which action 1 is not selected.  Therefore, in what follows we separately consider two scenarios,
$P_{3G}\leq {G}/{p}+P$ and $P_{3G}> {G}/{p}+P$.

\bigskip
\begin{center}
\begin{equation}
\boxed{ \boxed{
\textrm{{scenario 1)  } } P_{3G} \leq \frac{G}{p}+P} }
\end{equation}
\end{center}
\bigskip

In this scenario, there exists an optimal policy wherein $a(x) \neq 1$, $x=1, \ldots,M$. We show by induction that 
\begin{itemize}
\item $V(M-i-1) - V(M-i) \ge 0$ for $i=0, \ldots, M-1$;
\item $\mathcal{F}(M-i,0) = \top \Rightarrow \mathcal{F}(M-i-1,0)=\top$ for $i=0, \ldots, M-1$.
\end{itemize}

\eqref{opt_cond3}(i) and \eqref{opt_cond3}(iii) yield the following remark,  used in the analysis of the  base cases of the  inductive arguments that follow.
\begin{remark} $\mathcal{F}(M,0)=\top   \iff \mathcal{F}(M-1,0) =\top$.  \label{remark1w3g} $\square$
\end{remark}

We consider two cases, $\mathcal{F}(M,0)=\top$ and $\mathcal{F}(M,2)=\top$.

\begin{itemize}
\item \underline{$\mathcal{F}(M,0)=\top$:} 

\textbf{Base case: } We first show that $V(M-1) \ge V(M) \label{ineq1}$. 
 It follows from  ~\eqref{fct4} and remark~\ref{remark1w3g} that
 \begin{eqnarray}
 V(M-1)&=&\utility(M-1)+V(M)-g_{u}  \label{base1H} \\
 V(M) &=& \utility(M)+V(M)-g_{u}\nonumber\label{base2HH}\\
 &\ge& \utility(M) + V(1) -pP -G -(1-p) P_ {3G} + B  -g_u. \label{base2H}
 \end{eqnarray}
Hence, \eqref{base1H}-\eqref{base2H} together with the fact that $U(M)$ is non-increasing and $U(M)=0$ yield 
$$V(M-1) \ge V(M)\geq V(1) -pP -G -(1-p) P_ {3G} + B - g_u.$$\\
\textbf{Induction hypothesis: } Assume that $V(M-m-1) - V(M-m) \ge 0$ for $m < t$ and that $\mathcal{F}(M-m,0) = \top$,  for $m \leq t$.\\
\textbf{Induction step: } Next, we show that $V(M-t-1) - V(M-t) \ge 0$ and that $\mathcal{F}(M-t-1,0)=\top$.

It follows from the induction hypothesis that 
\begin{equation}
V(1)-pP -G +(1-p)P_ {3G} + B\leq V(M-t+1) \leq V(M-t)
\end{equation}
 and $\mathcal{F}(M-t,0)=\top$.  Therefore,
 \begin{equation} \label{ineqright}
V(1) - V(M-t) \leq  V(1)-V(M-t+1)  \leq pP +G +(1-p)P_{3G} - B
 \end{equation}
 The rightmost inequality in~\eqref{ineqright}  and optimality conditions~\eqref{fct4} imply that $\mathcal{F}(M-t-1,0)=\top$, which yield
\begin{eqnarray}
V(M-t-1) &=& \utility(M-t-1) + V(M-t) - g_{u} \label{eq1aa} \\
V(M-t) &=& \utility(M-t) + V(M-t+1) - g_{u} \label{eq1ba}
\end{eqnarray}
Therefore, $V(M-t+1) \leq V(M-t)$ (induction hypothesis) and $\utility(M-t) \geq \utility(M-t-1)$ together with~\eqref{eq1aa}-\eqref{eq1ba} yield $V(M-t) \leq V(M-t-1)$.\\


\item \underline{$\mathcal{F}(M,2)=\top$:} 

\textbf{Base case: } We first show that $V(M-1) \ge V(M) \label{ineq1H}$.  It follows from the optimality conditions~\eqref{fct4}  and remark~\ref{remark1w3g} that
 \begin{eqnarray}
 V(M-1)&=& U(M-1)+ V(1)- G -p P-(1-p) P_{3G}-g_{u} + B \label{base1bH} \\
 V(M) &=&U(M)+V(1)- G -p P-(1-p) P_{3G}-g_{u} + B. \label{base2bH}
 \end{eqnarray}
 Hence, \eqref{base1bH}-\eqref{base2bH} and $U(M)<U(M-1)$ yield $V(M-1) \ge V(M)$.\\
\textbf{Induction hypothesis: } Assume that $V(M-m-1) - V(M-m) \ge 0$ for $m < t$ and that $\mathcal{F}(M-m+1,0) =\top \Rightarrow \mathcal{F}(M-m,0) =\top$, for $m \leq t$.\\
\textbf{Induction step: } Next, we show that $V(M-t-1) - V(M-t) \ge 0$ and $\mathcal{F}(M-t,0) =\top  \Rightarrow \mathcal{F}(M-t-1,0) =\top$.  We consider three cases,
\begin{itemize}
\item  \underline{$\mathcal{F}(M-t,0)=\top$}
 This case is similar to the corresponding one for $\mathcal{F}(M,0)=\top$.

\item \underline{$\mathcal{F}(M-t,2)=\top$ and $\mathcal{F}(M-t-1,2)=\top$}   Next, we show that $V(M-t-1) - V(M-t) \ge 0$. 
\begin{eqnarray}
V(M-t) &=& \utility(M-t)+V(1) - G -pP -(1-p) P_{3G}-g_{u}  + B \label{eq3a} \\
V(M-t-1) &=& \utility(M-t-1)+V(1) - G -p P -(1-p) P_{3G}-g_{u} + B.  \label{eq3b}
\end{eqnarray}
Since $\utility(M-t) \leq \utility(M-t-1)$, it follows from~\eqref{eq3a}-~\eqref{eq3b} that $V(M-t) \leq V(M-t-1)$. 

\item \underline{$\mathcal{F}(M-t,2)=\top$ and $\mathcal{F}(M-t-1,0)=\top$}   Next, we show that $V(M-t-1) - V(M-t) \ge 0$. 
\begin{eqnarray}
V(M-t) &=& \utility(M-t)+V(1) - G -pP -(1-p) P_{3G}-g_{u}  + B  \\
V(M-t-1) &=& \utility(M-t-1)+V(M-t) \label{vmt1}
\end{eqnarray}
Since $\utility(M-t-1) \ge 0$, it follows from~\eqref{vmt1} that $V(M-t) \le V(M-t-1)$. 

\end{itemize}
\end{itemize}

\bigskip
\begin{center}
\begin{equation}
\boxed{ \boxed{
\textrm{{scenario 2)  } } P_{3G} > \frac{G}{p}+P} }
\end{equation}
\end{center}
\bigskip

We show by induction that, for $i=0,..,M-1$,
\begin{itemize}
\item $V(M-i-1) \ge V(M-i)$, 
\item $\mathcal{F}(M-i,0)= \top\Rightarrow \mathcal{F}(M-i-1,0)=\top$
\item $\mathcal{F}(M-i,1)= \top \Rightarrow (\mathcal{F}(M-i-1,1)=\top \mbox{ or } \mathcal{F}(M-i-1,0)=\top)$
\end{itemize}
We consider three cases, $\mathcal{F}(M,0)=\top$, $\mathcal{F}(M,1)=\top$ and $\mathcal{F}(M,2)=\top$.

\eqref{opt_cond3}(i), \eqref{opt_cond3}(ii) and \eqref{opt_cond3}(iii), together, yield the following remarks,  used in the analysis of the  base cases of the  inductive arguments that follow.
\begin{remark} $\mathcal{F}(M,1)=\top   \iff \mathcal{F}(M-1,1) =\top$.  \label{remark1w3gb} $\square$
\end{remark}

\begin{remark} $\mathcal{F}(M,2)=\top   \iff \mathcal{F}(M-1,2) =\top$.  \label{remark1w3gc} $\square$
\end{remark}

\begin{itemize}
\item \underline{$\mathcal{F}(M,0)=\top$:} The proof is similar to that of scenario 1.
\item \underline{$\mathcal{F}(M,1)=\top$:} 

\textbf{Base case: }
It follows from remark~\ref{remark1w3gb} and  ~\eqref{fct1}-\eqref{fct4} that
 \begin{eqnarray}
 V(M-1)&=& U(M-1)+p V(1)-p P + pB -(1-p)V(M)-g_{u} \label{base3bH} \\
 V(M) &=&U(M)+p V(1)-p P + pB -(1-p)V(M)-g_{u}.\label{base4bH}
 \end{eqnarray}
 Hence, inequalities in \eqref{base3bH}-\eqref{base4bH} yield 
 $$V(M-1) \geq V(M)$$

\textbf{Induction hypothesis: } Assume that, for $m=0,..,t-1$,
\begin{itemize}
\item[a)] $V(M-m-1)\ge V(M-m)$
\item[b)] $\mathcal{F}(M-m,0)= \top\Rightarrow \mathcal{F}(M-m-1,0)=\top$
\item[c)] $\mathcal{F}(M-m,1)= \top \Rightarrow (\mathcal{F}(M-m-1,1)=\top \mbox{ or } \mathcal{F}(M-m-1,0)=\top)$
\end{itemize}

\textbf{Induction step: } Next, we show that 
\begin{itemize}
\item[a)] $V(M-t-1) \ge V(M-t)$
\item[b)] $\mathcal{F}(M-t,0)= \top\Rightarrow \mathcal{F}(M-t-1,0)=\top$
\item[c)] $\mathcal{F}(M-t,1)= \top \Rightarrow (\mathcal{F}(M-t-1,1)=\top \mbox{ or } \mathcal{F}(M-t-1,0)=\top)$
\end{itemize}

 It follows from the induction hypothesis that $V(M-t+1) \leq V(M-t)$.   We consider three cases,
\begin{itemize}
\item \underline{$\mathcal{F}(M-t,0)=\top$} The proof is similar to the corresponding one for $H(M,1) \le H(M,0)$ in the proof of Proposition~\ref{vdecr_no3g}.
\item \underline{$\mathcal{F}(M-t,1)=\top$} The proof is similar to the corresponding one for  $H(M,1) \ge H(M,0)$ in the proof of Proposition~\ref{vdecr_no3g}.
\item \underline{$\mathcal{F}(M-t,2)=\top$}  Due to the induction hypothesis (cases b) and c)), one of the two cases above holds.  Hence, it is not necessary to consider the case $\mathcal{F}(M-t,2)=\top$.
\end{itemize}

\item \underline{$\mathcal{F}(M,2)=\top$:} 

\textbf{Base case: } It follows from remark~\ref{remark1w3gc} that $V(M-1) \leq V(M)$.   

\textbf{Induction hypothesis: } Assume that, for $m=0,..,t-1$,
\begin{itemize}
\item $V(M-m-1)\ge V(M-m)$
\item $\mathcal{F}(M-m,0)=\top\Rightarrow \mathcal{F}(M-m-1,0)=\top$
\item $\mathcal{F}(M-m,1)=\top\Rightarrow (\mathcal{F}(M-m-1,1)=\top \mbox{ or } \mathcal{F}(M-m-1,0)=\top)$
\end{itemize}

\textbf{Induction step: } Next, we show that 
\begin{itemize}
\item $V(M-t-1) \ge V(M-t)$
\item $\mathcal{F}(M-t,0)=\top\Rightarrow \mathcal{F}(M-t-1,0)=\top$
\item $\mathcal{F}(M-t,1)=\top\Rightarrow (\mathcal{F}(M-t-1,1)=\top \mbox{ or } \mathcal{F}(M-t-1,0)=\top)$
\end{itemize}

We consider the following cases.
\begin{itemize}
\item \underline{$\mathcal{F}(M-t,0)=\top$} The proof is similar to that of scenario 1.

\item \underline{$\mathcal{F}(M-t,1)=\top$} First, we show that
\begin{equation}
(\mathcal{F}(M-t-1,1) =\top )  \textrm{ or }  (\mathcal{F}(M-t-1,0) =\top  )
\end{equation}
 For the sake of contradiction, assume 
 \begin{equation} \label{assumptionsakecontr}
(\mathcal{F}(M-t-1,2) =\top )  \textrm{ and  } (  (\mathcal{F}(M-t-1,0) \neq \top  ) \textrm{ and } (\mathcal{F}(M-t-1,1) \neq \top ) ) 
\end{equation}
Then,
\begin{eqnarray}
F(M-t-1,2) &\ge& F(M-t-1,1) \label{fmt12a} \\
F(M-t,1) &\ge& F(M-t,2) \label{fmt12b}
\end{eqnarray}
Replacing ~\eqref{fct1}-\eqref{fct3} into ~\eqref{fmt12a}-\eqref{fmt12b},
\begin{eqnarray} 
-G+V(1)-pP-(1-p)P_{3G}+B   &\ge& -G+pV(1)+(1-p)V(M-t)-pP+pB 
\label{useful1} \\
-G+pV(1)+(1-p) V(M{-}t+1) - pP{+} p B &\ge&  - G+ V(1) - pP - (1-p)P_{3G} + B  \label{useful2} 
\end{eqnarray}

From \eqref{useful1} and \eqref{useful2}  we obtain, respectively,
\begin{eqnarray}
V(1) - V(M-t)      -P_{3G} +B  &\ge& 0 \label{inducthypotapplhere} \\
-V(1) + V(M{-}t+1)  +P_{3G} -B  &\ge& 0 \label{v1vmt1p3ga}
\end{eqnarray}
Applying the induction hypothesis to~\eqref{inducthypotapplhere},
\begin{eqnarray} \label{v1vmt1p3gb}
V(1) - V(M-t+1)  -P_{3G} +B  &\ge& 0  
\end{eqnarray}
Therefore, from~\eqref{v1vmt1p3ga} and \eqref{v1vmt1p3gb},
\begin{eqnarray}
 V(M-t+1)  +P_{3G} -B  &=&  V(1) \label{nonuniqueness}
\end{eqnarray}	
~\eqref{nonuniqueness} together with ~\eqref{fct1}-\eqref{fct3} imply that both $\mathcal{F}(M-t+1,2)=\top$ and $\mathcal{F}(M-t+1,1)=\top$  are optimal, which contradicts~\eqref{assumptionsakecontr}.

To show that  $V(M-t-1) \ge V(M-t)$, an argument similar to the corresponding one for $H(M,1)\ge H(M,0)$ in the proof of Proposition~\ref{vdecr_no3g} can be applied.

\item \underline{$\mathcal{F}(M-t,2)=\top$ } We consider three subcases.
\begin{itemize}
\item \underline{$\mathcal{F}(M-t,2)=\top$ and $\mathcal{F}(M-t-1,2)=\top$} 
 The proof is similar to the corresponding case in scenario 1.
\item \underline{$\mathcal{F}(M-t,2)=\top$ and $\mathcal{F}(M-t-1,0)=\top$}  The proof is similar to the corresponding case in scenario 1.
\item \underline{$\mathcal{F}(M-t,2)=\top$ and $\mathcal{F}(M-t-1,1)=\top$}  In this case, it follows from $F(1,M-t-1) \ge F(2,M-t-1)$  that
\begin{eqnarray}
-G+pV(1)+(1-p) V(M{-}t) - pP{+} p B &\ge&  - G+ V(1) - pP - (1-p)P_{3G} + B \label{last}
\end{eqnarray}
From~\eqref{last},
\begin{equation}
pV(1)+(1-p) V(M-t) + p B \ge   V(1) - (1-p)P_{3G} + B
\end{equation}
Therefore, 
\begin{equation}
U(M-t-1)+pV(1)+(1-p) V(M-t) + p B \ge U(M-t)+V(1)-(1-p)P_{3G}+B  \label{vmt1lessvmt}
\end{equation}
~\eqref{vmt1lessvmt} together with ~\eqref{fct1}-\eqref{fct3} imply that $V(M-t-1) \ge V(M-t)$.
\end{itemize}

\end{itemize}

\end{itemize}

\end{proof}

\subsection{Proposition~\ref{temp1}}

\begin{proof}
The proof follows from Proposition~\ref{prop:decr_w3g} and optimality  conditions \eqref{opt_cond3}.
\end{proof}


\subsection{Proposition \ref{allinact_w3g} (conditions for the optimal policy to consist of using only one action)}

\begin{proposition}  \label{allinact_w3g}
Let $a(x)=\alpha$, $x=1, \ldots, M$ be an optimal policy. 

If $P_{3G} \leq\frac{G}{p}+P$ and\\
\begin{itemize}
\item $\alpha=2$,
\begin{equation}
U(1)-U(2) \ge G+(1-p) P_{3G} + pP  - B. \label{cond1_w3g}
\end{equation}
\item $\alpha=0$,
\begin{equation}
\sum_{i=1}^{M-1} U(M-j) \le G+ (1-p) P_{3G} +p P - B. \label{cond2_w3g}
\end{equation}
\end{itemize}
If $P_{3G} > \frac{G}{p}+P$ and \\
\begin{itemize}
 \item $\alpha=0$,
\begin{equation}
 \sum_{j=1}^{M-1} U(M-j) \le \frac{G}{p} +P - B. \label{cond3_w3g}
\end{equation}
\item $\alpha=1$,
\begin{eqnarray}
U(1) - p\sum_{i=1}^{M-1}\utility(i)(1-p)^{i-1} &\geq& (1-p)(G/p + P - B) \label{cond4_w3g}\\
\sum_{i=1}^{M-1}\utility(i)(1-p)^{i-1} &<& P_{3G} - B. \label{cond5_w3g}
\end{eqnarray}
\item $\alpha=2$,
\begin{equation}
U(1)-U(2) \ge P_{3G} -B. \label{cond6_w3g}
\end{equation}
\end{itemize}

\end{proposition}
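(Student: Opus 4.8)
The plan is to mimic the proof of Proposition~\ref{propoalin}, which characterizes when the optimal policy is constant in the WiFi-only model, treating the six subcases uniformly. For a fixed target action $\alpha\in\{0,1,2\}$, let $u$ be the stationary policy that plays $a(x)=\alpha$ in every state, with gain $g_u$ and relative value function $V_u$. By the average-reward optimality theorem, $u$ is optimal if and only if $V_u$ and $g_u$ satisfy the optimality equations \eqref{fct4}, i.e.\ $F(\alpha,x)\ge F(a,x)$ for all $a$ and all $x=1,\dots,M$ (with $V=V_u$ in \eqref{fct1}--\eqref{fct3}); equality for $a=\alpha$ is automatic, being the Poisson equation of $u$, so what must be checked are the inequalities for the other two actions. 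By \eqref{opt_cond3}, each of these is equivalent to a bound on $V_u(1)-V_u(\min(M,x+1))$ whose right-hand side is one of the three thresholds $G/p+P-B$, $P_{3G}-B$, $G+pP+(1-p)P_{3G}-B$. Since $V_u$ is non-increasing (Proposition~\ref{prop:decr_w3g}), $x\mapsto V_u(1)-V_u(\min(M,x+1))$ is non-decreasing, so an upper-bound constraint is tightest at $x=M-1$ (value $V_u(1)-V_u(M)$) and a lower-bound constraint is tightest at $x=1$ (value $V_u(1)-V_u(2)$). This reduces each subcase to one or two scalar inequalities in $V_u(1)-V_u(2)$ and $V_u(1)-V_u(M)$.

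The next step is to evaluate $g_u$, $V_u(1)-V_u(2)$ and $V_u(1)-V_u(M)$ for each $\alpha$. For $\alpha=0$ the age chain is absorbing at $M$, so $g_u=U(M)=0$ and $V_u(x)=\sum_{j=x}^{M-1}U(j)$, hence $V_u(1)-V_u(M)=\sum_{j=1}^{M-1}U(j)$. For $\alpha=2$ the chain resets to state~$1$ at every step, so $g_u=U(1)-G-pP-(1-p)P_{3G}+B$ and $V_u(x)=U(x)-U(1)$, hence $V_u(1)-V_u(2)=U(1)-U(2)$. For $\alpha=1$ the chain is the threshold chain with $s=1$, so $\pi_1=p$ and, by \eqref{exavgre} and \eqref{eq:ers}, $g_u=E[r;1]=p\bigl(\sum_{i=1}^{M-1}U(i)(1-p)^{i-1}-G/p-P+B\bigr)$; telescoping the Poisson equation $V_u(x)+g_u=F(1,x)$ downward from $x=M$ gives $V_u(x)=\sum_{i=0}^{M-1-x}(1-p)^iU(x+i)+c/p$ with $c=pV_u(1)-G-pP+pB-g_u$, so that $V_u(1)-V_u(2)=\tfrac{1}{1-p}\bigl(U(1)-G-pP+pB-g_u\bigr)$ and $V_u(1)-V_u(M)=\sum_{i=1}^{M-1}(1-p)^{i-1}U(i)$.

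Finally I would substitute these into the reduced inequalities and, separately in the regimes $P_{3G}\le G/p+P$ and $P_{3G}>G/p+P$, read off the binding constraint, using only the identities $\bigl(G+pP+(1-p)P_{3G}\bigr)-(G/p+P)=(1-p)(P_{3G}-G/p-P)$ and $P_{3G}-\bigl(G+pP+(1-p)P_{3G}\bigr)=p(P_{3G}-G/p-P)$, which show that the ordering of the three thresholds is dictated entirely by the sign of $P_{3G}-(G/p+P)$. When $P_{3G}\le G/p+P$, action~$1$ is dominated because $F(1,x)\le\max(F(0,x),F(2,x))$ for every $V$ (cf.\ \eqref{opt_cond3}(ii)), and the relevant threshold for both the always-$2$ and the always-$0$ condition is the middle one, $G+pP+(1-p)P_{3G}-B$, which collapses them to \eqref{cond1_w3g} and \eqref{cond2_w3g}. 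When $P_{3G}>G/p+P$, the always-$2$ and always-$0$ conditions reduce to $U(1)-U(2)\ge P_{3G}-B$ and $\sum_{j=1}^{M-1}U(j)\le G/p+P-B$, i.e.\ \eqref{cond6_w3g} and \eqref{cond3_w3g}, while for $\alpha=1$ the two surviving inequalities $V_u(1)-V_u(2)\ge G/p+P-B$ and $V_u(1)-V_u(M)\le P_{3G}-B$ become, after inserting $g_u$ and using $U(1)-G-pP+pB-g_u=U(1)-p\sum_{i=1}^{M-1}U(i)(1-p)^{i-1}$, exactly \eqref{cond4_w3g} and \eqref{cond5_w3g} (the strict inequality in \eqref{cond5_w3g} reflecting the tie-breaking/uniqueness convention, as in Proposition~\ref{propoalin}). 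I expect the only genuinely delicate step to be the $\alpha=1$ case, which requires the explicit Poisson solve above and the ensuing simplification; the $\alpha=0$ and $\alpha=2$ cases are immediate once the corresponding value functions are written down.
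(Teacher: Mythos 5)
Your proposal is correct and follows exactly the route the paper intends: its own proof of Proposition~\ref{allinact_w3g} is just the one-line remark that it is ``similar in spirit'' to Proposition~\ref{propoalin}, and your plan — solve the Poisson equation of each constant policy, use the monotonicity of $V$ from Proposition~\ref{prop:decr_w3g} to reduce the optimality conditions \eqref{opt_cond3} to the extreme states, and compare the three thresholds $G/p+P-B$, $G+pP+(1-p)P_{3G}-B$, $P_{3G}-B$ in the two regimes — is precisely that argument carried out in detail (and your explicit expressions for $g_u$, $V_u(1)-V_u(2)$ and $V_u(1)-V_u(M)$ check out, including the $\alpha=1$ simplification to \eqref{cond4_w3g}--\eqref{cond5_w3g}). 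The only caveat, which you already flag via the tie-breaking/uniqueness convention, matches the paper's own remark that the conditions are necessary and sufficient only under uniqueness of the optimal threshold policy.
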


\begin{proof}
The proof is similar in spirit to the one of Proposition~\ref{propoalin}.
\end{proof}

\begin{remark}
If the optimal threshold policy is unique, the conditions  in Proposition~\ref{allinact_w3g} are necessary and sufficient for the optimal policy to satisfy ~\eqref{cond1_w3g}-~\eqref{cond6_w3g}.
\end{remark}

\subsection{Optimal threshold properties with 3G}

To derive the optimal threshold strategy properties, we distinguish two scenarios: $P_{3G}\leq \frac{G}{p}+P$ and $P_{3G} > \frac{G}{p}+P$.

\bigskip
\begin{center}
\begin{equation}
\boxed{ \boxed{
\textrm{{scenario 1)  } } P_{3G} \leq \frac{G}{p}+P} }
\end{equation}
\end{center}
\bigskip

In this scenario, 
 \begin{equation}
a(x) = \left\{ \begin{array}{ll}  0, & \textrm{ if }  x<s_{3G}\\
2,&  \textrm{ otherwise} \end{array} \right.
\end{equation}
The transition matrix of the system, $\mathcal{P}_{u}$, is
\begin{equation}
\mathcal{P}_{u}=\left(
\begin{array}{lllllllllll}
0  &  1 &  &        &   & \\
0  &  0 & 1&        &   & \\
0  &  0 & 0& 1      &   & \\
   &    &  & \ddots &   & \\
   1  &    &  &        & 0 & 0 & \\
   &    &  &   & & \ddots &   & \\
1  &    &  &   & &      & 0 & 0 & \\
1  &   &   &     & &   &      &  0 \\
 \end{array}
\right)
\end{equation}

Let $\pi$ be the steady state solution of the system for a  threshold $s_{3G}$, $\pi \mathcal{P}_{u}  =  \pi$. Then,

\begin{eqnarray}
\pi_1 &=& \pi_i=\frac{1}{s_{3G}}, \quad i=1, \ldots, s_{3G} \\
\pi_{i} &=& 0, \quad i=s_{3G}+1, \ldots, M\\
\end{eqnarray}
The expected age, $A(s_{3G})=\sum_{i=1}^{M} \pi_i i$, is
 \begin{equation}
A(s_{3G}) = \left\{ \begin{array}{ll}  {(s_{3G}+1)}/{2}, & \textrm{ if }  1\leq s_{3G} \leq M\\
M, &  \textrm{ if } s_{3G} = M+1\end{array} \right.
\end{equation}
The average reward of this threshold policy is 
\begin{equation}  \boxed{
E[r;s_{3G}]=  \left\{ \begin{array}{ll}  \frac{1}{s_{3G}} \left[ \sum_{i=1}^{s_{3G}} \utility(i)- \cost -p P-(1-p) P_{3G} +B\right], & \textrm{ if }  1\leq s_{3G} \leq M\\
0, &  \textrm{ if } s_{3G} = M+1.\end{array} \right. }
\end{equation}

The optimal threshold value $s_{3G}^{\star}$ is
\begin{equation}
s^{\star}_{3G} = \min\left\{ s \Big| \sum_{i=1}^s d(i) -s \utility(s+1) \geq G+p P + (1-p) P_{3G} \right\}
\end{equation}

\bigskip
\begin{center}
\begin{equation}
\boxed{ \boxed{
\textrm{{scenario 2)  } } P_{3G} > \frac{G}{p}+P} }
\end{equation}
\end{center}
\bigskip

 In this scenario, let $s_W = s$, 
 \begin{equation}
a(x) = \left\{ \begin{array}{ll}  0 & \textrm{ if }  x<s_{W}\\
1 & \textrm{ if }  s_{W}\leq x <s_{3G}\\
  2&  \textrm{ otherwise} \end{array} \right.
\end{equation}
The case $s_{3G} = M+1$ was considered in Appendix \ref{sec:no3g}. Next, we  consider the case
$s_{3G} \leq M$. The transition matrix of the system, $\mathcal{P}_{u}$, is
\begin{equation}
\mathcal{P}_{\pi}=\left(
\begin{array}{lllllllllll}
0  &  1 &  &        &   & \\
0  &  0 & 1&        &   & \\
0  &  0 & 0& 1      &   & \\
   &    &  & \ddots &   & \\
   p  &    &  &        & 0 & 1-p & \\
   &    &  &   & & \ddots &   & \\
p  &    &  &   & &      & 0 & 1-p & \\
1  &   &   &     & &   &      &  0 \\
1  &   &   &     & &   &      &  0 \\
1  &   &   &     & &   &      &  0 \\
 \end{array}
\right)
\end{equation}
The  steady state solution, $\pi$, of the above system is
\begin{eqnarray}
\pi_1 &=& \pi_i, \quad i=1, \ldots, s_W \label{eqH1}\\
\pi_{i} &=& \pi_{i-1} (1-p)=\pi_1 (1-p)^{i-s_W}, \quad i=s_W+1, \ldots, s_{3G}\label{eqH2} \\
\pi_{i} &=& 0,  \quad i=s_{3G}+1, \ldots, M \\
\sum_{i=1}^M \pi_i &=& 1
\end{eqnarray}
To obtain $\pi_1$, we sum \eqref{eqH1} and \eqref{eqH2},
\begin{eqnarray} \label{yieldsp1}
s_W  \pi_1 +  \pi_1 \sum_{i=s_W+1}^{s_{3G}}  (1-p)^{i-s}   &=&1 
\label{exavgre1}
\end{eqnarray}
therefore,
\begin{eqnarray}
\pi_1&=&\frac{1}{s_{W}-1+(1-(1-p)^{s_{3G}-s_{W}+1})p^{-1} }.
\end{eqnarray}
The expected reward is
\begin{equation} \label{eq:ers_w3g} \boxed{
E[r;s_W,s_{3G}]=\pi_1   \left[ \sum_{i=1}^{s_W-1} \utility(i) + \sum_{i=s_W}^{s_{3G}} (1-p)^{i-s_W}\utility(i)
- (G/p+P-B) - (P_{3G}-P-G/p)(1-p)^{s_{3G}-s_{W}+1}\right] } .
\end{equation}
The optimal thresholds $(s_W^{\star},s_{3G}^{\star})$ are 
$(s_W^{\star},s_{3G}^{\star}) = {\arg\max}_{(s_W,s_{3G})} \{E[r;s_W,s_{3G}]\}$. 

\subsection{On the number of optimal thresholds}

If there exists an  optimal policy which consists of switching being inactive and WiFi, the results presented in \textsection\ref{sec:appdiscussion} hold if we restrict to policies for which $a(x) \in \{0, 1\}$ ($x=1, \ldots, M$).  In contrast,  if there is an  optimal policy which consists of adopting 3G at state $s_{3G} < M$, the optimal policy is in general not unique, since changing the actions chosen at states $s > s_{3G}$ does not compromise the optimality of the resulting policy.  A characterization of the number of optimal two-threshold policies in this context is subject of future work.

\newpage

\pagebreak
\clearpage

\section{Learning algorithm} 

\subsection{Proof of Proposition~\ref{proposition:converge}}
In what follows, we assume that the {{{\ensuremath{\mbox{\sc  Publisher Problem}}}}} admits a solution $B^{\star}$, and that $B^{\star} > 0$.

\begin{proof}
We index  the discrete rounds of Algorithm~\ref{alg:algorit} by  $n$ 
$(n=1, 2, 3, \ldots)$,
\begin{eqnarray} \label{differenceeq0}
B_{n+1} &\leftarrow &\min\left(\widehat{B}, \max\left(0, B_n + 
\frac{\alpha}{n} \left( T- Q_n\right)\right) \right)
\end{eqnarray}

Then, we prove a generalized  version of  Proposition~\ref{proposition:converge}.   Namely, we consider the following algorithm, obtained from~\eqref{differenceeq0} after removing the $\max$ and $\min$ operators,
\begin{eqnarray}  \label{differenceeq}
{B}_{n+1} &\leftarrow &  {B}_n + \frac{\alpha}{n}  (T-Q_n)
\end{eqnarray}
Next, we show that~\eqref{differenceeq} converges  to the optimal solution of the {{{\ensuremath{\mbox{\sc  Publisher Problem}}}}} with probability~1. \eat{  We  also show that the  basin of attraction of~\eqref{differenceeq} is unique.}  If $\widehat{B}$ is sufficiently large,  the convergence of~\eqref{differenceeq} to the optimal solution of the {{{\ensuremath{\mbox{\sc  Publisher Problem}}}}} implies the convergence of~\eqref{differenceeq0} to such a solution as well (see~\cite[Lemma 3.3.8 and \textsection 5.4]{borkar} for details).

\subsubsection{Martingale definition} Let $M_{n+1}= E[Q_n]-Q_n$ and $g(B)=T-E[Q_n(B)]$.  Then
\begin{eqnarray} \label{eq:aporiginal1}
B_{n+1} &\leftarrow & B_n + 
\frac{\alpha}{n}  \left(g(B_n)+M_{n+1}\right)
\end{eqnarray}
The history of values $\{B_m\}$ 
and $\{M_m\}$, $m=1, \ldots, n$, yields a
$\sigma$-field ${\cal F}_n$, $  {\cal F}_n =\sigma(B_m,M_m, m\leq n)$. 

Note that $\{B_m\}$, $m=2, \ldots, n$ are fully characterized by $B_1$ 
and  $\{M_m\}$, $m=1, \ldots, n$. Therefore $  {\cal F}_n =\sigma(B_1,M_m, m\leq n)$.

Moreover, $\{M_n\}$, $n=1,2,\ldots$ is a sequence of zero mean random 
variables satisfying
$$
E\left[M_{n+1}| {\cal F}_n\right]=0\; a.s., n\geq 0.
$$
As a consequence, $\{M_n\}$ is a martingale difference sequence  with 
respect to
the increasing $\sigma$-fields ${\cal F}_n$, $n=1,2,\ldots$.

Recall that $N$ is the number of mobiles in the network. Note that
\begin{equation}
Q_n \leq N
\end{equation}
Therefore, $|M_{n+1}|=|E[Q_n]-Q_n| \leq N$, thus $\{M_n\}$ is square-integrable, i.e.,
\begin{equation} \label{squareint}
E[|M_{n+1}|^2|{\cal{F}}_n] \textrm{ is finite}
\end{equation}

\bigskip
\subsubsection{Differential inclusion definition}
Since the function $g(B)=T-E[Q_n(B)]$ is discontinuous, we use 
differential inclusions, an extension of the ordinary differential 
equations, to approximate~\eqref{differenceeq}.  The solution concept 
adopted for the differential inclusions is the  \emph{Filippov 
solution}~\cite{filippov2}.  \eat{fillipov,stability}.

Let $s_0$ be the number of discontinuity points of $g$ in the interval 
$(0,\widehat{B}]$.
Let $B^{(i)}$ be the value of $B$ at the  $i^{th}$ discontinuity point 
of $g(B)$, $i=0, 1, 2, \ldots, s_0$.
Then, $B^{(0)} =0< B^{(1)} < \ldots < B^{(s_0)}=\widehat{B}$ (see 
Figure~\ref{fig:convergence}).

Next, we approximate~\eqref{differenceeq}  by the following differential 
inclusion
\begin{equation}
\dot B(t)\in F(B(t))
\label{DI}
\end{equation}
where $F$ is a set-valued map
$$F: [0,\widehat{B}]\to \mbox{subset of } \mathbb{R}$$
$F$ is defined as follows
\begin{equation}
F(B)=\left\{
\begin{array}{ll}
\{g(B)\}, & \mbox{ if } B\not= B^{(i)}, \qquad i=1, \ldots, s_0 \\
\;[g(B^{(i-1)}), g(B^{(i)})], &
\mbox{ if } B=B^{(i)}, \qquad i=1, \ldots, s_0
\end{array}
\right.
\end{equation}

Since the function $F$ is semi-continuous and, for all $B(t)$, $F(B(t))$ 
is a compact and a convex set,  there exists a solution to the 
differential inclusion \eqref{DI}.  In addition, it follows from Theorem 
2 in \cite{filippov2} \eat{(see also~\cite{filippov}) } that such a solution is 
unique.

Let $B^{\star}$ be the following solution to the {{\ensuremath{\mbox{\sc  Publisher Problem}}}},  
\begin{equation}
B^{\star}=\max\Big\{B^{(i)} \Big| g\left(B^{(i)}\right) \ge  0\Big\}
\label{sol}
\end{equation}
Note that $0\in [g(B^{(i)})=g(B^{\star}), g(B^{(i+1)})]$ and let 
\begin{equation}
\widetilde{T} = \max\Big\{ t \le T \Big| t - \frac{N}{s(B^{\star})+(1-p)/p} = 0 \Big\}
\end{equation}
The \emph{basin of attraction} of~\eqref{DI} is  (see Figure~\ref{fig:convergence}),
\begin{eqnarray*}
\left\{
\begin{array}{ll}
[B^{(i)}=B^{\star}, B^{(i+1)}], &  \textrm{if  } T = \widetilde{T} \\
B^{(i+1)}, &  \textrm{if  }  T \neq \widetilde{T}, \textrm{ where } B^{(i)}=B^{\star}
\end{array}
\right.
\end{eqnarray*}

\bigskip
  We establish the global stability of the basin of attraction 
of~\eqref{DI}  in~\textsection\ref{learning:sub2}, and then 
in~\textsection\ref{learning:sub3}  we use~\cite[Theorem 2, 
\textsection{5}]{borkar} to show that the algorithm \eqref{differenceeq} 
converges to the basin of attraction of~\eqref{DI}.

\subsubsection{Global stability of basin of attraction} 
\label{learning:sub2}
In order to establish the global stability of the basin of attraction of~\eqref{DI}, we use the following Lyapunov function
\begin{equation}
V({B})=({B}-B^{*})^2
\end{equation}

We begin by stating two definitions which generalize the notions of 
gradients and derivatives to encompass differential 
inclusions~\cite{shevitz}.

\begin{definition}
The Clarke generalized gradient of $V$ in $B$ is defined as
\begin{eqnarray}
\partial V(B)&=& co\{\lim_{l\to \infty} \Delta V(B_l) : (B_l)\to (B)\}
\end{eqnarray}
where $co\{A\}$ is the smallest convex and bounded set containing $A$.
\end{definition}

\begin{definition}
The set-valued derivative of $V$ is  defined as
\begin{eqnarray*}
\dot{V}({B})&=&\{a\in \mathbb{R}\;: \exists v \in F({B}) \mbox{ so that 
} pv=a, \;\; \forall p\in \partial V({B})\}
\end{eqnarray*}
\end{definition}

It follows from the two definitions above that
\begin{eqnarray}
\partial V(B)&=&\{ 2({B}-B^{*})\}
\end{eqnarray}
and
\begin{eqnarray*}
\dot{ V}({B})
&=& \left\{
\begin{array}{ll}
\{2({B}-B^{\star}) g(B)\},&\mbox{ if } B\not=B^{(j)}, \qquad j=1, \ldots, s_0\\
\;[2({B}-B^{\star}) g(B^{(j-1)}), 2({B}-B^{\star}) g(B^{(j)})], & \mbox{ if } B=B^{(j)}
\end{array}
\right.
\end{eqnarray*}

We divide the analysis of the stability of the basin of attraction of~\eqref{DI} into two cases, varying according to whether $T \neq \widetilde{T}$ (see Figure~\ref{fig:convergence}(b)) or $T = \widetilde{T}$ (see Figure~\ref{fig:convergence}(c))
\begin{itemize}
\item If $T \neq \widetilde T$ (see Figure~\ref{fig:convergence}(b)), $\dot{ 
V}({B})<0$ for $B\not=B^{\star}$, 
\begin{itemize}
\item \underline{$B \neq B^{(j)}$}:  If $B > B^{\star}$, then $g(B) = 
T-E[Q_n(B)] < 0$ since $E[Q_n]$ is an increasing function of $B$ (see 
Proposition~\ref{increasing}). If $B < B^{\star}$ the result follows 
similarly;
\item \underline{$B = B^{(j)}$ and $B\neq B^{\star}$}:  If $B > 
B^{\star}$, then $g(B^{(j-1)}) = T-E[Q_n(B^{(j-1)})] < 0$ since $E[Q_n]$ 
is an increasing function of $B$ (see Proposition~\ref{increasing}) and 
$B^{(j-1)} \leq B^{(j)}$. If $B < B^{\star}$ the result follows similarly.
\end{itemize}
\item Using similar arguments, we can establish the stability of the basin of attraction when $T = \widetilde T$ (see 
Figure~\ref{fig:convergence}(c)).  Let  $\underline{B^{\star}}=B^{\star}=B^{(i)}$ and  $\overline{B^{\star}}=B^{(i+1)}$. \eat{ Note that $g(\underline{B^{\star}})=0$.   Hence,} Then,   $\dot{ V}({B})<0$ 
for $B\not \in [g(\underline{B^{\star}}), g(\overline{B^{\star}})]$.
\end{itemize}
\medskip

Therefore, the global stability of the basin of attraction of~\eqref{DI} 
follows from the Lyapunov theorem (see ~\cite{borkar}). \eat{ and khalil ~\cite{vidya}).}

\bigskip
\subsubsection{Algorithm convergence to differential inclusion}  Next, our goal is to show that 
~\eqref{differenceeq} is well approximated by ~\eqref{DI}. To this goal, we define the linear interpolation of $B_n$ and establish the finiteness of $\sup_n |B_n|$.

\emph{Interpolation of $\{B_n\}$: } 
\label{learning:sub3}
   Let  $\bar B(t)$ be the linear interpolation of $B_n$, $n=1, 2, 3, \ldots$
   \begin{equation}
   \bar B(t)=  \label{eqlimitpoint}
   \left\{
\begin{array}{ll}
B_n, &  t=\sum_{i=1}^n  {\alpha}/{i} \\
B_n + \displaystyle \frac{t-\sum_{i=1}^n  {\alpha}/{i} 
}{\sum_{i=1}^{n+1}  {\alpha}/{i}- \sum_{i=1}^{n}  {\alpha}/{i}}  
(B_{n+1} - B_n), & t \in (\sum_{i=1}^n  {\alpha}/{i},\sum_{i=1}^{n+1}  
{\alpha}/{i})
\end{array}
\right.
\end{equation}

$\bar{B}(t)$ is a piecewise linear and continuous
function.

\emph{Finiteness of $\sup_n|B_n|$: } Next, we show that  $\sup_{n} |B_n|<\infty$.   For $B\in[0,\infty)$, $F(B)$ is  a compact and convex set and 
\begin{equation}
\sup_{y\in F(B)} |y|< g(B^{(0)})<  g(B^{(0)})(1+ B)
\label{bound}
\end{equation}
Let $h_c(B)$ be defined as in \cite[\textsection 3.2-(A5)]{borkar}, $h_c(B) = g(cB)/c =   ({T -  N/(s(cB)+(1-p)/p)   })/{c}$. Then, for   $B\in[0,\infty)$,
\begin{equation}
h_{\infty}(B) = 0 \label{hinf}
\end{equation}
Therefore, similar arguments as those in \cite[\textsection 3.2]{borkar}  applied to \eqref{squareint}, \eqref{bound}  and \eqref{hinf}    yield
\begin{equation}
\sup_{n} |B_n|<\infty
\label{stab}
\end{equation}

\emph{Concluding the proof: } 
Given that \eqref{stab} (which corresponds to~\cite[(5.2.2)]{borkar}) holds, Theorem 2  in 
\cite[\textsection 5]{borkar} establishes that  almost surely every
limit point of ~\eqref{eqlimitpoint}
satisfies \eqref{DI} (see~\cite{borkar} for details). As a consequence,  
since the sequence $\{B_n\}_{n=1,2,\ldots}$ generated 
by~\eqref{differenceeq} is contained in~\eqref{eqlimitpoint}, it 
converges almost surely to ~\eqref{DI} (see Corollary 4 of 
\cite[\textsection 5]{borkar} for details).
\begin{figure}
\center
\includegraphics[scale=0.55]{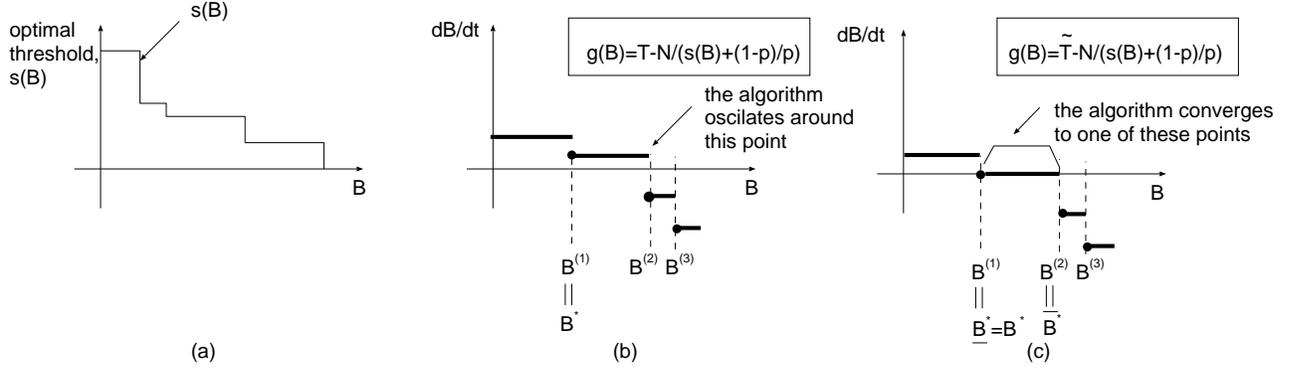}
\caption{Convergence of learning algorithm. $(a)$ the threshold policy 
adopted by users is piecewise constant; $(b)$ $g(B)$ when $T \neq 
\widetilde{T}$; $(c)$ $g(B)$ when $T = \widetilde{T}$} 
\label{fig:convergence}
\end{figure}
\end{proof}

\subsection{Binary search and stochastic learning}

In this paper we proposed the use of binary search to obtain the optimal bonus level when the publishers have complete information~(see Proposition~\ref{optimalpubprob}) and stochastic learning in face of incomplete information~(see Proposition~\ref{proposition:converge}).  A variation of binary search, as proposed in~\cite{laurent}, can also be used in face of incomplete information.  However, we do not have a proof of the convergence of the modified  binary search algorithm in our setting.  Since the binary search algorithm may have faster convergence to the optimal solution, it can be used to obtain an initial value for the stochastic learning, whose convergence was established in~Proposition~\ref{proposition:converge}.  A comparison of pros and cons of the modified binary search and stochastic learning is  provided in~\cite{laurent}.

\subsection{Simulation }

\label{app:simulation}

In order to 1) study the behavior of our algorithm when the population size varies and to 2) investigate how the convergence speed may be affected by the correlation among users, we conducted trace-driven simulations, whose results  are   shown 
in~\url{http://www-net.cs.umass.edu/~sadoc/agecontrol/bus-ap-ct20/index.html}.

\paragraph{Simulation setup}

We set the parameters of our learning algorithm as follows: $M$=30, $G$=0.4, $T$=11, $P$=100 and  $\tau=10$ time slots. The number of users is initially 105  and drops to 90 at round 100. Finally, $\alpha$=10 for trace driven simulations, and $\alpha=20$ under the uniformity and independence assumption.

In our trace drive simulations in this section, we  consider half of the population in one bus and the other half in another.  Recall that for each bus shift we generate a  string of zeros and ones corresponding to slots with and without a useful contact opportunity, respectively.  In order to distinguish the users in two buses, while simulating the opportunities observed by each user we assume that half of them  take their first observations in the beginning of the string of zeros and ones while the other half take their first observations in the middle of the trace.  After taking their first observations, the subsequent ones are drawn in sequence out of the string of zeros and ones (our simulator is available at~\url{http://www-net.cs.umass.edu/~sadoc/agecontrol/learning.tgz}).

Despite the correlations among users,  for the 88 bus shifts analyzed our trace driven  results did not significantly deviate from the ones  obtained with  uncorrelated users.   In \url{http://www-net.cs.umass.edu/~sadoc/agecontrol/bus-ap-ct20/index.html} we show for each bus shift the results obtained with trace driven simulations as well as under the uniformity and independence assumption (the contact opportunity being obtained from traces).   In what follows, we illustrate our results with two examples.

\paragraph{Simulation Results}

Figures~\ref{5figsA} and ~\ref{5figsB} illustrate our simulation results for two bus shifts.  Figures~\ref{5figsA}(a) and ~\ref{5figsB}(a)  show the contact opportunities.  Note that while in Figure~\ref{5figsA}(a)  the opportunities are roughly uniform across different time slots, in Figure~\ref{5figsB}(a) there is a high concentration of opportunities close to Haigis Mall, but not many in-between arrivals at Haigis Mall.  This, in turn, impacts the results obtained with the learning algorithms, as shown in Figures~\ref{5figsA}(b)-(c) and ~\ref{5figsB}(b)-(c).  The dotted lines represent the optimal range of bonuses.  While in Figure~\ref{5figsA}(b) the bonus computed using the learning algorithm converges to optimal values, in Figure~\ref{5figsB}(b) the difference between the optimal bonus and the one obtained with the learning algorithm was around 20.  Despite such difference, though, note that the number of transmissions experienced by the service providers oscillated around its target, eleven, in both cases (see Figures~\ref{5figsA}(c) and Figures~\ref{5figsB}(c)).  

In Figure~\ref{5figsA}(c) the number of transmissions remains stable during certain intervals of time (for instance, in the interval [20,30]).  This is partly due to the synchronization of the users, that are assumed to begin with the same initial state and to be experience correlated contact opportunities.       In contrast, if we consider users that experience contact opportunities uniformly at random, such synchronizations do not occur and the bonus computed using the learning algorithm converges to optimal values in both Figure~\ref{5figsA} and~\ref{5figsB} (see Figures~\ref{5figsA}(d)-(e) and Figures~\ref{5figsB}(d)-(e)). 

Further insight into the synchronization among users in Figures~\ref{5figsA}(c) is obtained from Figure~\ref{samplepath}.   Each curve in Figure~\ref{samplepath} shows the age of the users in each of the buses. Since all the users are assumed to begin with age 1 at slot 1, the age of all users in a bus remains equal throughout our simulations.  Figure~\ref{samplepath} shows that from slots 300 to 400 (which correspond to rounds 30 to 40), in each round each user issues one update.  Since there are 105 users, the average number of updates per slot is 10.5.  Note that such synchronization does not occur between slots 400 and 450, when the number of updates issued by a user in a given slot varies between 1 and 2.  The synchronization between users is an artifact due to the assumption that they have the same initial state at slot 1, and does not occur if users have their initial states sampled uniformly at random.

\begin{figure*}
\center
\includegraphics[scale=0.4]{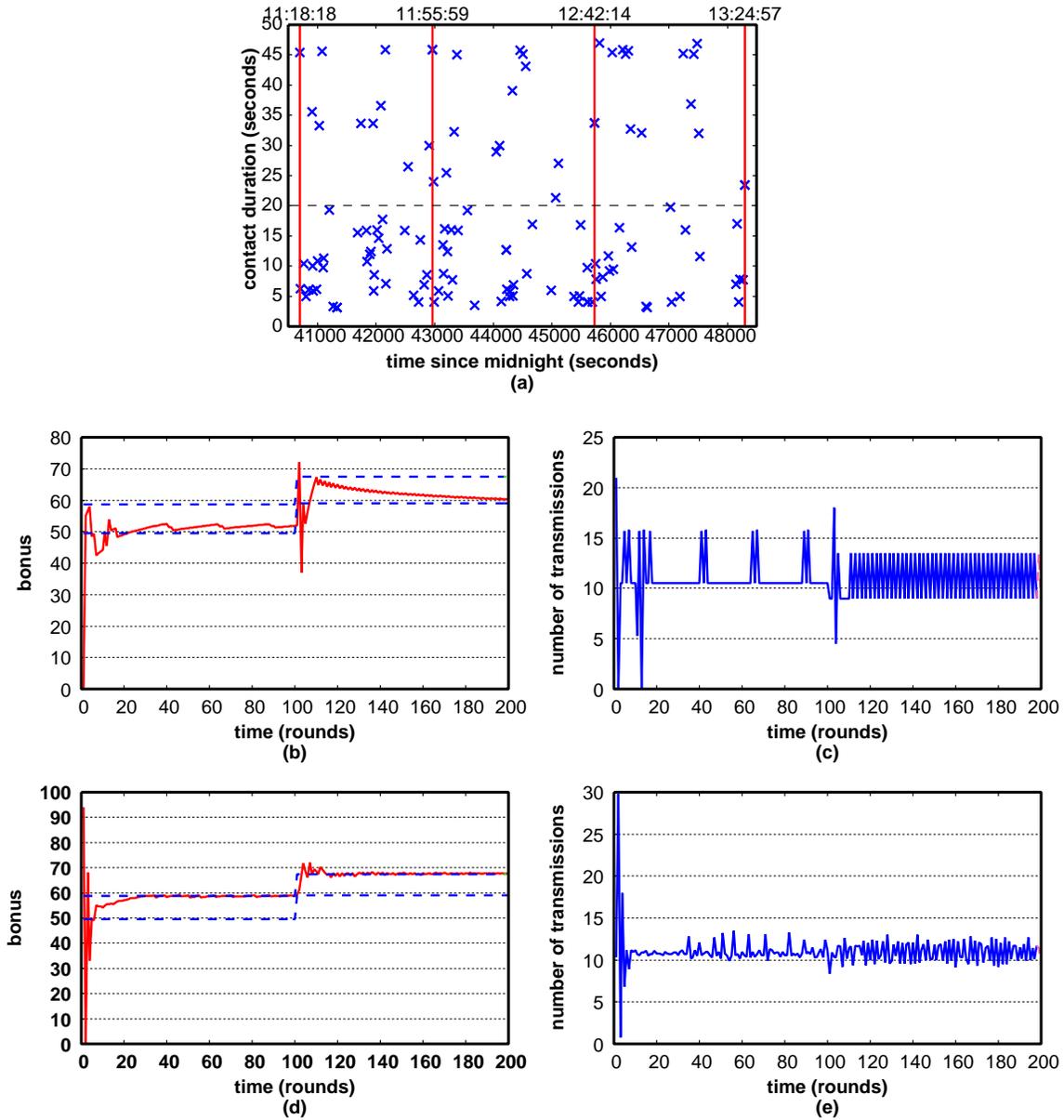}
\caption{(a) Contact opportunities; (b) and (c) show the bonus  and number of transmissions obtained from trace driven simulations; (d) and (e) show the bonus  and number of transmissions obtained from simulations under uniformity and independence assumptions. }
\label{5figsA}
\end{figure*}

\begin{figure}
\center
\includegraphics[scale=0.8]{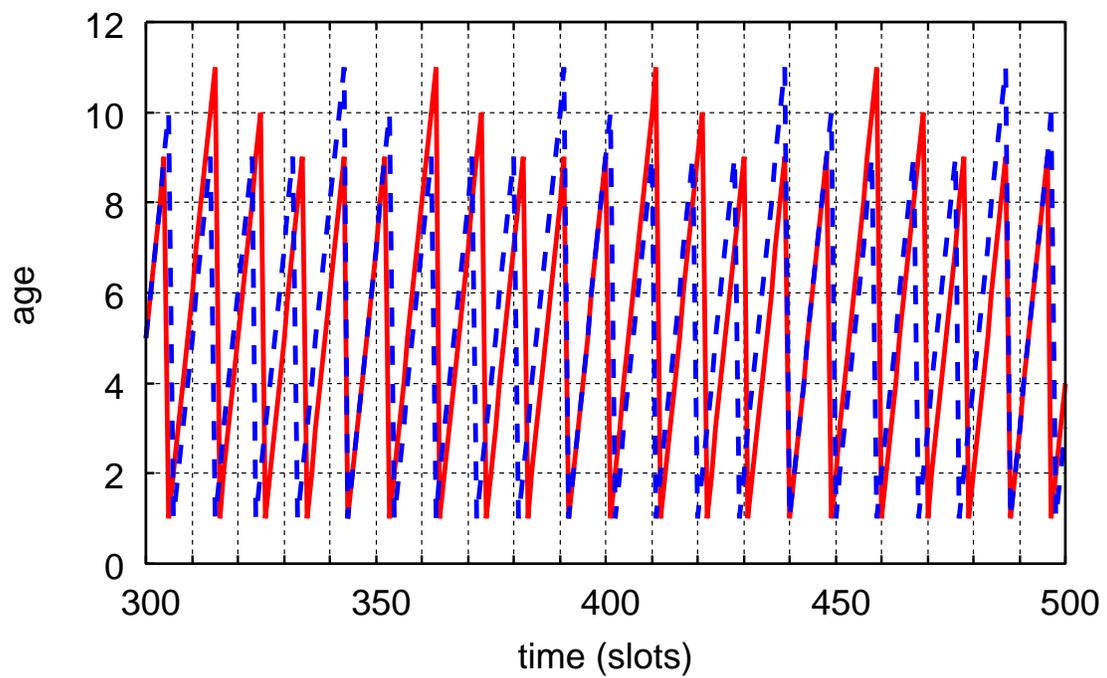}
\caption{Sample path of age as a function of slot (trace driven simulation) See also Figure~\ref{5figsA}(c).}
\label{samplepath}
\end{figure}

\begin{figure*}
\center
\includegraphics[scale=0.4]{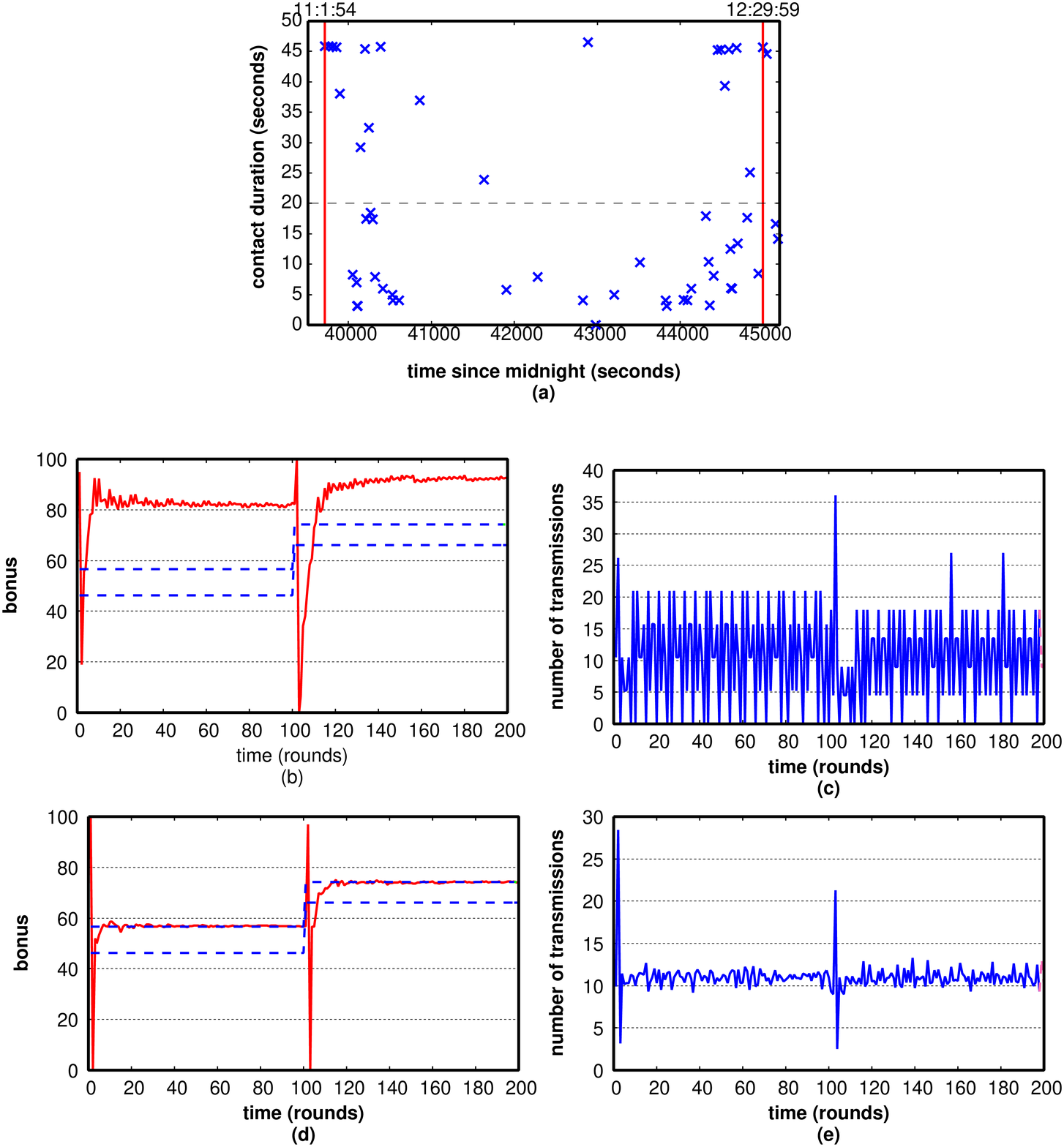}
\caption{(a) Contact opportunities; (b) and (c) show the bonus  and number of transmissions obtained from trace driven simulations; (d) and (e) show the bonus  and number of transmissions obtained from simulations under uniformity and independence assumptions. }
\label{5figsB}
\end{figure*}

\begin{figure*}
\includegraphics[scale=0.45]{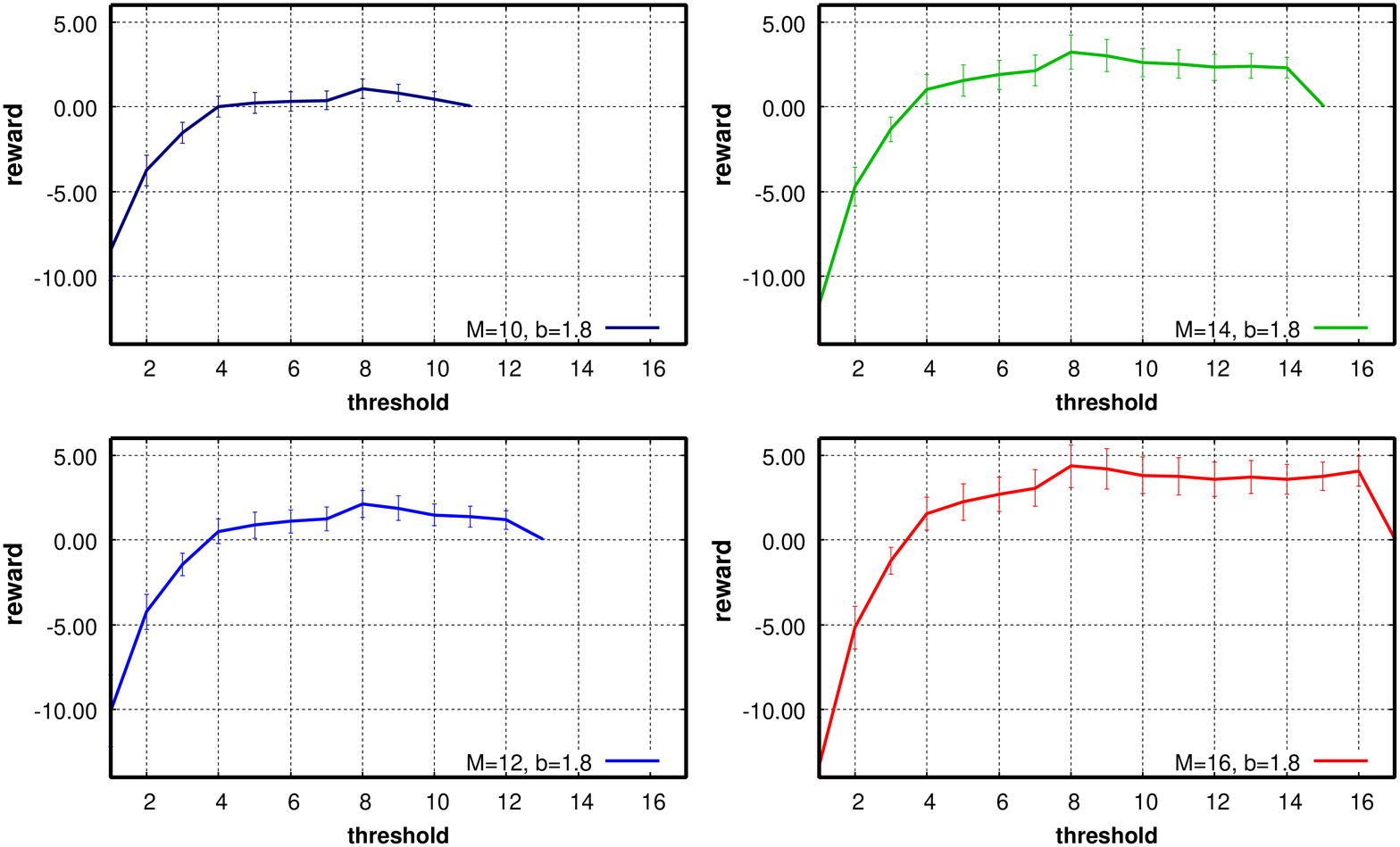}
\caption{Trace-driven reward as a function of the threshold, where threshold is assumed the same at  all bus shifts (Figure~\ref{utility1a}(c) with 95\% confidence intervals)}
\label{4figs95int}
\end{figure*}

\end{appendix}

\end{document}